\documentclass[11pt]{article}

\usepackage[utf8]{inputenc} 
\usepackage{latexsym}
\usepackage{amssymb}
\usepackage{mathrsfs}
\usepackage[table,xcdraw]{xcolor}
\usepackage{amsmath}
\usepackage{amsthm}

\theoremstyle{definition}
\newtheorem{defi}{Definition}[section]
\newtheorem{rmk}{Remark}[section]
\newtheorem{prop}{Proposition}[section]

\newtheorem{thm}{Theorem}[section]
\newtheorem{eg}{Example}[section]
\newtheorem{ax}{Axiom}

\newtheorem{lemma}{Lemma}[section]

\newtheoremstyle{mystyle}
{3pt}
{3pt}
{\upshape}
{}
{}
{:}
{.5em}
{}
\theoremstyle{mystyle}

\usepackage{pythonhighlight}
\usepackage{geometry} 
\geometry{a4paper} 
\usepackage{graphicx} 
\usepackage{booktabs} 
\usepackage{array} 
\usepackage{paralist} 
\usepackage{verbatim} 
\usepackage{subfig} 

\usepackage{fancyhdr} 
\pagestyle{fancy} 
\lhead{}\chead{}\rhead{}
\lfoot{}\cfoot{\thepage}\rfoot{}

\usepackage{sectsty}
\allsectionsfont{\sffamily\mdseries\upshape} 

\usepackage[nottoc,notlof,notlot]{tocbibind} 
\usepackage[titles,subfigure]{tocloft} 


\usepackage[colorlinks,linkcolor=blue]{hyperref}
\usepackage[all]{xy}
\usepackage{cancel}

\title{Representation theory in the construction of free quantum field}
\author{Feng Zixuan}

\begin{document}
\maketitle
\tableofcontents

\clearpage

\begin{abstract}
    This is mainly a lecture note taken by myself following Weinberg's book \cite{weinberg2002quantum}, but also contains some corrections to the abuse of mathematical treatment. This article discusses projective unitary representations of Poincare group on the single particle space, multi-particle space also known as the Fock space, creation and annilation operators, construction of free quantum fields and the general relation between spin of state and spin of field. Both massive and massless cases are considered. CPT is not considered. The first section briefly reviews the basics of representation theory. This article further points out some of the wrong treatment of mathematics in the book \cite{weinberg2002quantum}, and reformulates them, including: Wigner's classification needs to be pass to the universal cover via Bargmann's theorem, there is no projective representation of Poincare group on Fock space in general, the Lorentz transformation of fields need to be formulated with representations of the universal covers, Dirac representation is not a linear representation of the Lorentz group. This article also discusses the physical meaning of the state representation and its relation with Schrodinger equation, compare its difference with state representation, and the reason that equations of relativistic quantum mechanics should be understood as field equations rather than a wave function equations. 
\end{abstract}

\clearpage

\section{Representation theory of the Lorentz group}

We denote by $SO^+(1,3)$ the identity component of the homogenous Lorentz group, and 
$\mathbb{R}^{1,3} \rtimes SO^+(1,3)$ the identity component of the Poincare group. As we do not consider the whole group containing $P$ and $T$, they are abbreviated Lorentz group and Poincare group.

\subsection{$sl_2(\mathbb{C})$}

This well-known result can be found in many textbooks. For example, see \cite{humphreys2012introduction}.

Recall we have a complex basis for $sl_2(\mathbb{C})$:
\begin{equation}
    H = \begin{pmatrix}1&0\\0&-1 \end{pmatrix}
    \quad\quad
    E = \begin{pmatrix}0&1\\0&0 \end{pmatrix}
    \quad\quad
    F = \begin{pmatrix}0&0\\1&0 \end{pmatrix}
\end{equation}
and all the irreducible representations are
\begin{equation}
    \begin{aligned}
    V_\lambda &= span\{v_0, \cdots, v_\lambda \}\\
    Hv_k &= (\lambda-2k)v_k\\
    F v_k &= v_{k+1}\\
    E v_k &= k(\lambda - k+1)v_{k-1}\\
    v_{-1} &= v_{\lambda+1} \equiv 0
    \end{aligned}
\end{equation}
Written in matrix form,
\begin{equation}\label{sl_2}
    H \mapsto 
    \begin{pmatrix}
    \lambda & & &\\
    & \lambda -2 & &\\
    & &  \cdots &\\
    &&& -\lambda
    \end{pmatrix}
    \quad\quad
    F \mapsto
    \begin{pmatrix}
    0 & & &\\
    1 & 0 & &\\
    & 1 &  \cdots &\\
    & & 1 & 0
    \end{pmatrix}
    \quad\quad
    E \mapsto
    \begin{pmatrix}
    0 & i(\lambda -i+1) & &\\
     & 0 & &\\
    &  &  \cdots &\\
    & &  & 0
    \end{pmatrix}
    ,i=1,\cdots,\lambda
\end{equation}

\subsection{Angular momentum}

Angular momentum is a representation of $su(2)$. It relates with the previous example by
\begin{equation}
    span_\mathbb{R} \langle iX,iY,iZ \rangle = su(2) \hookrightarrow su(2)_\mathbb{C} 
    = sl(2, \mathbb{C}) = span_\mathbb{R} \langle X,Y,Z, iX,iY,iZ\rangle
\end{equation}

\begin{rmk}
    What seems unfamiliar is that the basis of $su(2)$ is i times Pauli matrices, instead of themselves. This is the departure of mathematical language and physical language. Here I use the mathematical language. 
\end{rmk}

When passing to angular momentum language, we have
\begin{equation}
    \begin{aligned}
        J_z &= Z/2 = H/2\\
        J_x &= X/2 = (E+F)/2\\
        J_y &= Y/2 = (E-F)/2i\\
        J_+ &= E\\
        J_- &= F
    \end{aligned}
\end{equation}
and $\lambda = 2j$. Rearrange the indices $0\to \lambda$ as $+j\to -j$, and do similitude transformation in order to make $E$ and $F$ symmetric, we can rewrite \eqref{sl_2} as
\begin{equation}\label{angular momentum}
\begin{aligned}
    (J_z)_{\sigma',\sigma} &= \sigma \delta_{\sigma',\sigma}\\
    (J_x \pm iJ_y)_{\sigma',\sigma} &= (J_\pm)_{\sigma',\sigma} = \delta_{\sigma',\sigma\pm 1} \sqrt{(j\mp \sigma)(j\pm \sigma +1)}
\end{aligned}
\end{equation}
These are all the finite dimensional irreducible representations of $su(2)$.

\subsection{Representation of the Lorentz algebra}

We refer to \cite{enwiki:1123050261}. 

The standard procedure for finding finite dimensional representations for $so(1,3)$ is to set
\begin{equation}
\mathbf{A}=\frac{\mathbf{J}+i \mathbf{K}}{2}, \quad \mathbf{B}=\frac{\mathbf{J}-i \mathbf{K}}{2} .
\end{equation}
and thus we can prove
\begin{equation}
[A_i, A_j]=i \varepsilon_{i j k} A_k, \quad[B_i, B_j]=i \varepsilon_{i j k} B_k, \quad[A_i, B_j]=0,
\end{equation}

What we have done in mathematical language is to complexify the Lorentz Lie algebra and discover it to decompose into two $su(2)_\mathbb{C}$:
\begin{equation}
    so(1,3) \hookrightarrow so(1,3)_\mathbb{C} = su(2)_\mathbb{C} \oplus su(2)_\mathbb{C} = sl(2,\mathbb{C}) \oplus sl(2,\mathbb{C})
\end{equation}

\begin{rmk}
    The $A_i$'s and $B_i$'s do not lie in the Lorentz algebra, because they carry an imaginary coefficient. The expression
    \begin{equation}
    so(1,3)  = su(2) \oplus su(2)
    \end{equation}
    is actually wrong!
\end{rmk}

So the representation of the Lorentz algebra is equivalent to two representations of the angular momentum algebra $su(2)$. This is due to the following non-trivial fact. 
\begin{lemma}\footnote{I haven't found any reference to this proposition.}
    Given two finite dimensional Lie algebras $g_1,g_2$, then any finite dimensional irreducible rep of $g_1\oplus g_2$ must come from the tensor product of two f.d irreducible reps of the two Lie algebras, i.e.
    \begin{equation}
        \pi(v_1,v_2) = \pi_1(v_1) \otimes I + I\otimes \pi_2(v_2)
    \end{equation}
    We write $\pi = \pi_1 \boxtimes \pi_2$,
    \footnote{This notation is by myself. }
    differing from $\otimes$ which we use for tensor product of two reps for the same Lie algebra. 
\end{lemma}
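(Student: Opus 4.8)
\emph{Proof strategy.}\quad The statement as quoted needs the ground field to be algebraically closed; over $\mathbb{R}$ it is false (take $g_1=g_2=\mathbb{R}$, the one-dimensional abelian Lie algebra, and the two-dimensional representation of $g_1\oplus g_2=\mathbb{R}^2$ in which the two coordinate generators act by two non-proportional rotation matrices: it is irreducible but not of the claimed form, since one of $\pi_1,\pi_2$ would have to be one-dimensional, forcing one generator to act by a scalar). So I would assume throughout that the base field is $\mathbb{C}$, which is the only case the paper actually uses. The plan is to discard the Lie structure and argue at the level of associative algebras, using the Poincar\'e--Birkhoff--Witt isomorphism $U(g_1\oplus g_2)\cong U(g_1)\otimes_{\mathbb{C}}U(g_2)$, under which $(v_1,0)\mapsto v_1\otimes 1$ and $(0,v_2)\mapsto 1\otimes v_2$. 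It then suffices to prove the module version: if $A,B$ are associative $\mathbb{C}$-algebras and $V$ is a finite-dimensional simple $A\otimes_{\mathbb{C}}B$-module, then $V\cong V_1\otimes_{\mathbb{C}}V_2$ with $V_1$ a simple $A$-module and $V_2$ a simple $B$-module, the factors acting through the two tensor slots; translating back, $(v_1,v_2)=(v_1,0)+(0,v_2)$ then acts as $\pi_1(v_1)\otimes I+I\otimes\pi_2(v_2)$.

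For the module statement, let $\bar A,\bar B\subseteq\operatorname{End}_{\mathbb{C}}(V)$ be the images of $A,B$; these are finite-dimensional and commute. First I would invoke Burnside's theorem (the finite-dimensional case of Jacobson density over an algebraically closed field): the subalgebra generated by $\bar A\cup\bar B$ equals $\bar A\bar B$ because the two commute, and it acts irreducibly, so $\bar A\bar B=\operatorname{End}_{\mathbb{C}}(V)$. Next, decompose $V$ as a $B$-module into isotypic components $V=\bigoplus_k V^{(k)}$ and write $V^{(k)}\cong U_k\otimes W_k$ with $U_k$ the distinct simple $B$-modules occurring and $W_k=\operatorname{Hom}_B(U_k,V)$; by Schur's lemma together with Burnside applied to each $U_k$, the centralizer of $\bar B$ in $\operatorname{End}_{\mathbb{C}}(V)$ is $\bigoplus_k I_{U_k}\otimes\operatorname{End}(W_k)$. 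Since $\bar A$ sits inside this centralizer, each $V^{(k)}$ is $\bar A\bar B$-stable, so simplicity of $V$ forces a single summand: $V\cong U\otimes W$ with $B$ acting as $\operatorname{End}(U)\otimes I$ and $A$ acting as $I\otimes\bar A'$ for some subalgebra $\bar A'\subseteq\operatorname{End}(W)$. Finally $\bar A\bar B=\operatorname{End}(U)\otimes\bar A'=\operatorname{End}(U)\otimes\operatorname{End}(W)$ forces $\bar A'=\operatorname{End}(W)$ by a dimension count, so $W$ is a simple $A$-module; setting $V_2=U$, $V_1=W$ and unwinding through the PBW identification yields the displayed formula.

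The main obstacle is not a single computation but keeping the three invocations of Burnside/Schur straight and being honest about the hypotheses: algebraic closedness is essential (the counterexample above), and finite-dimensionality of $V$ is exactly what upgrades the density theorem to an \emph{equality} with $\operatorname{End}(V)$. I would also add a short remark on the converse actually needed in the sequel, namely that \emph{every} external tensor product $\pi_1\boxtimes\pi_2$ of simple modules is again simple: this is the easy direction and follows from $\operatorname{End}(V_1)\otimes\operatorname{End}(V_2)=\operatorname{End}(V_1\otimes V_2)$ plus Burnside for each factor, and it is what makes the equivalence ``representations of $so(1,3)_{\mathbb{C}}$ $\leftrightarrow$ pairs of $su(2)_{\mathbb{C}}$-representations'' precise. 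Finally, a reference does exist for the footnoted claim --- e.g.\ Bourbaki, \emph{Alg\`ebre}, Ch.~VIII, or Curtis--Reiner, on simple modules over a tensor product of algebras --- and citing it would settle the attribution.
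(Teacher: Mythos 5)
The paper offers no proof of this lemma at all (it is stated with only the footnote that no reference was found), so there is nothing to compare your argument against; you are supplying a proof where the paper has none. Your route --- pass through $U(g_1\oplus g_2)\cong U(g_1)\otimes U(g_2)$ and prove the statement for simple modules over a tensor product of algebras via Burnside/double centralizer --- is the standard and correct one, and your observation that the statement requires an algebraically closed field is a genuine correction to the lemma as stated (the paper asserts it for arbitrary finite-dimensional Lie algebras; your $\mathbb{R}^2$ counterexample is valid, though the precise hypothesis you want is that neither generator acts by a scalar, rather than non-proportionality of the two rotations). The footnoted claim that no reference exists is also wrong: the tensor-product-of-algebras version is in the paper's own reference \cite{etingof2011introduction}, as well as in Bourbaki and Curtis--Reiner.

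There is one real gap in the middle step: you decompose $V$ into $B$-isotypic components, which presupposes that $V$ is semisimple as a $B$-module. For $g_2$ (hence $B=U(g_2)$) not semisimple this is not automatic and must be derived, e.g.\ as follows: by your Burnside step $\bar{A}\bar{B}=\operatorname{End}(V)$; the radical $J=\operatorname{rad}(\bar{B})$ commutes with $\bar{A}$ and satisfies $\bar{B}J\bar{B}\subseteq J$, so the two-sided ideal it generates in $\operatorname{End}(V)$ is contained in $\bar{A}J$ and is therefore nilpotent, hence zero since $\operatorname{End}(V)$ is simple; thus $\bar{B}$ is semisimple and the isotypic decomposition is legitimate. (Alternatively one can restructure to avoid the issue entirely: take a single irreducible $B$-submodule $W\subseteq V$, set $V_1=\operatorname{Hom}_B(W,V)$ with its natural $A$-action, and show the evaluation map $V_1\otimes W\to V$ is a surjective, then bijective, map of $A\otimes B$-modules using Jacobson density.) With that point repaired, your proof is complete, and your closing remark about the converse direction --- that $\pi_1\boxtimes\pi_2$ of irreducibles is irreducible, via $\operatorname{End}(V_1)\otimes\operatorname{End}(V_2)=\operatorname{End}(V_1\otimes V_2)$ --- is exactly what the paper needs but never states for the correspondence with pairs $(A,B)$ of $su(2)_{\mathbb{C}}$-spins to be a bijection.
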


Denote by $A$ and $B$ the spins of the two representations of $su(2)$ or $su(2)_\mathbb{C}$.

By definition we have
\begin{equation}
    \mathbf{J} = \mathbf{A} + \mathbf{B}, \quad\quad \mathbf{K} = -i(\mathbf{A} - \mathbf{B})
\end{equation}
The tensor product of representations of Lie algebra:
\begin{equation}
\begin{aligned}
& \pi_{(A, B)}(J_i)=J_i^{(A)} \otimes 1_{(2 B+1)}+1_{(2 A+1)} \otimes J_i^{(B)} \\
& \pi_{(A, B)}(K_i)=-i(J_i^{(A)} \otimes 1_{(2 B+1)} - 1_{(2 A+1)} \otimes J_i^{(B)}),
\end{aligned}
\end{equation}
where $1_n$ is the $n$-dimensional unit matrix and
\begin{equation}
\mathbf{J}^{(j)}=(J_1^{(j)}, J_2^{(j)}, J_3^{(j)})
\end{equation}
are the $(2 j+1)$-dimensional irreducible representations of $su(2)$.

Recall from the last subsection that in matrix expression,
\begin{equation}
\begin{aligned}
& (J_1^{(j)})_{a' a}=\frac{1}{2}(\sqrt{(j-a)(j+a+1)} \delta_{a', a+1}+\sqrt{(j+a)(j-a+1)} \delta_{a', a-1}) \\
& (J_2^{(j)})_{a' a}=\frac{1}{2 i}(\sqrt{(j-a)(j+a+1)} \delta_{a', a+1}-\sqrt{(j+a)(j-a+1)} \delta_{a', a-1}) \\
& (J_3^{(j)})_{a' a}=a \delta_{a', a}
\end{aligned}
\end{equation}
With $-A \leq a, a' \leq A,-B \leq b, b' \leq B$, the matrix elements of the representation is
\begin{equation}
\begin{gathered}
(\pi_{(A, B)}(J_i))_{a' b', a b}=\delta_{b'b}(J_i^{(A)})_{a' a}+\delta_{a' a}(J_i^{(B)})_{b' b} \\
(\pi_{(A, B)}(K_i))_{a' b' ,a b}=-i(\delta_{b' b}(J_i^{(A)})_{a' a} - \delta_{a' a}(J_i^{(B)})_{b' b})
\end{gathered}
\end{equation}

\begin{lemma}
\footnote{Though it's basic, this is by myself. I haven't find the reference. }
    We have an obvious fact:
    \begin{equation*}
        \pi_A \boxtimes \pi_B |_{span\mathbf{J}} \cong \pi_A \otimes \pi_B
    \end{equation*}
    as representations of $su(2)$. 
\end{lemma}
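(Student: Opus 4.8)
**Proof proposal for Lemma 1.4 (the "obvious fact" $\pi_A \boxtimes \pi_B|_{\mathrm{span}\,\mathbf{J}} \cong \pi_A \otimes \pi_B$).**

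The claim compares two representations of $su(2)$ on the same underlying vector space $V_A \otimes V_B$ (where $\dim V_A = 2A+1$, $\dim V_B = 2B+1$). On the left we take the external tensor product representation $\pi_A \boxtimes \pi_B$ of $su(2) \oplus su(2)$ and restrict it along the diagonal embedding $\Delta : su(2) \hookrightarrow su(2)\oplus su(2)$, $X \mapsto (X,X)$, which is exactly the span of the $\mathbf{J}$-generators given by $\mathbf{J} = \mathbf{A}+\mathbf{B}$; on the right we take the usual tensor product of the two $su(2)$-representations $\pi_A$ and $\pi_B$. The plan is simply to check that these two actions are literally equal (not merely isomorphic) as maps $su(2) \to \mathrm{End}(V_A\otimes V_B)$, so the identity map is the intertwiner.

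First I would recall the definition of the tensor product of two representations of a single Lie algebra $\mathfrak{g}$: it is the pullback of the external product along the diagonal, i.e. $(\pi_1\otimes\pi_2)(X) = \pi_1(X)\otimes I + I\otimes\pi_2(X)$. Next I would observe that, by Lemma 1.2, the external product $\pi_A\boxtimes\pi_B$ of representations of $g_1\oplus g_2$ (here both copies equal to $su(2)$) is by definition $(\pi_A\boxtimes\pi_B)(v_1,v_2) = \pi_A(v_1)\otimes I + I\otimes\pi_B(v_2)$. Then I would restrict along $\Delta$: for $X\in su(2)$, $(\pi_A\boxtimes\pi_B)(\Delta X) = (\pi_A\boxtimes\pi_B)(X,X) = \pi_A(X)\otimes I + I\otimes\pi_B(X) = (\pi_A\otimes\pi_B)(X)$. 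Finally I would verify that the subalgebra $\mathrm{span}\,\mathbf{J}$ inside the complexified Lorentz algebra is precisely the image of this diagonal embedding, which is immediate from $\mathbf{J}=\mathbf{A}+\mathbf{B}$ together with the given commutation relations $[A_i,A_j]=i\varepsilon_{ijk}A_k$, $[B_i,B_j]=i\varepsilon_{ijk}B_k$, $[A_i,B_j]=0$ (so $A_i$ generates the first $su(2)$, $B_i$ the second, and $J_i=A_i+B_i$ generates the diagonal).

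There is essentially no obstacle here — as the author's footnote concedes, the content is bookkeeping. The only point requiring a word of care is the identification of $\mathrm{span}\,\mathbf{J}$ with the diagonal copy of $su(2)$: one should note that $\{A_i\}$ and $\{B_i\}$ span the two ideal summands of $so(1,3)_\mathbb{C} = sl(2,\mathbb{C})\oplus sl(2,\mathbb{C})$, that the map $X\mapsto(X,X)$ is a genuine Lie algebra homomorphism (because $[A_i,B_j]=0$ makes the cross terms vanish, so $[\Delta X,\Delta Y] = \Delta[X,Y]$), and hence the restriction of any $(g_1\oplus g_2)$-representation along $\Delta$ is a bona fide $su(2)$-representation. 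Matching matrix elements, if one wants them explicitly, is just the formula $(\pi_{(A,B)}(J_i))_{a'b',ab} = \delta_{b'b}(J_i^{(A)})_{a'a} + \delta_{a'a}(J_i^{(B)})_{b'b}$ already displayed in the excerpt, which is visibly the matrix of $\pi_A(J_i)\otimes I + I\otimes\pi_B(J_i)$; so the isomorphism is witnessed by the identity matrix under the natural ordering of the basis $\{v_a\otimes v_b\}$.
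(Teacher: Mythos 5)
Your proof is correct and takes essentially the same route as the paper's: both reduce the lemma to the general identity $\iota^*(\pi_1 \boxtimes \pi_2) = \pi_1 \otimes \pi_2$ for the diagonal embedding $\iota : g \hookrightarrow g \oplus g$, then identify $\mathrm{span}\,\mathbf{J}$ with the diagonal copy of $su(2)$ via $\mathbf{J} = \mathbf{A} + \mathbf{B}$. You spell out the bookkeeping (the matrix-element check, the homomorphism property of $\Delta$) more explicitly than the paper, but the key idea is identical.
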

\begin{proof}
    This comes from the following general relation between $\boxtimes$ and $\otimes$. Let $\pi_1,\pi_2$ be two reps of a same Lie algebra $g$. Then under diagonal action $\iota:g\hookrightarrow g\oplus g$, we have $\iota^*(\pi_1 \boxtimes \pi_2) = \pi_1 \otimes \pi_2$.
\end{proof}

\subsection{Lorentz Lie group}

In physics sometimes we need projective representations, instead of usual ones. We refer to \cite{enwiki:1123050261}. 

\begin{lemma}(Corollary of Lie's second theorem)\\
\footnote{See \cite{warner1983foundations}. }
    If $\tilde{G}$ is a connected, simply connected Lie group and $g$ its Lie algebra. Then the finite dimensional rep of $g$ 1-1 corresponds to f.d. rep of $\tilde{G}$. 
\end{lemma}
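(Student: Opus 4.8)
The plan is to exhibit, for each finite-dimensional vector space $V$, a bijection between Lie algebra homomorphisms $g \to \mathfrak{gl}(V)$ and Lie group homomorphisms $\tilde G \to GL(V)$, realized by the mutually inverse operations of differentiation and integration, and then to check this bijection respects intertwiners so that it descends to equivalence (and hence irreducibility) classes.

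First I would treat the easy direction, differentiation. Given a representation $\Pi : \tilde G \to GL(V)$ (continuous, hence smooth), set $\pi := d\Pi_e$; functoriality of the differential makes $\pi$ a Lie algebra homomorphism, and $\Pi(\exp X) = e^{\pi(X)}$. If $\Pi_1,\Pi_2$ have the same differential at $e$, they agree on $\exp(g)$, hence on the subgroup it generates, which is all of $\tilde G$ by connectedness; so $\Pi \mapsto d\Pi_e$ is injective on homomorphisms. The substantive direction is integration. Given $\pi : g \to \mathfrak{gl}(V)$, I would form the graph $\mathfrak k := \{(X,\pi(X)) : X \in g\}$, a Lie subalgebra of $g \oplus \mathfrak{gl}(V) = \mathrm{Lie}(\tilde G \times GL(V))$ isomorphic to $g$ under the first projection. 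By the integral-subgroup (subalgebra–subgroup) theorem there is a unique connected Lie subgroup $K \le \tilde G \times GL(V)$ with $\mathrm{Lie}(K) = \mathfrak k$; let $p : K \to \tilde G$ and $q : K \to GL(V)$ be the two projections. Then $dp_e : \mathfrak k \xrightarrow{\sim} g$, so $p$ is a submersion with discrete kernel whose image is an open, hence all of, $\tilde G$; a surjective Lie group homomorphism that is a local diffeomorphism is a covering map. Here the simple-connectedness of $\tilde G$ enters decisively: the covering $p$ must be trivial, i.e. an isomorphism, and I can define $\Pi := q \circ p^{-1}$, which satisfies $d\Pi_e(X) = q_*(X,\pi(X)) = \pi(X)$. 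This gives surjectivity, and with the previous paragraph the bijection on homomorphisms.

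Finally I would pass to equivalence classes: if $T : V \to W$ intertwines $\pi_V,\pi_W$, then $T$ intertwines the integrated $\Pi_V,\Pi_W$ near $e$, hence everywhere by connectedness, and conversely the differential of a group intertwiner is a Lie algebra intertwiner; since the correspondence also preserves direct sums, it matches irreducible reps with irreducible reps. The main obstacle is the integration step: producing $\Pi$ from $\pi$ is exactly the content of Lie's second theorem, and the one delicate point is verifying that $p : K \to \tilde G$ is genuinely a covering map, not merely a local diffeomorphism onto an open subgroup — once that is secured, simple connectedness forces $p$ to be invertible and the rest is bookkeeping. (Alternatively, one may invoke Lie's second theorem in its homomorphism-lifting form directly, which packages precisely this covering-space argument.)
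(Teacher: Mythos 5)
The paper does not prove this lemma but only cites Warner, and the standard proof there is exactly the one you give: differentiate for injectivity, and for surjectivity integrate the graph subalgebra $\mathfrak{k}\subset g\oplus\mathfrak{gl}(V)$ to a connected subgroup $K$, check that the projection $K\to\tilde G$ is a covering with discrete kernel, and invoke simple connectedness to invert it. Your argument is correct, including the careful handling of the one delicate point (that the projection is genuinely a covering map and not just a local diffeomorphism onto an open subgroup), so there is nothing to add.
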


This is the power of simply connectivity. In the general case it is not true, but we can pass to its universal covering space. 

\begin{lemma}
    Suppose $G$ is a connected Lie group and $\pi: \tilde{G} \to G$ its universal cover. Then every f.d. irr rep of $\tilde{G}$ induces a f.d. irr projective rep of $G$. 
\end{lemma}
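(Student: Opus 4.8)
\section*{Proof proposal}

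The plan is to exploit the fact that the kernel of a universal covering homomorphism is a \emph{discrete central} subgroup, and then to use Schur's lemma to push the given linear representation down to the projective linear group of $V$. Throughout I work over $\mathbb{C}$, as the rest of the paper does.

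First I would recall the structure of $\pi : \tilde G \to G$. Since $G$ is connected, $\pi$ is surjective, and $Z := \ker \pi$ is a discrete normal subgroup of $\tilde G$, with $G \cong \tilde G / Z$ as Lie groups. A discrete normal subgroup of a connected Lie group is automatically central: for fixed $z \in Z$ the continuous map $\tilde G \to Z$, $g \mapsto g z g^{-1}$, has connected domain and discrete target, hence is constant, equal to its value $z$ at the identity. So $Z$ is central in $\tilde G$, and moreover $\pi$, being a covering map, is open and surjective, i.e.\ a quotient map.

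Next, let $\rho : \tilde G \to GL(V)$ be a finite-dimensional irreducible representation. For each $z \in Z$ the operator $\rho(z)$ commutes with $\rho(\tilde g)$ for all $\tilde g \in \tilde G$, because $z$ is central; by Schur's lemma $\operatorname{End}_{\tilde G}(V) = \mathbb{C}\,\mathrm{id}_V$, so $\rho(z) = \chi(z)\,\mathrm{id}_V$ for a character $\chi : Z \to \mathbb{C}^\times$. Hence the composite $\tilde G \xrightarrow{\ \rho\ } GL(V) \xrightarrow{\ q\ } PGL(V)$, where $q$ is the canonical projection with kernel exactly the scalars, annihilates $Z$. Because $\pi$ is a quotient map, the universal property of quotients produces a unique continuous (indeed smooth) group homomorphism $\bar\rho : G \to PGL(V)$ with $\bar\rho \circ \pi = q \circ \rho$; this $\bar\rho$ is by definition a finite-dimensional projective representation of $G$. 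For irreducibility: a linear subspace $W \subseteq V$ is invariant under the projective action of $g \in G$ iff $AW = W$ for one (equivalently any) lift $A \in GL(V)$ of $\bar\rho(g)$, and taking $A = \rho(\tilde g)$ with $\pi(\tilde g) = g$ (using surjectivity of $\pi$) shows $W$ is $\bar\rho(G)$-invariant iff it is $\rho(\tilde G)$-invariant; thus irreducibility of $\rho$ transfers to $\bar\rho$. As an optional remark, choosing a set-theoretic section $s : G \to \tilde G$ of $\pi$ and setting $U(g) := \rho(s(g))$ recovers the familiar cocycle picture $U(g_1)U(g_2) = c(g_1,g_2)\,U(g_1 g_2)$ with $c(g_1,g_2) = \chi\big(s(g_1)s(g_2)s(g_1g_2)^{-1}\big) \in \mathbb{C}^\times$.

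The one genuinely nontrivial ingredient, and hence the main obstacle to watch, is the application of Schur's lemma to conclude that $\rho|_Z$ is scalar-valued: this is exactly where finite-dimensionality over $\mathbb{C}$ together with irreducibility are indispensable, and without it the representation need not descend to $PGL(V)$ at all. The centrality of $Z$ is a standard fact but should be stated carefully, and the continuity/smoothness of $\bar\rho$ is a soft point that follows formally once one knows $\pi$ is a quotient map.
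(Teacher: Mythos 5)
Your proposal is correct and follows essentially the same route as the paper's proof: the kernel of the covering map is a discrete normal, hence central, subgroup; Schur's lemma forces it to act by scalars; and the composite with $GL(V)\to PGL(V)$ then descends to $G$. You additionally verify irreducibility and continuity of the descended projective representation, points the paper's proof leaves implicit.
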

\begin{proof}
    First suppose that $Ker\pi$ lies in the center of $\tilde{G}$. Given a f.d. irr rep $\rho: \tilde{G}\to GL(V)$, we define a f.d. irr projective rep of $G$ as follows. For any $g\in G$, choose a preimage $\tilde{g}\in\tilde{G}$, and set $\rho*(g)=r\rho(\tilde{g})$, where $r:GL(V)\to PGL(V)$. This is well defined since for two preimages $\tilde{g}_1, \tilde{g}_2$, we have $\tilde{g}_1 \tilde{g}_2^{-1}$ lies in the center of $\tilde{G}$ by assumption. So by Schur's lemma, we have $\rho(\tilde{g}_1)\rho(\tilde{g}_2^{-1})$ is a scalar. So $r\rho(\tilde{g}_1)=r\rho(\tilde{g}_2)$. 
    $$
    \xymatrix{
    \tilde{G} \ar[d]_\pi \ar[r]^\rho  &  GL(V)\ar[d]_r \\
    G  \ar@{-->}[r] &  PGL(V)
    }
    $$
    Notice that the descending can be an ordinary rep of $G$ iff $Ker\pi$ acts on $V$ trivially, and a projective rep of $G$ iff $Ker\pi$ acts as scalars. 
    
    Now we can show that it is always true that $Ker\pi$ lies in the center of $\tilde{G}$. Clearly $N=Ker\pi$ is a discrete normal subgroup. Suppose $n\in N, k\in\tilde{G}$. By connectivity, find a path $k(t)$ in $\tilde{G}$ going from $e$ to $k$. By normality, $k(t)nk(t)^{-1}$ is a path lying entirely in $N$ from $n$ to $knk^{-1}$. By discreteness, we are done. 
\end{proof}

The converse is also true, when restricted to finite dimensional case. 

Sadly, the converse is not true in infinite dimensional case: projective rep of $G$ not necessarily induce an ordinary rep of $\tilde{G}$. Bargmann's theorem gives a criterion under which every irr projective rep of $G$ arises in this way. 

\begin{lemma}(Bargmann's theorem)\label{Bargmann}
    \footnote{See \cite{poon2022projective}. }\\
    If the two-dimensional Lie algebra cohomology $H^2(g,\mathbb{R})$ is trivial, then every projective unitary rep of $G$ arises to an ordinary unitary rep of its universal cover. 
\end{lemma}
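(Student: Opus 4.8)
The plan is to reduce the statement to the cohomological classification of central extensions, carried out at the level of the universal cover. Suppose $U : G \to PU(\mathcal H)$ is a projective unitary representation. Pulling back along the covering map $\pi : \tilde G \to G$ we obtain a projective unitary representation $\tilde U = U \circ \pi$ of $\tilde G$; since $\mathrm{Ker}\,\pi$ is central (as shown in the previous lemma), it suffices to produce an honest unitary representation of $\tilde G$ lifting $\tilde U$, because then we may, if desired, check on $\mathrm{Ker}\,\pi$ whether it descends to $G$ — but the statement only asks for the lift to $\tilde G$. So from now on I work with $\tilde G$, which is connected and simply connected, and I rename it $G$ for brevity.

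First I would recall the standard dictionary: a projective unitary representation $U$ of $G$ with a choice of measurable (or continuous, after the usual regularization à la Bargmann) lift $V(g) \in U(\mathcal H)$ produces a $U(1)$-valued $2$-cocycle $\omega(g,h) = V(g)V(h)V(gh)^{-1}$, and the obstruction to rechoosing the phases so that $\omega \equiv 1$ is precisely the class $[\omega] \in H^2(G, U(1))$ (continuous group cohomology). Thus the theorem amounts to showing $H^2(G, U(1))$ is trivial under the hypothesis $H^2(\mathfrak g, \mathbb R) = 0$, for $G$ connected and simply connected. The key steps are: (1) pass from the $U(1)$-cocycle to an $\mathbb R$-valued cocycle by taking a local logarithm near the identity — here one uses that $G$ is simply connected to globalize, or alternatively one works with the central extension $\tilde G_\omega$ of $G$ by $U(1)$ determined by $\omega$ and replaces $U(1)$ by its universal cover $\mathbb R$, obtaining a central extension $E$ of $G$ by $\mathbb R$; (2) differentiate this central extension of Lie groups to get a central extension of Lie algebras $0 \to \mathbb R \to \mathfrak e \to \mathfrak g \to 0$, whose isomorphism class lives in $H^2(\mathfrak g, \mathbb R)$; (3) invoke the hypothesis $H^2(\mathfrak g, \mathbb R) = 0$ to conclude this Lie algebra extension splits, i.e. there is a Lie algebra section $\mathfrak g \to \mathfrak e$; (4) integrate the section: because $G$ is simply connected, Lie's second theorem (the corollary quoted earlier in this excerpt) lets us exponentiate the Lie algebra homomorphism $\mathfrak g \to \mathfrak e$ to a Lie group homomorphism $G \to E$ splitting the extension, which exactly trivializes $[\omega]$; (5) translate back: a trivial class means the phases can be reabsorbed, so $V$ can be rechosen to be a genuine homomorphism $G \to U(\mathcal H)$, the desired ordinary unitary representation.

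I expect the main obstacle to be the analytic regularity issues that are invisible in the algebraic sketch: showing that a projective \emph{unitary} representation admits a lift $V(\cdot)$ that is sufficiently regular (Borel measurable suffices, and then continuity near the identity can be arranged) for the cocycle $\omega$ to define a genuine \emph{topological} group extension and for step (2)'s differentiation to make sense. This is where Bargmann's original argument does real work — constructing a local continuous lift, verifying the local cocycle is smooth, and checking that the central extension $E$ is a Lie group. The purely cohomological core (steps 3 and 4) is then short, relying on $H^2(\mathfrak g,\mathbb R)=0$ and simple connectivity. I would therefore present steps (1)--(2) carefully, cite the measurability-to-continuity reduction (e.g. from \cite{poon2022projective} or Bargmann's paper), and then run the Lie-algebra-cohomology argument in full.
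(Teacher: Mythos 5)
The paper does not prove this lemma at all — it is quoted as a black box with a citation to \cite{poon2022projective} — so there is no in-paper argument to compare against. Your sketch is the standard Bargmann argument (choose a measurable local lift, extract the $2$-cocycle, build the central extension by $\mathbb{R}$, differentiate to a Lie algebra extension classified by $H^2(\mathfrak{g},\mathbb{R})$, split it by hypothesis, and integrate the splitting using simple connectivity of $\tilde G$), and the outline is correct, including the initial reduction of pulling the projective representation back to $\tilde G$ where the kernel is central. You are also right that the genuine content lies in the regularity step — upgrading a Borel lift to one that is continuous, indeed smooth, near the identity so that the central extension is a Lie group and the differentiation in your step (2) is legitimate — and a complete write-up would have to reproduce that part of Bargmann's analysis rather than only the cohomological core.
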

\begin{eg}
    The result does apply to semisimple groups (e.g., $SO(3)$) by Whitehead's lemma, and Lorentz group and the Poincare group. This is important for Wigner's classification of the projective unitary representations of the Poincare group. This is the content of section 2.1 in this article. 
\end{eg}

Let us find the universal cover of the Lorentz group and Poincare group. 

\begin{lemma}
    The universal cover of $SO^+(1,3)$ is $SL(2,\mathbb{C})$. The kernel is $\{\pm I\}$. 
\end{lemma}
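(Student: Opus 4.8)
The plan is to exhibit an explicit surjective Lie group homomorphism $\phi: SL(2,\mathbb{C}) \to SO^+(1,3)$ with kernel $\{\pm I\}$, and then argue that $SL(2,\mathbb{C})$ is simply connected, so that $\phi$ is the universal covering map. First I would identify $\mathbb{R}^{1,3}$ with the space $H_2$ of $2\times 2$ Hermitian matrices via $x=(x^0,x^1,x^2,x^3)\mapsto X = x^0 I + x^1\sigma_1 + x^2\sigma_2 + x^3\sigma_3$, noting that $\det X = (x^0)^2 - (x^1)^2 - (x^2)^2 - (x^3)^2$ is exactly the Minkowski quadratic form. For $A\in SL(2,\mathbb{C})$ the map $X\mapsto AXA^\dagger$ preserves Hermiticity and, since $\det A = 1$, preserves the determinant; hence it defines an element $\phi(A)\in O(1,3)$. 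A connectedness argument (since $SL(2,\mathbb{C})$ is connected and $\phi$ is continuous, the image lies in the identity component) shows $\phi(A)\in SO^+(1,3)$, and $\phi$ is clearly a homomorphism.

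Next I would check surjectivity onto $SO^+(1,3)$. The cleanest route is to show the image contains enough elements to generate $SO^+(1,3)$: taking $A = \mathrm{diag}(e^{\theta/2}, e^{-\theta/2})$ with $\theta$ real gives a boost, taking $A \in SU(2)$ gives (via the standard double cover $SU(2)\to SO(3)$) all spatial rotations, and boosts together with rotations generate $SO^+(1,3)$. Alternatively one can argue at the Lie algebra level: $d\phi$ maps $sl(2,\mathbb{C})$ onto $so(1,3)$ because both are six-dimensional real Lie algebras and $d\phi$ is injective (its kernel would be a proper ideal, but one checks directly that no nonzero $X\in sl(2,\mathbb{C})$ gives $XZ + ZX^\dagger = 0$ for all Hermitian $Z$), so $d\phi$ is an isomorphism, whence $\phi$ is a local diffeomorphism and its image is open; being also closed in the connected group $SO^+(1,3)$, it is everything.

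For the kernel: $\phi(A) = I$ means $AXA^\dagger = X$ for all Hermitian $X$; taking $X = I$ gives $AA^\dagger = I$, so $A$ is unitary, and then $AXA^{-1} = X$ for all Hermitian $X$, hence for all matrices by linearity, so $A$ is central in $GL(2,\mathbb{C})$, i.e. $A = \lambda I$; combined with $\det A = 1$ this forces $\lambda^2 = 1$, so $A = \pm I$. Finally I would invoke simple connectivity of $SL(2,\mathbb{C})$: this follows from the polar/Cartan decomposition $SL(2,\mathbb{C}) \cong SU(2)\times \mathbb{R}^3$ (as topological spaces, via $A = U e^{H}$ with $U\in SU(2)$ and $H$ traceless Hermitian), together with $SU(2)\cong S^3$ being simply connected. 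Since $\phi$ is then a covering map (a surjective local diffeomorphism of Lie groups with discrete kernel) from a simply connected space, it is the universal cover.

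The main obstacle is really just organizing the surjectivity argument cleanly; the determinant-preserving computation, the kernel computation, and even simple connectivity are all routine once the Hermitian-matrix model is set up, but one must be slightly careful that the image lands in the identity component $SO^+(1,3)$ rather than merely in $SO(1,3)$ — this is where connectedness of $SL(2,\mathbb{C})$ (which in turn needs the polar decomposition) does real work, and it is cleanest to establish the topological structure of $SL(2,\mathbb{C})$ early so it can be reused both for this point and for the final simple-connectivity conclusion.
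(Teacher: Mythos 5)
Your proposal is correct and follows essentially the same route as the paper: the Hermitian-matrix model of $\mathbb{R}^{1,3}$, determinant preservation to land in the Lorentz group, connectedness of $SL(2,\mathbb{C})$ to land in $SO^+(1,3)$, the $X=I$ trick for the kernel, the dimension-count/open-image argument for surjectivity, and polar decomposition for simple connectivity. The only cosmetic difference is your use of $X\mapsto AXA^\dagger$ versus the paper's $X\mapsto A^\dagger XA$, which does not affect any step.
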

\begin{proof}
    Let $SL(2,\mathbb{C})$ act on the set of all Hermitian $2\times 2$ matrices $h$ by
    \begin{equation*}
        P(A): h\to h, X\mapsto A^\dagger XA
    \end{equation*}
    Thus $P:SL(2,\mathbb{C}) \to GL(h)$ is a group homo. The kernel is $\pm I$ as taking $X=I$ in $X=A^\dagger XA$ means $A^\dagger = A^{-1}$. Thus $AX=XA$ for all $X\in h$, so $A=\pm I$ since $det A=1$. 

    We then identify $h$ with $\mathbb{R}^{1,3}$ by 
    \begin{equation*}
        X = 
        \begin{pmatrix}
            t+z & x+iy\\
            x-iy & t-z
        \end{pmatrix}
        \mapsto (t,x,y,z)
    \end{equation*}
    Thus $P$ becomes $p:SL(2,\mathbb{C}) \to GL(\mathbb{R}^{1,3})$. 

    Step 1. The image lies in the Lorentz group. 
    This is because the identification takes $det$ to $-||\cdot||^2$, and $SL(2,\mathbb{C})$ preserves determinant. 

    Step 2.  The mapping is smooth. So the image lies in $SO^+(1,3)$ as $SL(2,\mathbb{C})$ is connected. 

    Step 3. The kernel is $\pm I$. We have seen that. 

    Step 4. $SO^+(1,3)$ and $SL(2,\mathbb{C})$ both have dimension 6. So the image is an open subgroup in $SO^+(1,3)$, and thus they are equal. 

    Step 5. $SL(2,\mathbb{C})$ is simply connected. This is due to polar decomposition. As proved in the next lemma. 
\end{proof}

\begin{lemma}
    $SL(2,\mathbb{C})$ is simply connected. 
\end{lemma}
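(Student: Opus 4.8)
The plan is to use the polar decomposition of $SL(2,\mathbb{C})$ to exhibit it as topologically a product of a simply connected compact factor and a contractible Euclidean factor. Concretely, every $A \in SL(2,\mathbb{C})$ can be written uniquely as $A = UP$, where $U \in SU(2)$ is unitary and $P$ is positive-definite Hermitian with $\det P = 1$; one obtains $P = (A^\dagger A)^{1/2}$ (well-defined by the spectral theorem since $A^\dagger A$ is positive-definite) and then $U = AP^{-1}$, which one checks is unitary of determinant one. Uniqueness and the fact that $P \mapsto P^2$ is a homeomorphism on positive-definite Hermitian matrices show that the map $SU(2) \times \{\text{positive Hermitian, }\det = 1\} \to SL(2,\mathbb{C})$, $(U,P)\mapsto UP$, is a homeomorphism.

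Next I would identify each factor topologically. The space of positive-definite Hermitian $2\times 2$ matrices of determinant $1$ is parametrized by $\begin{pmatrix} a & b+ic \\ b-ic & d\end{pmatrix}$ with $a,d>0$, $ad - b^2 - c^2 = 1$; solving for $d$ shows this is the graph of a smooth function over the open region $\{(a,b,c): a>0,\ a d = 1 + b^2 + c^2\}$, hence homeomorphic to $\mathbb{R}^3$, in particular contractible and simply connected. For the compact factor I would recall that $SU(2)$ is diffeomorphic to the $3$-sphere $S^3$ via $\begin{pmatrix} \alpha & \beta \\ -\bar\beta & \bar\alpha\end{pmatrix} \mapsto (\alpha,\beta) \in \mathbb{C}^2$ with $|\alpha|^2 + |\beta|^2 = 1$, and $S^3$ is simply connected.

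Finally I would conclude: $SL(2,\mathbb{C})$ is homeomorphic to $S^3 \times \mathbb{R}^3$, and since $\pi_1$ of a product is the product of the $\pi_1$'s, $\pi_1(SL(2,\mathbb{C})) \cong \pi_1(S^3) \times \pi_1(\mathbb{R}^3) = 0$. Connectedness is also immediate from this description (or from $SL(2,\mathbb{C})$ being a connected Lie group, being the exponential image / generated by exponentials of its connected Lie algebra).

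The main obstacle is establishing the polar decomposition rigorously as a \emph{homeomorphism} rather than merely a bijection: one must verify that the square-root map on positive-definite Hermitian matrices is continuous (using the spectral theorem and, e.g., a functional-calculus or explicit-formula argument) and that its inverse $P \mapsto P^2$ is a proper continuous bijection, so that the splitting $A \mapsto (AP^{-1}, P)$ is continuous in both directions. The $SU(2)\cong S^3$ identification and the contractibility of the Hermitian slice are routine once the decomposition is in hand.
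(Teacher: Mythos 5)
Your proof is correct and follows essentially the same route as the paper: polar decomposition of $SL(2,\mathbb{C})$ into $SU(2)\cong S^3$ times the contractible positive-definite Hermitian determinant-one factor (which the paper writes as $e^h$ with $h$ traceless Hermitian, i.e.\ $\mathbb{R}^3$, and you parametrize directly as a graph over a half-space), then $\pi_1$ of the product vanishes. Your added care about the square-root map being continuous, so that the decomposition is a genuine homeomorphism, is a point the paper glosses over, but the argument is the same.
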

\begin{proof}
    By the polar decomposition theorem, any matrix $\lambda \in \mathrm{SL}(2, \mathbb{C})$ may be uniquely expressed as ${ }^{[71]}$
    \begin{equation}
    \lambda=u e^h
    \end{equation}
    where $u$ is unitary with determinant one, hence in $\mathrm{SU}(2)$, and $h$ is Hermitian with trace zero. The trace and determinant conditions imply: 
    \begin{equation}
    \begin{aligned}
    h&=\begin{pmatrix}
    c & a-i b \\
    a+i b & -c
    \end{pmatrix}, \quad\quad (a, b, c) \in \mathbb{R}^3 \\
    u&=\begin{pmatrix}
    d+i e & f+i g \\
    -f+i g & d-i e
    \end{pmatrix} \quad\quad (d, e, f, g) \in \mathbb{R}^4 \text { subject to } d^2+e^2+f^2+g^2=1
    \end{aligned}
    \end{equation}
    The manifestly continuous one-to-one map is a homeomorphism with continuous inverse given by (the locus of $u$ is identified with $\mathbb{S}^3 \subset \mathbb{R}^4$ )
    \begin{equation}
    \{\begin{array}{l}
    \mathbb{R}^3 \times \mathbb{S}^3 \to \mathrm{SL}(2, \mathbb{C}) \\
    (r, s) \mapsto u(s) e^{h(r)}
    \end{array}.
    \end{equation}
    explicitly exhibiting that $\mathrm{SL}(2, \mathbb{C})$ is simply connected.
\end{proof}

\begin{lemma}
        The universal cover of the Poincare group $\mathbb{R}^{1,3}\rtimes SO^+(1,3)$ is $\mathbb{R}^{1,3}\rtimes SL(2,\mathbb{C})$. The action of $SL(2,\mathbb{C})$ on $\mathbb{R}^{1,3}$ is $\Lambda,a\mapsto \lambda(\Lambda)a$. 
\end{lemma}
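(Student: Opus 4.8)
The plan is to exhibit an explicit covering homomorphism and then verify the three defining properties of a universal cover: that it is a Lie group homomorphism, that it is a covering map, and that its source is simply connected. Write $\lambda:SL(2,\mathbb{C})\to SO^+(1,3)$ for the covering constructed in the previous lemma, with $\ker\lambda=\{\pm I\}$. Since $SL(2,\mathbb{C})$ acts on $\mathbb{R}^{1,3}$ by $A\cdot a:=\lambda(A)a$, the semidirect product $\mathbb{R}^{1,3}\rtimes SL(2,\mathbb{C})$ is a genuine Lie group, and I would take as candidate map
\[
    F:\ \mathbb{R}^{1,3}\rtimes SL(2,\mathbb{C})\ \longrightarrow\ \mathbb{R}^{1,3}\rtimes SO^+(1,3),\qquad (a,A)\mapsto (a,\lambda(A)).
\]

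First I would check that $F$ is a group homomorphism. Using the semidirect-product law $(a_1,A_1)(a_2,A_2)=(a_1+\lambda(A_1)a_2,\ A_1A_2)$ on the source and the analogous law on the target, this is a one-line computation that reduces precisely to $\lambda$ being a homomorphism intertwining the two actions of the Lorentz group on $\mathbb{R}^{1,3}$, which holds by construction. Smoothness of $F$ is clear. Next, $F$ is a covering map: its kernel is $\{0\}\times\ker\lambda=\{(0,I),(0,-I)\}$, which is discrete (and automatically central by the earlier lemma that discrete normal subgroups of connected Lie groups are central, or directly since $\lambda(-I)=\mathrm{id}$); $F$ is surjective because $\lambda$ is; and its differential at the identity is an isomorphism because $d\lambda$ is (both Poincaré groups have dimension $4+6=10$). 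A surjective homomorphism of Lie groups with discrete kernel onto a connected group is a covering, so $F$ is one, with fiber of order $2$.

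Finally I would argue that the source is simply connected. The underlying smooth manifold of $\mathbb{R}^{1,3}\rtimes SL(2,\mathbb{C})$ is just the product manifold $\mathbb{R}^{1,3}\times SL(2,\mathbb{C})$; since $\mathbb{R}^{1,3}$ is contractible and $SL(2,\mathbb{C})$ is simply connected by the previous lemma, the product is simply connected. Hence $F$ realizes $\mathbb{R}^{1,3}\rtimes SL(2,\mathbb{C})$ as the universal cover, with the action of $SL(2,\mathbb{C})$ on $\mathbb{R}^{1,3}$ being $\Lambda,a\mapsto\lambda(\Lambda)a$ by construction.

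The main obstacle is really just bookkeeping: one must keep the two semidirect-product structures straight so that $F$ genuinely respects the group law, and one must correctly invoke ``surjective homomorphism with discrete kernel $\Rightarrow$ covering'' in the presence of connectedness of the base. It may be cleanest to first isolate a general lemma: if $\widetilde H\to H$ is the universal cover of a connected Lie group $H$ and $H$ acts smoothly on a simply connected Lie group $N$, then $N\rtimes\widetilde H\to N\rtimes H$ (with $\widetilde H$ acting on $N$ through $H$) is the universal cover; the present statement is then the case $N=\mathbb{R}^{1,3}$, $H=SO^+(1,3)$, $\widetilde H=SL(2,\mathbb{C})$.
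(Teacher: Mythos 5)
Your proof is correct. The paper actually states this lemma without any proof at all, so there is nothing to compare against; your argument is the natural one the paper implicitly relies on, assembling the two preceding lemmas (that $\lambda:SL(2,\mathbb{C})\to SO^+(1,3)$ is a double cover with kernel $\{\pm I\}$, and that $SL(2,\mathbb{C})$ is simply connected) into the map $(a,A)\mapsto(a,\lambda(A))$, checking the semidirect-product law, and using that the underlying manifold $\mathbb{R}^{1,3}\times SL(2,\mathbb{C})$ is simply connected. The general lemma you isolate at the end ($N\rtimes\widetilde H\to N\rtimes H$ is the universal cover when $N$ is simply connected) is a clean way to package it and would be a worthwhile addition to the text.
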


So when considering projective rep of the Lorentz group or Poincare group, we treat ordinary one of $SL(2,\mathbb{C})$ instead. We next study the behaviour of the small group that defined in section 2.1. 

\begin{lemma}
    Let $p:SL(2,\mathbb{C}) \to SO^+(1,3)$ be the covering map. Then the preimage of $SO(3)$ is $SU(2)$. 
\end{lemma}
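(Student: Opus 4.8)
The plan is to use the explicit covering map $p:SL(2,\mathbb{C})\to SO^+(1,3)$ constructed in the previous lemma, namely $p(A):X\mapsto A^\dagger X A$ under the identification of Hermitian $2\times 2$ matrices with $\mathbb{R}^{1,3}$. First I would recall that $SO(3)\subset SO^+(1,3)$ is precisely the stabilizer of the timelike vector $(1,0,0,0)$, which corresponds to the identity matrix $I\in h$. So $A\in p^{-1}(SO(3))$ if and only if $p(A)$ fixes $I$, i.e. $A^\dagger I A = I$, i.e. $A^\dagger A = I$. Combined with $\det A = 1$ (which holds for all of $SL(2,\mathbb{C})$), this says exactly $A\in SU(2)$. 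Hence $p^{-1}(SO(3))\subseteq SU(2)$.

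For the reverse inclusion I would check that $p(SU(2))\subseteq SO(3)$: if $A\in SU(2)$ then $A^\dagger I A = A^{-1}IA = I$, so $p(A)$ fixes the time axis, hence lies in $O(3)$ sitting inside $SO^+(1,3)$; since $SU(2)$ is connected and $p$ is continuous with $p(I) = \mathrm{id}$, the image is connected and therefore lands in the identity component $SO(3)$. This gives $SU(2)\subseteq p^{-1}(SO(3))$, and together with the previous paragraph, equality.

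One should also note — consistent with the claim that $SL(2,\mathbb{C})$ is the universal cover with kernel $\{\pm I\}$ — that the restriction $p|_{SU(2)}:SU(2)\to SO(3)$ has kernel $\{\pm I\}\cap SU(2) = \{\pm I\}$, recovering the familiar double cover $SU(2)\to SO(3)$; this is a sanity check rather than part of the proof, but worth stating since it explains why the preimage is all of $SU(2)$ and not a smaller subgroup.

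The main obstacle, such as it is, is purely bookkeeping: one must fix the convention for how $SO(3)$ is embedded in $SO^+(1,3)$ and verify that it matches the vector $(1,0,0,0)\leftrightarrow I$ under the chosen identification $X = \begin{pmatrix} t+z & x+iy \\ x-iy & t-z\end{pmatrix}$. With that identification $t=1$, $x=y=z=0$ indeed gives $X = I$, so the rotation subgroup — which acts trivially on $t$ and as $SO(3)$ on $(x,y,z)$ — is exactly the stabilizer of $I$, and the argument goes through without any real difficulty.
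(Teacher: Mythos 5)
Your proof is correct and follows the same route as the paper: both identify $SO(3)$ as the stabilizer of $(1,0,0,0)\leftrightarrow I$ and reduce the condition to $A^\dagger A=I$. The only difference is that you spell out the containment $p(SU(2))\subseteq SO(3)$ (via connectedness) which the paper leaves implicit; this is a harmless extra check, not a change of method.
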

\begin{proof}
    $p(A)\in SO(3)$ iff $p(A)$ acting on $\mathbb{R}^{1,3}$ fixes $(1,0,0,0)$, iff $P(A)$ acting on $h$ fixes $I$, iff $A^\dagger A = I$, iff $A\in SU(2)$. 
\end{proof}
\begin{rmk}
    $SU(2)$ is simply connected, so it gives the universal cover of $SO(3)$. But in general, the preimage of a subgroup of $SO^+(1,3)$ may not be simply connected. 
\end{rmk}

\begin{lemma}\footnote{This conclusion is mentioned partly in Wiki, but I haven't seen any whole statement or proof. I finished this alone.}
    The representation of $SL(2,\mathbb{C})$ arising from the rep $\bigoplus_{i=1}^n \pi_{A_iB_i}$ of its Lie algebra can be descended to an ordinary rep of $SO^+(1,3)$ iff $A_i+B_i$
    is an integer for all $i$, and can be descended to a projective rep of $SO^+(1,3)$ iff $A_i+B_i$
    is half an odd for all $i$. 
\end{lemma}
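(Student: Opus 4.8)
\emph{Proof strategy.} The plan is to reduce the whole question to the action of the single central element $-I\in SL(2,\mathbb{C})$. Since $SL(2,\mathbb{C})$ is simply connected (proved above), the corollary of Lie's second theorem gives a unique representation $\rho$ of $SL(2,\mathbb{C})$ integrating the Lie-algebra representation $\bigoplus_{i=1}^n\pi_{A_iB_i}$; write $V=\bigoplus_i V_i$ for its space and $\rho=\bigoplus_i\rho_i$ for the corresponding block decomposition. By the lemma on universal covers, whose kernel here is $\{\pm I\}$, the representation $\rho$ descends to an ordinary representation of $SO^+(1,3)=SL(2,\mathbb{C})/\{\pm I\}$ precisely when $\rho(-I)=\mathrm{Id}_V$, and it descends to a projective representation precisely when $\rho(-I)$ is a scalar operator. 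Hence everything follows once I show that $\rho(-I)$ is block-diagonal with $i$-th block equal to $(-1)^{2(A_i+B_i)}\,\mathrm{Id}$: then $\rho(-I)=\mathrm{Id}_V$ iff every $A_i+B_i\in\mathbb{Z}$, and $\rho(-I)$ is scalar iff all the signs $(-1)^{2(A_i+B_i)}$ coincide --- which, together with $\rho(-I)\ne\mathrm{Id}_V$, happens exactly when every $A_i+B_i$ is half an odd integer, in which case $\rho(-I)=-\mathrm{Id}_V$ is a genuinely projective descent (a list of mixed type makes $\rho(-I)$ non-scalar, so there is no descent to $PGL(V)$ at all).

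It therefore suffices to compute $\rho_i(-I)$ for a single irreducible block. As $-I$ is central and $\pi_{A_iB_i}=\pi_{A_i}\boxtimes\pi_{B_i}$ is irreducible, Schur's lemma forces $\rho_i(-I)=c_i\,\mathrm{Id}$ with $c_i^2=1$, so $c_i\in\{\pm1\}$ and the only real task is to fix the sign. For this I would restrict $\rho_i$ to the subgroup $SU(2)\subset SL(2,\mathbb{C})$, which contains $-I$ and which, by an earlier lemma, is exactly the preimage of the rotation group $SO(3)$, so that $-I$ is there the lift of a full $2\pi$ rotation about a fixed axis. By the lemma relating $\boxtimes$ and $\otimes$ along the diagonal, $\rho_i|_{SU(2)}$ integrates the $su(2)$-representation $\pi_{A_iB_i}|_{\mathrm{span}\,\mathbf{J}}\cong\pi_{A_i}\otimes\pi_{B_i}$, whose $J_3$-weights are the numbers $a+b$ with $-A_i\le a\le A_i$ and $-B_i\le b\le B_i$. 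A $2\pi$ rotation about that axis multiplies a weight-$m$ vector by $e^{2\pi i m}$; since every such $m=a+b$ lies in $A_i+B_i+\mathbb{Z}$, this phase is uniformly $e^{2\pi i(A_i+B_i)}=(-1)^{2(A_i+B_i)}$, so $c_i=(-1)^{2(A_i+B_i)}$. (Equivalently, Clebsch--Gordan decomposes $\pi_{A_i}\otimes\pi_{B_i}$ into $SU(2)$-irreps of spins $j=|A_i-B_i|,\dots,A_i+B_i$, all congruent mod $1$ to $A_i+B_i$, and $-I$ acts on the spin-$j$ irrep of $SU(2)$ by $(-1)^{2j}$.) Reassembling the blocks gives the claimed form of $\rho(-I)$, and the trichotomy from the first paragraph finishes the proof.

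The only genuinely delicate point is the sign computation of the previous paragraph: one has to identify the central element $-I$ correctly as the lift of a $2\pi$ rotation inside $SU(2)$ in the conventions fixed earlier (where $J_z=H/2$), and then invoke the combinatorial fact that all weights of $\pi_A\otimes\pi_B$ --- equivalently all spins in its Clebsch--Gordan series --- are congruent mod $1$ to $A+B$. This is exactly the spot where the dichotomy ``integer versus half an odd integer'' is forced; everything else is a formal consequence of the simple connectivity of $SL(2,\mathbb{C})$ and of the universal-cover lemma.
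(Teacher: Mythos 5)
Your proposal is correct and follows essentially the same route as the paper's own proof: reduce everything to the value of $\rho(-I)$, use Schur's lemma on each irreducible block, identify the restriction to $SU(2)$ with the integrated $\pi_{A_i}\otimes\pi_{B_i}$ via the diagonal $\boxtimes$-versus-$\otimes$ lemma, and observe that a direct sum of $\pm\mathrm{Id}$ blocks is scalar only when all signs agree. Your explicit weight computation $c_i=e^{2\pi i(A_i+B_i)}$ is a welcome extra level of detail where the paper simply cites the standard $SO(3)$-versus-$SU(2)$ dichotomy.
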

\begin{proof}
    First consider $i=1$. As 
    $$
    \xymatrix{
    \tilde{G} \ar[r]^\rho  &  GL(V) \\
    g  \ar[u]^{\exp}\ar[r]^{\rho_*} &  gl(V)\ar[u]_{\exp}
    }
    $$
    is commutative, the restriction of the Lie group rep to $SO(3)$ corresponds to the restriction of the Lie algebra rep to $span{\mathbf{J}}$, which is $\pi_A\otimes\pi_B$. So $A+B$ is half odd iff the subrep of $SO(3)$ is projective. As the kernel of $SU(2)\to SO(3)$ is $\pm I$, it is equivalent, when considering rep of $SU(2)$, to $-I$ mapping to $-I|_V$ (by Schur's lemma the center maps to scalars of the identity, and $(-I)^2=I$). Again as the kernel of $SL(2,\mathbb{C}) \to SO^+(1,3)$ is $\pm I$, it is equivalent to the rep of $SO^+(1,3)$ is projective. 

    In general case, it follows by that the direct sum of $\pm I|_{V_i}$ is $I|_{\oplus V_i}$ iff they are all $+I$, and is $-I$ iff they are all $-I$. And we know that an ordinary rep of $SL(2,\mathbb{C})$ can descend to a proj(and not ordinary) rep of $SO^+(1,3)$ iff $-I$ maps to $-I$. 
\end{proof}

\begin{eg}$\quad$

    Scalar representation has $(A,B)=(0,0)$. So it is an ordinary rep of $SO^+(1,3)$. 

    Vector representation has $(A,B)=(\frac{1}{2}, \frac{1}{2})$.
    \footnote{Ironically, writing in standard basis these are not the same. But one can show that it can be related by a similarity transformation. }
    So it is an ordinary rep of $SO^+(1,3)$. 

    Dirac representation has $(\frac{1}{2}, 0) \oplus (0, \frac{1}{2})$. So it is a projective rep of $SO^+(1,3)$. 

    $(\frac{1}{2}, 0) \oplus (0, 1)$ for example, is not even a projective rep of $SO^+(1,3)$. It is just an ordinary rep of $SL(2,\mathbb{C})$. 
\end{eg}

\subsection{Induced representation}

We refer to \cite{etingof2011introduction} and \cite{serre1977linear}.

Induced representation is an important tool for the construction of representations. In the state representation as we shall see, the kind of representations always come from induced representations, as is shown in \cite{}. 

We firstly introduce the case of finite groups and finite dimensional representations, since it is rigorous in mathematical sense. Actually our treatment of state representation is not rigorous, but still similar to this. 

Let $G$ be a group and $H$ is a subgroup. Let $(\rho, W)$ be a f.d. rep of G and $V$ be a stable subspace (subrep) of $H$ and $\pi$ it restricted representation. 
It is easy to see that $gV$ only depends on the representative class of $g\in G/H$. 
Let $g_i$ be a representative class for $G/H$.

\begin{defi}
    We say that $(\rho, W)$ is induced from $(\pi, V)$ if 
    $$
    W = \bigoplus_{i=1}^n g_iV
    $$
\end{defi}
Once this condition is satisfied, we can compute the action of $G$:
$$
g\cdot  g_iv_i =  g_{j(i)} \pi(h_i)v_i
$$
which follows by writing $gg_i$ uniquely in the form $gg_i=g_{j(i)}h_i$. So the existence of the induced representation is unique. 

Actually we can give a constructive definition as follows, which shows the existence. 

\begin{defi}
    Given a rep $(\pi, V)$ of $H$. It induced rep $Ind_H^G \pi$ is defined as $G$ acting on 
    $$
    W \equiv \bigoplus_{i=1}^n g_iV
    $$
    as follows
    $$
    \rho(g)\cdot \sum_{i=1}^n g_iv_i = \sum_{i=1}^n g_{j(i)} \pi(h_i)v_i
    $$
    where for each $g_i$ we find a unique pair of $h_i\in H$ and $j(i)\in\{1,\cdots,n\}$ such that $gg_i=g_{j(i)h_i}$. 
\end{defi}
\begin{proof}
    We need to show that this is indeed a representation. Indeed we only need to show $\rho(g')\rho(g)g_iv = \rho(g'g)g_iv$. Writing $gg_i=g_{j(i)}h_i$ and $g'g_{j(i)} = g_{k(i)}h_i'$, so $g'gg_i=g_{k(i)}h_i' h_i$. So 
    \begin{equation*}
        \rho(g')\rho(g) g_iv = \rho(g') g_{j(i)}\pi(h_i)v = g_{k(i)}\pi(h_i')\pi(h_i)v
    \end{equation*}
    and 
    \begin{equation*}
        \rho(g'g) g_iv =  g_{k(i)}\pi(h_i'h_i)v
    \end{equation*}
    So these are equal. 
\end{proof}
\begin{rmk}\label{cannot induce}
    From the proof we see a serious problem: projective representation cannot be induced in general! This is because when acting on $\sum_{i=1}^n g_iv_i$, the phase $\pi(h_i')\pi(h_i)=\lambda_i\pi(h_i'h_i)$ are not equal, so putting together, though $\pi$ is a projective rep of $H$, $\rho$ is not a projective rep of $G$. 
\end{rmk}

Above is the construction for induced representations. Below is the property for homomorphisms related to induced representations. 

\begin{lemma}(Frobenius reciprocity)\label{Frobenius reciprocity}
\footnote{See \cite{etingof2011introduction}'s exercise.}\\
     We have the following isomorphism
     \begin{equation}
     \begin{aligned}
         Hom_H(\pi, Res^G_H D) &\cong Hom_G(Ind_H^G \pi, D)\\
         F &\mapsto (g_iv \mapsto D(g_i)F(v))\\
         G|_V &\leftarrow G
     \end{aligned}
     \end{equation}
\end{lemma}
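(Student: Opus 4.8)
The plan is to verify that the two maps displayed in the statement are mutually inverse linear isomorphisms, and that the forward map indeed lands in $\operatorname{Hom}_G(\operatorname{Ind}_H^G\pi, D)$. First I would check well-definedness of the forward map $F \mapsto \widetilde{F}$ where $\widetilde{F}(g_i v) = D(g_i)F(v)$ and then extended by linearity over the direct sum decomposition $W = \bigoplus_i g_i V$; since every element of $W$ is uniquely $\sum_i g_i v_i$, this defines a linear map $W \to D$ with no ambiguity. The content is to show $\widetilde{F}$ is $G$-equivariant: given $g \in G$, use $g g_i = g_{j(i)} h_i$ as in the construction of $\operatorname{Ind}_H^G\pi$, so that $\rho(g)(g_i v) = g_{j(i)}\pi(h_i)v$, and then compute $\widetilde{F}(\rho(g) g_i v) = D(g_{j(i)})F(\pi(h_i)v) = D(g_{j(i)})D(h_i)F(v)$, using that $F$ is an $H$-map $\pi \to \operatorname{Res}^G_H D$. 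This equals $D(g_{j(i)}h_i)F(v) = D(g g_i)F(v) = D(g)D(g_i)F(v) = D(g)\widetilde{F}(g_i v)$, which is what we want.

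Next I would treat the backward map: given $\Phi \in \operatorname{Hom}_G(\operatorname{Ind}_H^G\pi, D)$, restrict it to the summand $g_1 V = V$ (taking $g_1 = e$), obtaining $\Phi|_V : V \to D$. I must check this is an $H$-morphism $\pi \to \operatorname{Res}^G_H D$: for $h \in H$, we have $h g_1 = h = g_1 h$, so $\rho(h)$ restricted to $V$ acts as $\pi(h)$, hence $\Phi(\pi(h)v) = \Phi(\rho(h)v) = D(h)\Phi(v)$, as needed. Then I would show the two assignments are inverse to each other. Starting from $F$, forming $\widetilde F$, then restricting to $V$ gives $v \mapsto D(g_1)F(v) = D(e)F(v) = F(v)$, so we recover $F$. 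Conversely, starting from $\Phi$, restricting to $V$, then inducing back gives the map $g_i v \mapsto D(g_i)\Phi(v) = D(g_i)\Phi(\rho(g_i^{-1})(g_i v))$; using $G$-equivariance of $\Phi$ this is $D(g_i)D(g_i^{-1})\Phi(g_i v) = \Phi(g_i v)$, so we recover $\Phi$ on each summand and hence everywhere by linearity.

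Finally I would note linearity of both maps in $F$ and $\Phi$ respectively, which is immediate from the formulas, so they are isomorphisms of vector spaces (indeed functorial in $D$, though the statement does not ask for that).

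The main obstacle, such as it is, lies in being careful that the direct sum decomposition $W = \bigoplus_i g_i V$ is genuinely used at exactly one point — to make the extension-by-linearity of $\widetilde F$ well-defined — and in keeping the bookkeeping of $g g_i = g_{j(i)} h_i$ consistent between the equivariance check and the construction of $\operatorname{Ind}_H^G\pi$; once the choice $g_1 = e$ is fixed the rest is a routine diagram chase. One should also flag, in the spirit of Remark \ref{cannot induce}, that this argument genuinely requires $\pi$ to be an honest (not merely projective) representation, since $D(g_{j(i)}h_i) = D(g_{j(i)})D(h_i)$ and the identification of $\pi$ on $V$ with $\rho|_V$ both use strict multiplicativity.
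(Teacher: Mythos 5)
Your proposal is correct and follows the same route as the paper's proof, which simply asserts that the two displayed maps are well defined and mutually inverse; you supply the equivariance checks and the bookkeeping with $gg_i = g_{j(i)}h_i$ that the paper leaves implicit. Your closing observation that strict multiplicativity of $\pi$ is genuinely needed is a nice echo of Remark \ref{cannot induce} and consistent with the paper's intent.
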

\begin{proof}
    They are clearly mutually inverse. And the first mapping is well defined. 
\end{proof}

\clearpage

\section{Quantum Hilbert space}

\subsection{Single particle space}

Denote by $\mathcal{H}$ the Hilbert space of the states of single particles. Below is the requirement of the Lorentz invariance of the quantum theory.

\begin{ax}(Single particle state)
\footnote{The condition on faithfulness can be replaced by other weaker conditions. }

    The Hilbert space $\mathcal{H}$ of single particle states admits a projective, faithful, irreducible unitary representation of the Poincare group 
    $$
    \tilde{U} : \mathbb{R}^{1,3} \rtimes SO^+(1,3) \to \mathcal{PU}(\mathcal{H})
    $$
\end{ax}

Our mission of this subsection is to study such kind of representations and to show that it can be reduced to the study of a so called 'little group'. 
This is the famous result called Wigner's classification. 

By the remark \ref{cannot induce}, we do not follow the path of Weinberg! 
We denote by $(id, \lambda) : \mathbb{R}^{1,3}\rtimes SL(2,\mathbb{C})\to \mathbb{R}^{1,3}\rtimes SO^+(1,3)$ the universal cover of Poincare group. Thanks to Bargmann's theorem \ref{Bargmann}, the mission is to find faithful except $-I$, unitary ordinary irrep of its universal cover:
\begin{equation}
    U: \mathbb{R}^{1,3}\rtimes SL(2,\mathbb{C}) \to \mathcal{U(H)}
\end{equation}

Firstly, as $\mathbb{R}^{1,3}$ is commutative we can choose a formal basis 
\footnote{This is the first non-rigorous step on the physical path of QFT. But as we have a strong favor of delta functions instead of measures, we just follow this path. More discussions can be found, for example, in \cite{straumann2008unitary}. It involves SNAG theorem. }
\begin{equation}
\{\Psi_{p,\sigma} | p\in A\subset \mathbb{R}^4, \sigma\in\text{a finite set $B_p$ dependent of p}\}
\end{equation}
which satisfies, on the Lie algebra level,
\begin{equation}
    P^\mu \Psi_{p, \sigma}=p^\mu \Psi_{p , \sigma}
\end{equation}
which written in Lie group level, reads
\begin{equation}
    U(I,a) \Psi_{p, \sigma} = e^{i p\cdot a} \Psi_{p , \sigma}
\end{equation}

Write $\mathcal{H}_p = span\{ \Psi_{p, \sigma} | \sigma\in B_p \} $ the common eigenspace of $P^\mu$ with eigenvalue $p^\mu$. 
Or equivalently, the common eigenspace of $U(I,a)$ with eigenvalue $e^{-ia\cdot p}$. 
\footnote{Note that 'eigenspace of $\tilde{U}(I,a)$ with eigenvalue $e^{-ia\cdot p}$' is not well defined, as it is a projective rep. Again we see the power of Bargmann's theorem. Weinberg in \cite{weinberg2002quantum} misuses these. }

The non-homogeneous part of the Poincare group is done, and the remaining part is to determine the homogeneous part.

\begin{lemma}$\quad$

    We claim that: for any $p^\mu \in \mathbb{R}^4$ we can write
    \begin{equation}
        p^\mu = \lambda(L^\mu{ }_\nu (p)) k^\nu
    \end{equation}
    where $L^\mu{ }_\nu$ is in $SL(2,\mathbb{C})$ and it later will be called standard Lorentz transformation, and $k^\nu$ is a standard momentum associated to $p^\mu$ which is defined in the table below.
\end{lemma}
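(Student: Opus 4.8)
The plan is to reduce the statement to the classical fact that $SO^+(1,3)$ acts on $\mathbb{R}^{1,3}$ with only finitely many orbits — one for each admissible value of the Lorentz square $p^2 = p_\mu p^\mu$, refined by the sign of $p^0$ whenever $p^2 \geq 0$ and $p \neq 0$ — and then to lift the connecting Lorentz transformation to $SL(2,\mathbb{C})$ through the covering map. Concretely, the table singles out in each orbit one standard momentum $k^\nu$: $k = (m,0,0,0)$ for timelike $p$ with $p^2 = m^2 > 0$, $p^0 > 0$, its time-reverse $(-m,0,0,0)$ in the past timelike case, $k = (\kappa,0,0,\kappa)$ (resp. $(-\kappa,0,0,-\kappa)$) on the forward (resp. backward) light cone for a fixed reference scale $\kappa > 0$, $k = (0,0,0,m)$ in the spacelike case $p^2 = -m^2 < 0$, and $k = 0$ for $p = 0$. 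It then suffices to exhibit, for each $p$ in the orbit of $k$, an element $\Lambda_p \in SO^+(1,3)$ with $\Lambda_p\, k = p$; choosing any preimage along the surjective covering map $\lambda : SL(2,\mathbb{C}) \to SO^+(1,3)$ from the universal-cover lemma then yields the required $L(p) \in SL(2,\mathbb{C})$ with $\lambda(L(p))^\mu{}_\nu\, k^\nu = p^\mu$.

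For the transitivity inside each orbit I would give a single uniform construction built only from boosts and spatial rotations, both of which lie in the identity component, so the composite does too. Writing $\hat p = \vec p/|\vec p|$ for the spatial direction (and taking the rotation to be the identity when $\vec p = 0$), set $\Lambda_p = R_{\hat p}\, B_z(\theta_p)$, where $B_z(\theta_p)$ is the boost along the third axis and $R_{\hat p}$ is a rotation carrying the third axis to $\hat p$. A direct check shows $B_z(\theta)$ sends $(m,0,0,0) \mapsto (m\cosh\theta, 0, 0, m\sinh\theta)$, $(\kappa,0,0,\kappa) \mapsto (\kappa e^{\theta},0,0,\kappa e^{\theta})$, and $(0,0,0,m) \mapsto (m\sinh\theta,0,0,m\cosh\theta)$; fixing $\theta_p$ by matching the energy component — $\cosh\theta_p = p^0/m$ in the timelike case, $e^{\theta_p} = |\vec p|/\kappa$ in the lightlike case, $\sinh\theta_p = p^0/m$ in the spacelike case — and then applying $R_{\hat p}$ produces exactly $p$. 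The identity $p^2 = k^2$ guarantees these equations are solvable (for instance $\cosh\theta_p = p^0/m \geq 1$ precisely because $(p^0)^2 = m^2 + |\vec p|^2$), and the degenerate case $p = 0$ is trivial with $L(p) = I$.

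The main obstacle is not any single computation but the bookkeeping: one must check that the listed invariants really are complete, so that every $p$ lands in the orbit of one of the tabulated $k$, and — more importantly for the way this lemma gets used — one must upgrade the bare existence of $L(p)$ to a genuine, coherently chosen function of $p$. For the induced-representation construction and the associated little-group cocycle to be well posed, one fixes once and for all a definite choice of $\theta_p$ and $R_{\hat p}$ in each orbit (continuous, indeed real-analytic away from the lower-dimensional degenerate orbits) and a definite preimage $L(p) \in SL(2,\mathbb{C})$; here the polar-decomposition description of $SL(2,\mathbb{C})$ makes the natural choice — exponentiate the corresponding element of $sl(2,\mathbb{C})$ — canonical. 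I would flag this point explicitly rather than leave it implicit, since it is exactly the kind of sloppy treatment the paper aims to repair.
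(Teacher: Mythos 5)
Your proposal is correct and follows essentially the same route as the paper, which simply defers the proof to the explicit standard Lorentz transformations of the next subsection: a pure boost for the massive case and a rotation-times-boost $R(\hat p)B(|\vec p|)$ for the massless case, lifted to $SL(2,\mathbb{C})$ by exponentiating the corresponding Lie algebra element. The only discrepancies are notational (the paper uses the signature in which timelike means $p^2=-M^2<0$, and its backward-light-cone representative is $(-1,0,0,1)$ rather than $(-\kappa,0,0,-\kappa)$, which lies in the same orbit), and your explicit flagging of the need for a coherent choice of $L(p)$ is exactly what the paper's subsequent definitions supply.
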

\begin{proof}
The choice are discussed in the next subsection. 
\end{proof}
\begin{tabular}{ccccc}
\toprule 
 & Mass shell  &  Standard momentum & Small group & Name\\ 
\midrule 
(a)& $p^2=-M^2<0, p^0>0$ & $(M,0,0,0)$ & $\cancel{S O(3)} SU(2)$ & Massive\\
(b)& $p^2=-M^2<0, p^0<0$ & $(-M,0,0,0)$ & $\cancel{S O(3)} SU(2)$ & \\
(c)& $p^2=0, p^0>0$ & $(1,0,0, 1)$ & $\cancel{I S O(2)}$ & Massless\\
(d)& $p^2=0, p^0<0$ & $(-1, 0,0, 1)$ & $\cancel{I S O(2)}$ & \\
(e)& $p^2=N^2>0$ & $(0,0, 0, N)$ & $\cancel{S O(2,1)}$ & Tachyonic\\
(f)& $p^\mu=0$ & $(0,0, 0, 0)$ & $\cancel{S O^+(1,3)} SL(2,\mathbb{C})$ & Vacuum\\
\bottomrule 
\end{tabular}

\begin{defi}(Little group, mass shell)\\
    Let $k$ be a standard momentum as shown above.
    \footnote{In this article the first component denoted the time component, differing from \cite{weinberg2002quantum} who use the last to denote time.} 
    The subgroup
    \begin{equation}
        \mathcal{W}_k \equiv \{A\in SL(2,\mathbb{C}) | \lambda(A)k=k\}
    \end{equation}
    that fix $k$ is called the little group of $k$.
    \footnote{We differ from \cite{weinberg2002quantum} again on the definition of little groups. }
    The orbit of $k$ of the action of $SO^+(1,3)$ is called the mass shell of $k$.
\end{defi}

\begin{prop}$\quad$\\
    For any $p\in A$, 
    \begin{equation}
        U(\Lambda, 0) \mathcal{H}_p
        = \mathcal{H}_{\lambda(\Lambda) p}, \quad \forall \Lambda\in SL(2,\mathbb{C})
    \end{equation}
    Or, equivalently, 
    \begin{equation}
        \tilde{U}(\Lambda, 0) \mathcal{H}_p
        = \mathcal{H}_{\Lambda p}, \quad \forall \Lambda\in SO^+(1,3)
    \end{equation}
\end{prop}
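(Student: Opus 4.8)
The plan is to show that $U(\Lambda,0)$ maps the common eigenspace $\mathcal{H}_p$ of the translation operators onto $\mathcal{H}_{\lambda(\Lambda)p}$ by a direct computation using the group law in $\mathbb{R}^{1,3}\rtimes SL(2,\mathbb{C})$, and then derive the $SO^+(1,3)$ version as an immediate corollary.

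First I would recall the semidirect product multiplication: in $\mathbb{R}^{1,3}\rtimes SL(2,\mathbb{C})$ we have $(\Lambda,0)(I,a) = (\Lambda, \lambda(\Lambda)a)$ and $(I,\lambda(\Lambda)a)(\Lambda,0) = (\Lambda,\lambda(\Lambda)a)$, so that $(\Lambda,0)(I,a)(\Lambda,0)^{-1} = (I,\lambda(\Lambda)a)$. Applying the homomorphism $U$ gives the operator identity $U(\Lambda,0)\, U(I,a)\, U(\Lambda,0)^{-1} = U(I,\lambda(\Lambda)a)$. Now take any $\Psi \in \mathcal{H}_p$, so $U(I,a)\Psi = e^{ip\cdot a}\Psi$ for all $a$, and set $\Phi = U(\Lambda,0)\Psi$. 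Then
\begin{equation*}
U(I,a)\Phi = U(\Lambda,0)\,U(\Lambda,0)^{-1} U(I,a) U(\Lambda,0)\,\Psi = U(\Lambda,0)\, U(I,\lambda(\Lambda)^{-1}a)\,\Psi = e^{i p\cdot (\lambda(\Lambda)^{-1}a)} \Phi.
\end{equation*}
Since the Minkowski pairing is $\lambda(\Lambda)$-invariant, $p\cdot(\lambda(\Lambda)^{-1}a) = (\lambda(\Lambda)p)\cdot a$, so $U(I,a)\Phi = e^{i(\lambda(\Lambda)p)\cdot a}\Phi$, which says exactly $\Phi \in \mathcal{H}_{\lambda(\Lambda)p}$. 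This proves $U(\Lambda,0)\mathcal{H}_p \subseteq \mathcal{H}_{\lambda(\Lambda)p}$; applying the same argument to $\Lambda^{-1}$ and the momentum $\lambda(\Lambda)p$ gives $U(\Lambda^{-1},0)\mathcal{H}_{\lambda(\Lambda)p}\subseteq \mathcal{H}_p$, and since $U(\Lambda,0)$ is invertible with inverse $U(\Lambda^{-1},0)$, the two inclusions force equality.

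For the equivalent $SO^+(1,3)$ statement, I would note that $\tilde{U}$ is the projective representation descending from $U$ via the covering $(id,\lambda)$, so $\tilde U(\Lambda,0)$ for $\Lambda\in SO^+(1,3)$ equals (up to a scalar, which does not affect which eigenspace a vector lies in) $U(\tilde\Lambda,0)$ for any lift $\tilde\Lambda$ with $\lambda(\tilde\Lambda)=\Lambda$; the eigenspace statement is phase-insensitive, so it transfers verbatim with $\lambda(\tilde\Lambda)p = \Lambda p$.

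I do not expect a serious obstacle here — the argument is essentially the standard intertwining computation. The one point requiring a little care is the well-definedness of $\tilde U(\Lambda,0)\mathcal{H}_p$ as a genuine subspace despite $\tilde U$ being only projective; this is handled by observing that eigenspaces are invariant under rescaling the operator, which is precisely why the footnote in the text stresses that "eigenspace of $\tilde U(I,a)$" must be read through the lift $U$. A secondary minor point is making sure $A\subseteq\mathbb{R}^4$ (the set of momenta that actually occur) is stable under $\lambda(\Lambda)$, i.e. that $\mathcal{H}_{\lambda(\Lambda)p}$ is nonzero — but this follows automatically since $U(\Lambda,0)$ is a bijection of $\mathcal{H}$ and carries the nonzero space $\mathcal{H}_p$ into it.
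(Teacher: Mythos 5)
Your proof is correct and follows essentially the same approach as the paper: conjugate $U(I,a)$ by $U(\Lambda,0)$ at the level of the linear representation of $\mathbb{R}^{1,3}\rtimes SL(2,\mathbb{C})$, use Lorentz invariance of the Minkowski pairing to identify the eigenvalue, and obtain equality from the two inclusions via $\Lambda^{-1}$. The only cosmetic difference is that the paper opens with a deliberate false start at the projective level (to illustrate why one must work with $U$ rather than $\tilde U$), whereas you simply work with $U$ from the outset and defer that remark to your closing paragraph.
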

\begin{proof}
On projective rep level,
$$
\tilde{U}(I,a) U(\Lambda, 0) \Psi_{p,\sigma} = \square \tilde{U}(\Lambda, 0) \tilde{U}(I, \Lambda^{-1}a) \Psi_{p,\sigma}
= \square e^{i\Lambda^{-1}a \cdot p} \tilde{U}(\Lambda, 0) \Psi_{p,\sigma}
= \square e^{i a \cdot \Lambda p} \tilde{U}(\Lambda, 0) \Psi_{p,\sigma}
$$
where $\square$ is the phase coming from projectivity. So we can conclude nothing from this. 
Indeed we should compute at the linear rep level
$$
U(I,a) U(\Lambda, 0) \Psi_{p,\sigma} 
= U(\Lambda, 0) U(I, \lambda(\Lambda)^{-1}a) \Psi_{p,\sigma}
= e^{i\lambda(\Lambda)^{-1}a \cdot p} U(\Lambda, 0) \Psi_{p,\sigma}
= e^{i a \cdot \lambda(\Lambda) p} U(\Lambda, 0) \Psi_{p,\sigma}
$$
for $\Lambda\in SL(2,\mathbb{C})$. 
So we have a $\subset$ relation, and the converse is true by applying the equation to $U(\Lambda^{-1},0)$. 
\end{proof}

With these preparations we are ready to prove the famous result:
\begin{thm}(Wigner's classification)\\
Every faithful(except $(-I,0)$), irreducible unitary representation $U$ of the universal cover of Poincare group $\mathbb{R}^{1,3} \rtimes SL(2,\mathbb{C})$, when restricted to the universal cover of Lorentz group $SL(2,\mathbb{C})$, must be an induced representation of a small group $\mathcal{W}(k)$ for a certain standard momentum $k$. 

\end{thm}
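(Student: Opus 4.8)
\section*{Proof proposal}

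The plan is to run Wigner's classical orbit argument, rewritten on the universal cover $SL(2,\mathbb{C})$ and fitted to the (formal) induced-representation construction of Section 1.5. First I would pin down the set $A$ of momenta that actually occur. Since $U(I,a)$ acts on each $\mathcal{H}_p$ by the scalar $e^{ip\cdot a}$, and by the preceding proposition $U(\Lambda,0)$ carries $\mathcal{H}_p$ onto $\mathcal{H}_{\lambda(\Lambda)p}$, any subset $S\subseteq A$ which is a union of $SL(2,\mathbb{C})$-orbits (for the action through $\lambda$) yields a subspace $\bigoplus_{p\in S}\mathcal{H}_p$ stable under both translations and $SL(2,\mathbb{C})$, hence under all of $\mathbb{R}^{1,3}\rtimes SL(2,\mathbb{C})$. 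Irreducibility forces $A$ to be a single orbit. Because $\lambda$ is surjective, the $SL(2,\mathbb{C})$-orbits on $\mathbb{R}^4$ coincide with the $SO^+(1,3)$-orbits, i.e. the mass shells of the table, and each of them meets the list of standard momenta; fix the standard momentum $k$ with $\mathrm{orbit}(k)=A$.

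Next I would transport a basis of $\mathcal{H}_k$ around the orbit. For each $p\in A$ choose the standard Lorentz transformation $L(p)\in SL(2,\mathbb{C})$ with $\lambda(L(p))k=p$, fix a basis $\{\Psi_{k,\sigma}\}_{\sigma\in B_k}$ of $\mathcal{H}_k$, and set $\Psi_{p,\sigma}:=U(L(p),0)\Psi_{k,\sigma}$. Since $U(L(p),0)$ is unitary and maps $\mathcal{H}_k$ onto $\mathcal{H}_p$, this is a basis of $\mathcal{H}_p$, so $\mathcal{H}=\bigoplus_{p\in A}\mathcal{H}_p$ plays the role of the formal ``$\bigoplus_i g_iV$'' with $V=\mathcal{H}_k$ and coset representatives $L(p)$ for $A\cong SL(2,\mathbb{C})/\mathcal{W}(k)$. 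Moreover $\mathcal{W}(k)$ acts on $V=\mathcal{H}_k$: for $W\in\mathcal{W}(k)$ one has $\lambda(W)k=k$, hence $U(W,0)\mathcal{H}_k=\mathcal{H}_k$, and $D(W):=U(W,0)|_{\mathcal{H}_k}$ is a unitary representation of the little group.

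It then remains to identify $U|_{SL(2,\mathbb{C})}$ with $\mathrm{Ind}_{\mathcal{W}(k)}^{SL(2,\mathbb{C})}D$. Given $\Lambda\in SL(2,\mathbb{C})$ and $p\in A$, put $p'=\lambda(\Lambda)p$; then $W:=L(p')^{-1}\Lambda L(p)$ satisfies $\lambda(W)k=k$, so $W\in\mathcal{W}(k)$ and $\Lambda L(p)=L(p')W$. Hence
$$U(\Lambda,0)\Psi_{p,\sigma}=U(L(p'),0)U(W,0)\Psi_{k,\sigma}=\sum_{\sigma'}D_{\sigma'\sigma}(W)\,\Psi_{p',\sigma'},$$
which is exactly the defining formula of the induced representation of Section 1.5 under the dictionary $g\mapsto\Lambda$, $g_i\mapsto L(p)$, $g_{j(i)}\mapsto L(p')$, $h_i\mapsto W$, $\pi\mapsto D$. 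Thus the restriction of $U$ to $SL(2,\mathbb{C})$ is induced from $D$; one may further check that irreducibility of $U$ is equivalent to irreducibility of $D$, though the statement does not require this.

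The main obstacle is the step where irreducibility is used to conclude that $A$ is a single orbit: rigorously $\bigoplus_{p\in A}\mathcal{H}_p$ is a direct integral rather than a direct sum, and making ``invariant subset of the joint spectrum $\Leftrightarrow$ invariant subspace'' precise needs the spectral (SNAG) theory of the abelian group $\mathbb{R}^{1,3}$ together with the fact that an orbit carries an essentially unique invariant measure class, plus a measurable choice $p\mapsto L(p)$ of coset section. In the formal treatment adopted here these analytic subtleties are absorbed into the ``formal basis'' hypothesis of this subsection, and everything else is the bookkeeping above.
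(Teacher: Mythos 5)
Your proposal is correct and follows essentially the same four-step argument as the paper: irreducibility forces the joint translation spectrum to be a single nonzero mass shell, $\mathcal{W}(k)$ stabilizes $\mathcal{H}_k$, the $L(p)$ furnish coset representatives for $SL(2,\mathbb{C})/\mathcal{W}(k)$, and the decomposition $\mathcal{H}=\bigoplus_p U(L(p),0)\mathcal{H}_k$ is then exactly the Section~1.5 definition of an induced representation. The only differences are presentational: you write out the defining formula $U(\Lambda,0)\Psi_{p,\sigma}=\sum_{\sigma'}D_{\sigma'\sigma}(W)\Psi_{p',\sigma'}$ which the paper leaves implicit in its Step~4, and you flag the analytic gaps (direct integral vs.\ direct sum, SNAG, measurable section) that the paper defers to its Discussion subsection.
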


\begin{proof}

Step 1.

By the previous proposition and the diagonal action of $U(I, a)$, any invariant subspace must look like 
$$
\sum_{p\in C} \mathcal{H}_p
$$
where $C$ is a union of certain mass shells. As the projective rep is faithful, C contains at least one non-zero mass shell. So as the projective rep is irreducible, C is a single non-zero mass shell. 
Denote by $k$ its standard momentum. In summary, we have 
$$
\mathcal{H} = \sum_{p\in\text{the mass shell of k}} \mathcal{H}_p
$$ 
for a certain standard momentum $k$.

Step 2.

By the previous proposition again, $\mathcal{H}_k$ is invariant under $\mathcal{W}(k)$. 

Step 3.

$$
\{L(p)|p\in \text{the shell of k}\}
$$ 
constitutes a representative system of $\mathcal{W}(k)$ in $SL(2,\mathbb{C})$. 
This is because
$L(p)^{-1} \Lambda \in \mathcal{W}(k)$ is equivalent to $\lambda(\Lambda) k = \lambda(L(p)) k = p$. So it follows the uniqueness and existence of $L(p)$.

Step 4.

Again, and notice that different eigenspaces of $P^\mu$ must be a direct sum. So we have 
\begin{equation}
    \mathcal{H} = \bigoplus_{p\in\text{the mass shell of k}} U(L(p),0) \mathcal{H}_k
\end{equation}
And it is just the definition for induced representation. 
\end{proof}

\begin{rmk}
    Actually from the proof we see that below is also correct
    \begin{equation}
    \mathcal{H} = \bigoplus_{p\in\text{the mass shell of k}} \tilde{U}(\lambda(L(p)),0) \mathcal{H}_k
    \end{equation}
    So we can roughly say that the projective rep of $\mathbb{R}^{1,3} \rtimes SO^+(1,3)$ is induced from a projective rep of $\lambda(\mathcal{W}(k))$, the little group in Weinberg sense. But this does not imply the uniqueness and existence of such induction.
    Actually this is packed into Bargmann's theorem \ref{Bargmann} in our discussion, which allows us to pass to the universal cover, and where the existence and uniqueness of induction follows directly as in Section 1.5, Frobenius induction. Apparently reducible rep of small group induces reducible ones for large group. It only remains whether irr rep of small group induce irr rep of the large group. This is Mackey's theory. See \cite{straumann2008unitary}.
\end{rmk}

Combining the above results, we can choose a standard $k\in A$ and make the change of notations
\begin{equation}
    \Psi_{p, \sigma} \equiv N(p) U(L(p),0) \Psi_{k, \sigma}
\end{equation}
where $N(p)$ is a numerical normalization factor, to be chosen later. Actually we can show that the indices $\sigma$ lie in a finite space. This is due to the following fact in the massive case. 

\begin{lemma}(Finiteness of the spin index)
\footnote{See \cite{folland2016course}.}\\
    Every irreducible representation of a compact group is finite dimensional. 
\end{lemma}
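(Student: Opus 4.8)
The plan is to prove the standard and slightly stronger statement: every topologically irreducible \emph{unitary} representation $(\pi,\mathcal{H})$ of a compact group $G$ on a Hilbert space is finite dimensional. This suffices for our purpose, since the relevant little group (here $SU(2)$, which is compact) acts unitarily on $\mathcal{H}_k$; and if one prefers to start from an arbitrary strongly continuous Hilbert-space representation, one first averages an inner product against the Haar measure to make it unitary, which changes neither the space nor its invariant closed subspaces. The core idea is to manufacture out of the representation a nonzero \emph{compact} operator that commutes with every $\pi(g)$: Schur's lemma then forces it to be a nonzero scalar multiple of the identity, and the identity operator is compact only on a finite-dimensional space.

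Concretely, I would fix a unit vector $v\in\mathcal{H}$, let $P$ be the orthogonal projection onto $\mathbb{C}v$ (a rank-one, hence compact, positive operator), and set
\begin{equation}
T = \int_G \pi(g)\,P\,\pi(g)^{-1}\,dg ,
\end{equation}
the integral taken against the normalized Haar measure of $G$. The argument then runs in five short steps. (i) $T$ is well defined and bounded, because the integrand $g\mapsto\pi(g)P\pi(g)^{-1}$ is norm-continuous on $G$ (it is finite rank for each $g$ and $\pi$ is strongly continuous, so this reduces to continuity of $g\mapsto\pi(g)v$), and a norm-continuous map on a compact space is Bochner integrable. (ii) $T$ commutes with the representation: for $h\in G$, bi-invariance of the Haar measure of a compact group gives $\pi(h)T\pi(h)^{-1}=\int_G\pi(hg)P\pi(hg)^{-1}\,dg=T$. (iii) $T\geq 0$ and $T\neq 0$: one computes $\langle Tv,v\rangle=\int_G|\langle\pi(g)^{-1}v,v\rangle|^2\,dg$, whose integrand equals $\|v\|^4=1$ at $g=e$ and is continuous, so the integral is strictly positive. (iv) $T$ is compact; I expand on this below. (v) By Schur's lemma in the form appropriate for unitary representations (a bounded self-adjoint operator commuting with a topologically irreducible unitary representation is a scalar — proved via the spectral theorem, since its spectral projections would be invariant closed subspaces, hence $0$ or $I$), $T=\lambda I$ with $\lambda>0$; then $\lambda^{-1}T=I$ is compact, which is impossible unless $\dim\mathcal{H}<\infty$.

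The one non-formal point, and the step I expect to be the main obstacle, is (iv), the compactness of $T$. One can argue that $T$ is a norm-limit of its Riemann sums $\sum_i\mu(E_i)\,\pi(g_i)P\pi(g_i)^{-1}$, each of which is finite rank, so $T$ lies in the norm closure of the finite-rank operators; the convergence is genuinely in operator norm (not merely strong) precisely because the integrand is uniformly norm-continuous on the compact group $G$. A cleaner route, which I would probably present, bypasses this approximation altogether: $T$ is positive, and summing $\langle Te_n,e_n\rangle$ over an orthonormal basis and interchanging sum and integral (legitimate by Tonelli, all terms being nonnegative) gives $\mathrm{tr}\,T=\int_G\mathrm{tr}\big(\pi(g)P\pi(g)^{-1}\big)\,dg=\int_G\mathrm{tr}\,P\,dg=1<\infty$, so $T$ is trace class and a fortiori compact. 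Everything else — invariance of the Haar measure, the positivity computation, Schur's lemma, and the fact that a nonzero multiple of the identity cannot be compact on an infinite-dimensional Hilbert space (its unit ball fails to be norm-compact) — is routine.
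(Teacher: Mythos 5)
Your proof is correct. The paper itself offers no argument for this lemma --- it is stated as a known fact with a citation to Folland --- and the argument you give (average a rank-one projection over Haar measure to produce a nonzero positive trace-class intertwiner, invoke Schur's lemma for topologically irreducible unitary representations, and observe that a compact identity operator forces finite dimension) is precisely the standard proof found in that reference, with the compactness step handled cleanly via the trace computation. Your preliminary remark reducing to the unitary case is also the right thing to say here, since the little-group representation in the paper is unitary by construction, being the restriction of the unitary representation $U$ to the invariant subspace $\mathcal{H}_k$.
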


The following procedure is just the general procedure for writing down the induced representations. 

\begin{defi}(Representation of little group)\\
We write $D: \mathcal{W} \to \mathcal{U}(\mathcal{H}_k)$ the linear unitary representation of the little group (subgroup of the double cover) restricted from $U$. Then
\begin{equation}
U(W) \Psi_{k, \sigma}
=\sum_{\sigma'} D_{\sigma' \sigma}(W) \Psi_{k, \sigma'}
\quad\quad \forall W\in \mathcal{W}(k)
\end{equation}
\end{defi}

By direct computation we have
\begin{equation}
    \begin{aligned}
    U(\Lambda, a) \Psi_{p, \sigma} 
    & = e^{-ip\cdot a} N(p) U(\Lambda L(p)) \Psi_{k, \sigma} \\
    & = e^{-ip\cdot a} N(p) U(L(\lambda(\Lambda) p)) U(L^{-1}(\lambda(\Lambda) p) \Lambda L(p)) \Psi_{k, \sigma} 
    \end{aligned}
\end{equation}
where we use the abbreviation $U(\Lambda,0) = U(\Lambda)$ from now on.

We now write
\begin{equation}
W(\Lambda, p) \equiv L^{-1}(\lambda(\Lambda) p) \Lambda L(p)
\end{equation}
and then
\\\\
\fbox{%
  \parbox{1\textwidth}{
\begin{equation}\label{single particle state}
\begin{aligned}
U(\Lambda, a) \Psi_{p, \sigma}
&= e^{-ip\cdot a} N(p) \sum_{\sigma'} D_{\sigma' \sigma}(W(\Lambda, p)) U(L(\lambda(\Lambda) p)) \Psi_{k, \sigma'}
\\
&= e^{-ip\cdot a} 
(\frac{N(p)}{N(\lambda(\Lambda) p)}) \sum_{\sigma'} D_{\sigma' \sigma}(W(\Lambda, p)) \Psi_{\lambda(\Lambda) p, \sigma'}
\end{aligned}
\end{equation}
}
}
\\\\

The above formula can be regarded as the main result of this subsection.

We may choose the Lorentz invariant normalization 
\begin{equation}
    \begin{aligned}
        N(p) &= \sqrt{m/p^0} \quad\text{massive case}\\
        N(p) &= \sqrt{1/p^0} \quad\text{massless case}
        \\
        (\Psi_{p,\sigma}, \Psi_{p', \sigma'}) &= \delta_{\sigma,\sigma'} \delta^3(\mathbf{p'}-\mathbf{p})
    \end{aligned}
\end{equation}

\subsection{Little group--Spin and Helicity of states}

\subsubsection{Massive particles}

In this case the small group is $SU(2)$. As we have seen in the previous section, its irr rep is indexed by a single non-negative half-integer $j$.

\begin{defi}(Spin of state)\\
    The half-integer $j$ is called the spin of the states. 
\end{defi}

In this case the standard transformation $L(p)$ can be chosen as:

\begin{defi}(Standard Lorentz transformation)\\
On Lorentz group level standard transformation can be chosen as
\begin{equation}
    \lambda(L(p)) \equiv
    \begin{pmatrix}
        \gamma & \frac{p_1}{M} & \frac{p_2}{M} & \frac{p_3}{M} \\
         \frac{p_1}{M} & 1 + (\gamma-1)\frac{p_1 p_1}{|\vec{p}|^2} & (\gamma-1)\frac{p_1 p_2}{|\vec{p}|^2} & (\gamma-1)\frac{p_1 p_3}{|\vec{p}|^2} \\
         \frac{p_2}{M} & (\gamma-1)\frac{p_2 p_1}{|\vec{p}|^2} & 1 + (\gamma-1)\frac{p_2 p_2}{|\vec{p}|^2} & (\gamma-1)\frac{p_2 p_3}{|\vec{p}|^2} \\
         \frac{p_3}{M} & (\gamma-1)\frac{p_3 p_1}{|\vec{p}|^2} & (\gamma-1)\frac{p_3 p_2}{|\vec{p}|^2} & 1 + (\gamma-1)\frac{p_3 p_3}{|\vec{p}|^2}
    \end{pmatrix}
    ,\quad
    \gamma\equiv p^0/M
\end{equation}
It is easy to verify that it is Lorentz and it takes $(M,0,0,0)$ to $p^\mu$.  
Define $\cosh\theta \equiv \gamma = p^0/M$ and $\widehat{p}$ is the unit-normalized 3-vector of $\vec{p}$. We see
\begin{equation*}
    L(\theta) = 
        \begin{pmatrix}
            \cosh\theta  &  \widehat{p_1}\sinh\theta & \widehat{p_2}\sinh\theta & \widehat{p_3}\sinh\theta
            \\
            \widehat{p_1}\sinh\theta  &  1+ (\cosh\theta-1)\widehat{p_1}\widehat{p_1}
            &  (\cosh\theta-1)\widehat{p_1}\widehat{p_2} & (\cosh\theta-1)\widehat{p_1}\widehat{p_3}
            \\
            \widehat{p_2}\sinh\theta  &  (\cosh\theta-1)\widehat{p_2}\widehat{p_1}
            &  1+(\cosh\theta-1)\widehat{p_2}\widehat{p_2} & (\cosh\theta-1)\widehat{p_2}\widehat{p_3}
            \\
            \widehat{p_3}\sinh\theta  &  (\cosh\theta-1)\widehat{p_3}\widehat{p_1}
            &  (\cosh\theta-1)\widehat{p_3}\widehat{p_2} & 1+(\cosh\theta-1)\widehat{p_3}\widehat{p_3}
        \end{pmatrix}
    \end{equation*}
    Then we can verify $L(\theta')L(\theta) = L(\theta'+\theta)$. So $\theta\mapsto L(\theta)$ is a one-parameter subgroup for the Lie group $SO^+(1,3)$. And that 
    $\frac{d}{d\theta}L(\theta)|_{\theta=0}=\widehat{\mathbf{p}} \cdot \mathbf{K} \theta$ is obvious. So by the commutativity of exponential map with Lie group-Lie algebra morphism, its universal covering level can be chosen as
    \begin{equation}
        L(p) = \exp(-i \widehat{p}\cdot \vec{K}\theta)
    \end{equation}
    where we identify $so(1,3)$ as the Lie algebra of $SL(2,\mathbb{C})$ and the exponential is taken to be that of $SL(2,\mathbb{C})$. 
\end{defi}

\subsubsection{Massless particles}

The little group on the Poincare group level is discussed in \cite{weinberg2002quantum}, which is $\lambda(\mathcal{W})=ISO(2)$, in which the general group elements are
\begin{equation}
    \tilde{W}(\theta, \alpha, \beta) = \tilde{S}(\alpha, \beta) \tilde{R}(\theta)
\end{equation}
where
\begin{equation}
    \tilde{S}(\alpha, \beta) \equiv
    \begin{pmatrix}
        1 + \zeta & \alpha & \beta & -\zeta\\
         \alpha & 1 & 0 & -\alpha\\
         \beta & 0 & 1 & -\beta \\
         \zeta & \alpha & \beta & 1 -\zeta \\
    \end{pmatrix}, \quad
    \zeta \equiv (\alpha^2+\beta^2)/2
\end{equation}

\begin{equation}
    \tilde{R}(\theta) \equiv
    \begin{pmatrix}
        1 & 0 & 0 & 0\\
        0 & \cos\theta & \sin\theta & 0 \\
        0 & -\sin\theta & \cos\theta & 0 \\
        0 & 0 & 0 & 1
    \end{pmatrix}
\end{equation}

Actually this can be easily seen at the level of the universal covering. Recall that the universal mapping is defined by
\begin{equation*}
    P(A): h\to h, X\mapsto A^\dagger XA
\end{equation*}
where we identify $h$ with $\mathbb{R}^{1,3}$ by 
\begin{equation*}
    X = 
    \begin{pmatrix}
        t+z & x+iy\\
        x-iy & t-z
    \end{pmatrix}
    \mapsto (t,x,y,z)
\end{equation*}

Solving
$$
\begin{pmatrix}
    x^* & z^*\\
    y^* & w^*
\end{pmatrix}
\begin{pmatrix}
    2 & 0\\
    0 & 0
\end{pmatrix}
\begin{pmatrix}
    x & y\\
    z & w
\end{pmatrix}
=
\begin{pmatrix}
    2 & 0\\
    0 & 0
\end{pmatrix}
$$
we get $|x|=1,y=0$. So 
\begin{equation}
    \mathcal{W} = \{ 
    W(\theta, \alpha, \beta) \equiv 
    \begin{pmatrix} 
    e^{-i\theta/2} & 0\\
    e^{-i\theta/2}(\alpha-i\beta) & e^{i\theta/2}
    \end{pmatrix}
    \}
\end{equation}
And one can verify 
$$
\lambda(
W(\theta, \alpha, \beta)
)
=
\tilde{W}(\theta, \alpha, \beta)
$$
by checking its action on $h$. defining $S(\alpha,\beta)=W(0,\alpha,\beta), R(\theta)=W(\theta,0,0)$, we have
\begin{equation}
    \begin{aligned}
    S(\alpha',\beta') S(\alpha,\beta) &= S(\alpha' + \alpha,\beta' + \beta)\\
    R(\theta') R(\theta) &= R(\theta' + \theta)\\
    R(\theta) S(\alpha,\beta) R^{-1}(\theta) 
    &= S(\alpha\cos\theta +\beta\sin\theta, -\alpha\sin\theta +\beta\cos\theta)
    \end{aligned}
\end{equation}

The Lie algebra is spanned by
\footnote{Weinberg in \cite{weinberg2002quantum} has a typo.}
\begin{equation}
    \begin{aligned}
        J_2-K_1\\
        -J_1-K_2\\
        J_3
    \end{aligned}
\end{equation}

For some reason, we only consider
\footnote{See \cite{weinberg2002quantum}}
the one-dimensional representation of $\mathcal{W}$ defined by
\begin{equation}
    D^{\sigma}(W(\theta, \alpha, \beta)) = \exp(i\theta \sigma)
\end{equation}
In order to make it well defined, it is equivalent to $\sigma$ being a half integer.

\begin{defi}(Helicity of state)\\
    The half integer $\sigma$ is called the helicity of states.  
\end{defi}

\begin{rmk}
    The assumption that we only consider the representation of $\mathcal{W}$ that is trivial on $S(\alpha,\beta)$'s is intriguing. Maybe I will discuss it in a later article. 
\end{rmk}

\begin{defi}(Standard Lorentz transformation)\\
    For a momentum $p^\mu$ on the mass shell, on Lorentz group level we can choose
    \begin{equation}
        \lambda(L(p)) = R(\widehat{p}) B(|\vec{p}|)
    \end{equation}
    where $R(\widehat{p})$ is a rotation from z-axis to $\widehat{p}$ and $B(|\vec{p}|)$ is the standard boost along z-axis:
    \begin{equation}
        B(u) \equiv
        \begin{pmatrix}
            1 & 0 & 0 & 0\\
            0 & 1 & 0 & 0\\
            0 & 0 & (u^2+1)/2u & (u^2-1)/2u\\
            0 & 0 & (u^2-1)/2u & (u^2+1)/2u
        \end{pmatrix}
    \end{equation}
     Write
    \begin{equation}
        p=(|p|, |p|\sin\theta\cos\phi, |p|\sin\theta\sin\phi, |p|\cos\theta)
    \end{equation}
    and on universal covering level we can choose
    \begin{equation}
        L(p) = \exp(-i\phi J_3) \exp(-i\theta J_2) \exp(-i \ln(|\vec{p}|) K_3)
    \end{equation}
    where we identify $so(1,3)$ as the Lie algebra of $SL(2,\mathbb{C})$ and the exponential is taken to be that of $SL(2,\mathbb{C})$. 
\end{defi}

\subsection{Fock space, creation and annilation operators}

\begin{ax}(Fock space)\\
    The whole Hilbert space is the Fock space defined below
    \begin{equation}
        \mathcal{F} = \mathcal{F}_\nu(\mathcal{H}) = \overline{\bigoplus_{n=0}^\infty S_\nu(\mathcal{H}^{\otimes n}) }
    \end{equation}
    Here $\nu=\pm$ and $S_\nu$ is the symmetric or anti-symmetric quotient of a space, and overline denotes Hilbert space completion. 
    
    It is equipped with a representation of $\mathbb{R}^{1,3}\rtimes SL(2,\mathbb{C})$, constructed by the representation on the single particle space via commutative/anti-commutative square and direct sum. Notice that it do not carry a projective representation of the Poincare group\footnote{It is misstated in \cite{weinberg2002quantum}. It cost me a lot of time to realize it.}, even if the single particle space does. This comes from the fact that projective representations cannot make direct sums in general. 
\end{ax}

It is interesting to see that the division into boson and fermion are at the beginning of the theory, entered as an axiom. 

From now on we denote by $D^{(j)}$, instead of $D$, the (finite dimensional) linear representation of the little group $\mathcal{W} = \mathcal{W}(k)$ on  
$\mathcal{H}_k$. The letter $D$ is left to another representation of the Lorentz group, as we will meet in section 3. Also, from now on we denote by $U_0$, instead of $U$, the projective representation in Axiom 1. The reason for saving the letter $U$ is when considering $S$ matrix, we meet two representations satisfying Axiom 1, corresponding to the whole Hamiltonian $H$ and the free Hamiltonian $H_0$ respectively. 

We just list all the results needed in the remaining part of the article.

\begin{defi}$\quad$\\
    Define creation operators as
    \begin{equation}
    a^{\dagger}(q) \Phi_{q_1 q_2 \cdots q_n} \equiv \Phi_{q q_1 q_2 \cdots q_N}
    \end{equation}
    Its adjoint is then called $a(q)$, the annilation operators. 
\end{defi}
\begin{prop}$\quad$\\
    When the particles $q q_1 \cdots q_N$ are either all bosons or all fermions, we have
    \begin{equation}
    a(q) \Phi_{q_1 q_2 \cdots q_N}=\sum_{r=1}^N(\pm)^{r+1} \delta(q-q_r) \Phi_{q_1 \cdots q_{r-1} q_{r+1} \cdots q_N},
    \end{equation}
\end{prop}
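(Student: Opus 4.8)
The plan is to pin down $a(q)\Phi_{q_1\cdots q_N}$ by computing all of its inner products against the generating vectors of the $(N-1)$-particle sector and invoking the definition of $a(q)$ as the Hilbert-space adjoint of $a^{\dagger}(q)$. The first ingredient I would record is the explicit form of the inner product on $S_\nu(\mathcal{H}^{\otimes n})$ in the delta-function normalization fixed above: on generating vectors
\begin{equation}
(\Phi_{p_1 \cdots p_n}, \Phi_{q_1 \cdots q_n}) = \sum_{\pi \in S_n} (\pm)^{\pi} \prod_{i=1}^{n} \delta(p_i - q_{\pi(i)}),
\end{equation}
with $(\pm)^{\pi} = 1$ for bosons ($\nu = +$) and $(\pm)^{\pi} = \mathrm{sgn}(\pi)$ for fermions ($\nu = -$), and with vectors of different particle number orthogonal. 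This is precisely the statement that $\Phi_{q_1 \cdots q_n}$ is the (anti)symmetrization of the product state, and it is here that the hypothesis ``all bosons or all fermions'' is essential: a single sign rule must govern all $N+1$ arguments $q, q_1, \ldots, q_N$ at once.

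Next, since $a^{\dagger}(q)$ raises the particle number by one, its adjoint $a(q)$ lowers it by one, so $a(q)\Phi_{q_1 \cdots q_N} \in S_\nu(\mathcal{H}^{\otimes (N-1)})$ and it suffices to evaluate $(\Phi_{p_1 \cdots p_{N-1}}, a(q)\Phi_{q_1 \cdots q_N})$ for arbitrary $p_1, \ldots, p_{N-1}$. By the adjoint property and the definition of $a^{\dagger}$,
\begin{equation}
(\Phi_{p_1 \cdots p_{N-1}}, a(q)\Phi_{q_1 \cdots q_N}) = (a^{\dagger}(q)\Phi_{p_1 \cdots p_{N-1}}, \Phi_{q_1 \cdots q_N}) = (\Phi_{q\, p_1 \cdots p_{N-1}}, \Phi_{q_1 \cdots q_N}).
\end{equation}

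I would then expand the last bracket with the permutation formula (now with $n = N$, the first list being $q, p_1, \ldots, p_{N-1}$) and reorganize the sum over $\pi \in S_N$ according to the value $r = \pi(1) \in \{1, \ldots, N\}$ assigned to the slot carrying $q$. That slot contributes the factor $\delta(q - q_r)$; the restriction of $\pi$ to the remaining slots is, up to an order-preserving shuffle, an arbitrary permutation $\sigma \in S_{N-1}$ between $p_1, \ldots, p_{N-1}$ and $q_1, \ldots, \widehat{q_r}, \ldots, q_N$, and $\mathrm{sgn}(\pi) = (-1)^{r-1}\,\mathrm{sgn}(\sigma)$ because bringing $q_r$ to the front is an $r$-cycle. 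Summing over $\sigma$ reassembles $(\Phi_{p_1 \cdots p_{N-1}}, \Phi_{q_1 \cdots \widehat{q_r} \cdots q_N})$, and since $(-1)^{r-1} = (-1)^{r+1}$ one obtains
\begin{equation}
(\Phi_{p_1 \cdots p_{N-1}}, a(q)\Phi_{q_1 \cdots q_N}) = \sum_{r=1}^{N} (\pm)^{r+1}\, \delta(q - q_r)\, (\Phi_{p_1 \cdots p_{N-1}}, \Phi_{q_1 \cdots \widehat{q_r} \cdots q_N}),
\end{equation}
which is exactly the pairing of $\Phi_{p_1 \cdots p_{N-1}}$ with the claimed right-hand side. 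As the generating vectors span a dense subspace of the $(N-1)$-particle sector, the asserted operator identity follows; the degenerate cases $N = 0$ (so that $a(q)$ annihilates the vacuum) and $N = 1$ should be checked separately and are immediate.

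The main obstacle is entirely a matter of careful bookkeeping with signs: verifying that isolating $\delta(q - q_r)$ from the permutation sum costs precisely the parity $(-1)^{r-1}$ and leaves behind a clean, sign-free copy of the lower inner product. A secondary point, already flagged in the text, is that this whole argument is conducted in the physicists' delta-normalization rather than with honest measures, so phrases like ``spanning set'' and the symbol $\delta(q - q')$ are used formally; a rigorous version would replace the generating vectors by genuine wave packets in $S_\nu(\mathcal{H}^{\otimes n})$ and re-run the same permutation combinatorics, which goes through unchanged.
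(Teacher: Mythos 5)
Your proposal is correct. The paper states this proposition without proof (it merely ``lists the results needed,'' deferring to Weinberg), and your argument is exactly the standard one that fills that gap: define $a(q)$ as the adjoint of $a^{\dagger}(q)$, pair $a(q)\Phi_{q_1\cdots q_N}$ against generating vectors of the $(N-1)$-particle sector, expand $(\Phi_{q\,p_1\cdots p_{N-1}},\Phi_{q_1\cdots q_N})$ as the permutation sum, and group terms by $r=\pi(1)$, with the $r$-cycle correctly accounting for the sign $(\pm)^{r+1}$; your remark that the single-statistics hypothesis is what licenses one uniform sign rule, and that the whole computation is formal in the $\delta$-normalization, matches the paper's level of rigor.
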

\begin{prop}(Commutation or anti-commutation relation of creation and annilation)\label{commutation of CA}
    \begin{equation}
    \begin{aligned}
        a(q') a^{\dagger}(q) \mp a^{\dagger}(q) a(q') &=\delta(q'-q)\\
        a^{\dagger}(q') a^{\dagger}(q) \mp a^{\dagger}(q) a^{\dagger}(q') &=0\\
        a(q') a(q) \mp a(q) a(q') &=0
    \end{aligned}
    \end{equation}
\end{prop}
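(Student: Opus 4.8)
The plan is to verify all three identities by evaluating both sides on an arbitrary basis vector $\Phi_{q_1 q_2\cdots q_n}$ of the dense subspace of finite-particle states, using only the definition $a^{\dagger}(q)\Phi_{q_1\cdots q_n}=\Phi_{q q_1\cdots q_n}$ together with the explicit action of $a(q)$ from the preceding proposition; by linearity and density this is enough, with the understanding that the $\delta$'s are smeared against test functions so that the objects involved are the usual operator-valued distributions on a common dense invariant domain.

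The second relation is immediate from the (anti)symmetry built into the Fock space: since $\Phi_{q q_1\cdots q_n}\in S_\nu(\mathcal{H}^{\otimes(n+1)})$, transposing the first two slots multiplies the vector by $\nu=\pm$, so $\Phi_{q' q q_1\cdots q_n}=\pm\,\Phi_{q q' q_1\cdots q_n}$. Applying $a^{\dagger}(q')a^{\dagger}(q)$ and $a^{\dagger}(q)a^{\dagger}(q')$ to $\Phi_{q_1\cdots q_n}$ produces exactly these two vectors, whence $a^{\dagger}(q')a^{\dagger}(q)\mp a^{\dagger}(q)a^{\dagger}(q')=0$. Taking the adjoint of this identity, using $a(q)=(a^{\dagger}(q))^{\dagger}$, $(AB)^{\dagger}=B^{\dagger}A^{\dagger}$, and that $\nu$ is real, immediately gives the third relation $a(q')a(q)\mp a(q)a(q')=0$.

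The first relation is the one that needs real work. Applying $a^{\dagger}(q)a(q')$ to $\Phi_{q_1\cdots q_n}$, the preceding proposition gives $a(q')\Phi_{q_1\cdots q_n}=\sum_{r=1}^n(\pm)^{r+1}\delta(q'-q_r)\Phi_{q_1\cdots q_{r-1}q_{r+1}\cdots q_n}$, and then prepending $q$ yields $\sum_{r=1}^n(\pm)^{r+1}\delta(q'-q_r)\Phi_{q q_1\cdots q_{r-1}q_{r+1}\cdots q_n}$. For the other ordering, $a(q')a^{\dagger}(q)\Phi_{q_1\cdots q_n}=a(q')\Phi_{q q_1\cdots q_n}$; applying the same proposition to this $(n+1)$-particle state, whose first slot carries $q$, splits off the contraction term $(\pm)^2\delta(q'-q)\Phi_{q_1\cdots q_n}=\delta(q'-q)\Phi_{q_1\cdots q_n}$ plus, for the slots originally labelled $q_1,\dots,q_n$, the terms $\sum_{r=1}^n(\pm)^{r+2}\delta(q'-q_r)\Phi_{q q_1\cdots q_{r-1}q_{r+1}\cdots q_n}$. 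Subtracting the $\delta(q'-q)\Phi$ term, each surviving contribution is $\delta(q'-q_r)\Phi_{q q_1\cdots q_{r-1}q_{r+1}\cdots q_n}$ with coefficient $(\pm)^{r+2}$ in one expression and $(\pm)^{r+1}$ in the other; since $(\pm)^{r+2}=(\pm)(\pm)^{r+1}$, the two sums coincide in the bosonic case and are negatives of each other in the fermionic case. Hence $a(q')a^{\dagger}(q)\Phi_{q_1\cdots q_n}-\delta(q'-q)\Phi_{q_1\cdots q_n}$ equals $+a^{\dagger}(q)a(q')\Phi_{q_1\cdots q_n}$ for bosons and $-a^{\dagger}(q)a(q')\Phi_{q_1\cdots q_n}$ for fermions, which is precisely $a(q')a^{\dagger}(q)\mp a^{\dagger}(q)a(q')=\delta(q'-q)$.

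The main obstacle is purely combinatorial bookkeeping: tracking how the alternating sign in the annihilation-operator sum shifts (exponent $r+1\mapsto r+2$) once a particle has been prepended, and checking that the $r=1$ term reproduces exactly $\delta(q'-q)$ with coefficient $(\pm)^2=1$ while the remaining terms cancel or reinforce with the right overall sign. A secondary point deserving a remark is rigour: these are unbounded operator-valued distributions, so the identities are to be read after smearing against test functions on the dense invariant domain of finite-particle vectors, and the ``all bosons or all fermions'' hypothesis of the preceding proposition holds automatically here since $\mathcal{F}_\nu$ is a single-species Fock space.
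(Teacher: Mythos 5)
Your proof is correct. The paper itself states Proposition~\ref{commutation of CA} without proof (the surrounding text only says ``we just list all the results needed''), so there is nothing in the source to compare against; your argument is the standard direct computation on the dense subspace of finite-particle vectors, matching what Weinberg does in the reference \cite{weinberg2002quantum}. The bookkeeping is handled correctly: you prove the second relation from the $\nu$-(anti)symmetry of $S_\nu(\mathcal{H}^{\otimes(n+1)})$ under transposition of the first two slots, obtain the third by taking adjoints, and for the first relation you correctly isolate the $r=1$ contraction term $(\pm)^2\delta(q'-q)\Phi_{q_1\cdots q_n}=\delta(q'-q)\Phi_{q_1\cdots q_n}$ from $a(q')\Phi_{qq_1\cdots q_n}$ and observe the one-unit shift $(\pm)^{r+1}\mapsto(\pm)^{r+2}$ in the remaining terms relative to $a^\dagger(q)a(q')\Phi_{q_1\cdots q_n}$, which is exactly the factor $\mp$ in the bracket. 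Your closing remark about operator-valued distributions and the ``all bosons or all fermions'' hypothesis being automatic for a single-species $\mathcal{F}_\nu$ is a reasonable observation; the only thing you might additionally note is that when several species are present (as later in the paper, with index $n$), one uses the convention that unlike fields commute for boson--boson pairs and anticommute for fermion--fermion pairs, so the same calculation applies species by species.
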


\begin{prop}(Lorentz transformation of creation and annilation)\label{Lorentz of CA}\\
    In order that the state (4.2.2) should transform properly, it is necessary and sufficient that the creation operator have the transformation rule
    \begin{equation}
        U_0(\Lambda, \alpha) a^{\dagger}(\mathbf{p} \sigma n) U_0^{-1}(\Lambda, \alpha)
        = e^{-i(\lambda(\Lambda) p) \cdot \alpha} \sqrt{(\lambda(\Lambda) p)^0 / p^0}
        \times \sum_{\bar{\sigma}} 
        D_{\bar{\sigma} \sigma}^{(j)}(W(\Lambda, p)) a^{\dagger}(\mathbf{p}_{\Lambda} \bar{\sigma} n)
    \end{equation}
\end{prop}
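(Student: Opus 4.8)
The plan is to read the transformation rule for $a^\dagger$ off the one–particle law \eqref{single particle state} together with the defining property of the Fock–space representation, and conversely; once the setup is arranged correctly, everything reduces to bookkeeping of little–group indices. The ingredients, all available above, are: (i) the one–particle law \eqref{single particle state}, which with the Lorentz–invariant choice $N(p)=\sqrt{M/p^0}$ (resp. $\sqrt{1/p^0}$) gives $N(p)/N(\lambda(\Lambda)p)=\sqrt{(\lambda(\Lambda)p)^0/p^0}$, so that (after the harmless reparametrisation of the translation matching the convention for $(\Lambda,\alpha)$ used in the statement) $U_0(\Lambda,\alpha)\Psi_{p\sigma n}=e^{-i(\lambda(\Lambda)p)\cdot\alpha}\sqrt{(\lambda(\Lambda)p)^0/p^0}\sum_{\bar\sigma}D^{(j)}_{\bar\sigma\sigma}(W(\Lambda,p))\,\Psi_{\mathbf p_\Lambda\bar\sigma n}$; (ii) the axiom defining $U_0$ on $\mathcal F$, which says that the $N$–particle sector carries the $S_\nu$–symmetrised $N$–th tensor power of the one–particle representation, and in particular that each $U_0(\Lambda,\alpha)$ is an honest (non–projective) unitary operator on $\mathcal F$ — this is exactly what makes the conjugation $U_0\,a^\dagger(q)\,U_0^{-1}$ literally meaningful and the whole statement well posed; (iii) $a^\dagger(q)\Phi_{q_1\cdots q_N}=\Phi_{q q_1\cdots q_N}$, and the fact that the $n=0$ state $\Phi_0$ carries the trivial representation, so $U_0(\Lambda,\alpha)\Phi_0=\Phi_0$.

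First I settle $N=0$. Since $\Phi_q=a^\dagger(q)\Phi_0$ and $U_0$ fixes $\Phi_0$,
\begin{equation*}
\begin{aligned}
U_0(\Lambda,\alpha)\,a^\dagger(\mathbf p\sigma n)\,U_0^{-1}(\Lambda,\alpha)\,\Phi_0
&=U_0(\Lambda,\alpha)\,\Psi_{p\sigma n}\\
&=e^{-i(\lambda(\Lambda)p)\cdot\alpha}\sqrt{(\lambda(\Lambda)p)^0/p^0}\sum_{\bar\sigma}D^{(j)}_{\bar\sigma\sigma}(W(\Lambda,p))\,a^\dagger(\mathbf p_\Lambda\bar\sigma n)\,\Phi_0,
\end{aligned}
\end{equation*}
so the claimed operator identity holds on the vacuum. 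To promote it to all of $\mathcal F$ I test both sides against the total family $\{\Phi_{q_1\cdots q_N}\}$, which spans a dense subspace, using that both sides are bounded (a unitary conjugate of $a^\dagger(q)$, resp. a finite sum of creation operators). For the "necessary" direction one computes $U_0(\Lambda,\alpha)\,a^\dagger(q)\,U_0^{-1}(\Lambda,\alpha)\,\Phi_{q_1\cdots q_N}$ by expanding $U_0^{-1}(\Lambda,\alpha)\Phi_{q_1\cdots q_N}$ with the symmetrised tensor–power formula of (ii), prepending $q$ via (iii), then applying $U_0(\Lambda,\alpha)$ again by the same formula; the spectator factors attached to $q_1,\dots,q_N$ telescope — the phases cancel because $\alpha\mapsto e^{-i(\lambda(\Lambda)p)\cdot\alpha}$ is a character, the energy square roots cancel, and the Wigner matrices cancel via $\sum_{\bar\sigma_i}D^{(j_i)}(W(\Lambda,p_i))D^{(j_i)}(W(\Lambda^{-1},\lambda(\Lambda)p_i))=\mathrm{id}$ — leaving precisely $e^{-i(\lambda(\Lambda)p)\cdot\alpha}\sqrt{(\lambda(\Lambda)p)^0/p^0}\sum_{\bar\sigma}D^{(j)}_{\bar\sigma\sigma}(W(\Lambda,p))\,\Phi_{(\mathbf p_\Lambda\bar\sigma n)\,q_1\cdots q_N}$, which is the right–hand side operator applied to $\Phi_{q_1\cdots q_N}$. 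Reading the same chain backwards yields the "sufficient" direction: assuming the displayed rule, one gets $U_0(\Lambda,\alpha)\Phi_{q_1\cdots q_N}=\prod_i\big(U_0(\Lambda,\alpha)\,a^\dagger(q_i)\,U_0^{-1}(\Lambda,\alpha)\big)\Phi_0$, which reproduces the symmetrised tensor–power law, i.e. the states transform properly.

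The one genuinely structural input, which I would isolate as a short lemma beforehand, is the cocycle identity for the Wigner rotations, $W(\Lambda_1\Lambda_2,p)=W(\Lambda_1,\lambda(\Lambda_2)p)\,W(\Lambda_2,p)$, hence $W(\Lambda^{-1},\lambda(\Lambda)p)=W(\Lambda,p)^{-1}$; this follows at once from $W(\Lambda,p)=L^{-1}(\lambda(\Lambda)p)\,\Lambda\,L(p)$ and the fact that $\lambda$ is a homomorphism. One also records that the species label $n$ is inert under $U_0$, so that $D^{(j)}$ is just the little–group representation attached to species $n$.

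I expect the main obstacle to be purely combinatorial: carrying the nested sums over $\bar\sigma_1,\dots,\bar\sigma_N$ through the three successive applications of $U_0^{\pm1}$ and checking that each of the three kinds of factor (character phase, energy square root, $D$–matrix) telescopes against the ones produced by $U_0^{-1}$. There is no analytic difficulty — boundedness and density of $\{\Phi_{q_1\cdots q_N}\}$ are all that is needed, both guaranteed by the Fock–space axiom — but the index bookkeeping must be done carefully, and it is worth stressing (as the axiom itself warns) that the argument would collapse if one tried to run it with the merely projective Poincaré representation, since then undetermined phases would obstruct both the conjugation and the telescoping.
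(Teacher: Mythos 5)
The paper itself does not prove this proposition --- it is merely stated in Section~2.3 as one of several results ``listed'' from \cite{weinberg2002quantum} --- so there is no ``paper's own proof'' to compare against; the judgment must rest on whether your argument stands on its own, and it does. The structure is sound: establish the identity on the vacuum from the one-particle law \eqref{single particle state}, then verify it on all $\Phi_{q_1\cdots q_N}$ by conjugation, and read the converse off the same chain. You are also right to flag at the outset that the conjugation is only meaningful because $U_0$ is a genuine unitary on $\mathcal F$, not a projective one --- that is precisely the correction to Weinberg that this paper emphasizes, and your proof is the first place it actually bites.

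Two remarks that would tighten it. First, the telescoping-with-cocycle step is heavier than it needs to be: on the $N$-particle sector $U_0(\Lambda,\alpha)$ is (the $S_\nu$-restriction of) $U(\Lambda,\alpha)^{\otimes N}$, so when you compute $U_0\,a^{\dagger}(q)\,U_0^{-1}\Phi_{q_1\cdots q_N}$ each spectator factor simply sees $U(\Lambda,\alpha)\circ U^{-1}(\Lambda,\alpha)=\mathrm{id}$, with no need to unpack that into phases, square roots and $D$-matrices and then watch them cancel. The identity $W(\Lambda,\lambda(\Lambda^{-1})p)\,W(\Lambda^{-1},p)=\mathbf 1$ is not wrong --- it is exactly the cocycle relation and does hold --- but it is a consequence of $U$ being a representation, not an independent lemma one must invoke. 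Phrasing it as ``the Fock functor intertwines creation operators'' makes the necessary direction a one-liner and the sufficient direction transparent. Second, the appeal to ``boundedness'' of $U_0\,a^{\dagger}(q)\,U_0^{-1}$ is not quite honest: $a^{\dagger}(\mathbf p,\sigma,n)$ at sharp momentum is an operator-valued distribution, not a bounded operator, and the $\delta$-normalization in the paper already concedes this. Within the paper's declared level of rigor this is fine, but it would be cleaner to say simply that both sides agree on the total set $\{\Phi_{q_1\cdots q_N}\}$, which is all the proposition is asserting.
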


\subsection{Discussions}

\begin{itemize}

\item

    Unlike what Weinberg have done to single particle state, where we can still write down a projective version, when writing down the Lorentz transformation of creation and annilation operators we do not have a projective version at all. Because we cannot have a projective rep of Poincare group on Fock space, either from constructing direct sum representation or from descending linear rep of $\mathbb{R}^{1,3}\rtimes SL(2,\mathbb{C})$ on Fock space. In summary:

    \begin{equation*}
        \xymatrix{
        \text{Linear irrep of $\mathbb{R}^{1,3}\rtimes SL(2,\mathbb{C})$ on $\mathcal{H}$} 
        \ar@/^/[d]^{\text{Schur lemma}}\ar[r]^{\text{direct sum of linear reps}}
        & \text{Linear rep of $\mathbb{R}^{1,3}\rtimes SL(2,\mathbb{C})$ on $\mathcal{F}$}\ar[d]|X\\
        \text{Projective irrep of Poincare group on $\mathcal{H}$}\ar@/^/[u]^{\text{Bargmann theorem}} \ar[r]|X
        & \text{Projective rep of Poincare group on $\mathcal{F}$}
        }
    \end{equation*}

    Consider $-I$ in universal cover for example. If it represents to $-I$ on single particle space, then it represents to $+I$ on 2-particles space. So it does not direct sum into a phase. So we cannot descend it into a projective rep of Poincare group.

\item

    I can explain in detail why we cannot construct direct sum for projective reps. For two reps $\rho_1:G\to GL(V_1),\rho_2:G\to GL(V_2)$, define its direct sum
    \begin{equation}
        \rho(g) = \rho_1(g) \oplus \rho_2(g)
    \end{equation}
    When they are linear reps,
    \begin{equation}
        \rho(gh) = \rho_1(gh) \oplus \rho_2(gh) = \rho_1(g)\rho_1(h) \oplus \rho_2(g)\rho_2(h)
        =\rho(g)\rho(h)
    \end{equation}
    But when they are just projective reps, 
    \begin{equation}
    \begin{aligned}
        \rho(gh)& = \rho_1(gh) \oplus \rho_2(gh) = \omega\rho_1(g)\rho_1(h) \oplus
        \omega' \rho_2(g)\rho_2(h)
        \\
        \rho(g)\rho(h)&=\rho_1(g) \oplus \rho_2(g) \cdot \rho_1(h) \oplus \rho_2(h)
        =\rho_1(g)\rho_1(h) \oplus \rho_2(g)\rho_2(h)
    \end{aligned}
    \end{equation}
    Two phases on two proportions do not always equal, so the LHS and RHS are not equal up to a scalar.

\item
    What is the physical meaning of the representation of Poincare group on state space?

    Let $M$ be the spacetime manifold. A reference frame is a coordinate on the manifold: $\varphi:M\to \mathbb{R}^{1,3}$\footnote{I haven't seen this physical picture involving commutative diagrams. I came up with it alone, motivated by general relativity which uses the language of smooth manifolds. }. The coordinate transformation between two reference frames is, by definition, $\varphi_\beta \circ \varphi_\alpha^{-1}:\mathbb{R}^{1,3}\to \mathbb{R}^{1,3}$.
    $$
    \xymatrix{    
    &M \ar[ld]_{\varphi_\alpha} \ar[rd]^{\varphi_\beta}
    \\
    \mathbb{R}^{1,3} \ar[rr]^{x'=\Lambda x+a}
    & &\mathbb{R}^{1,3}
    }
    $$
    In Schrodinger's picture, there is an abstract state space $\tilde{\mathcal{H}}$, while what we have used is the concrete state space $\mathcal{H}$. Given a coordinate on $M$, $\varphi:M\to \mathbb{R}^{1,3}$, there is a trivialization $U(\varphi): \tilde{\mathcal{H}} \to \mathcal{H}$. Its physical meaning is that for an abstract state in the abstract state space $\tilde{\mathcal{H}}$, what the person in that reference frame measured as a concrete state in $\mathcal{H}$.
    So the physical meaning of the state rep $U$ is: if the coordinate transformation between two reference frames is $\varphi_\beta \circ \varphi_\alpha^{-1}(x)=\Lambda x+a$, for an abstract state $\tilde{\psi}$, the two measured concrete states $\psi$ and $\psi'$ is related by $\psi'=U(\Lambda,a)\psi$. 
    $$
    \xymatrix{         
    &\tilde{\psi} \in \tilde{\mathcal{H}} \ar[ld]_{U(\varphi_\alpha)} \ar[rd]^{U(\varphi_\beta)}
    \\
    \psi \in \mathcal{H} \ar[rr]^{U(\Lambda,a)}
    & &\mathcal{H} \ni \psi'=U(\Lambda,a)\psi
    }
    $$
    Schrodinger equation in QM is a special case of the above point of view. Imagine a ‘present man’ and a ‘future man’. ‘Present man’ is in the reference frame $\varphi_\alpha:M\to \mathbb{R}^{1,3}$, ‘future man’ is in $\varphi_\beta:M\to \mathbb{R}^{1,3}$. For a same spacetime point, future man tend to think of its time-coordinate smaller than what the present man thinks (e.g. what the present man thinks of as $t=0$ might be seen as the past for future man, i.e. $t<0$). So the coordinate transformation is $\varphi_\beta\circ\varphi_\alpha^{-1}(t,\mathbf{x}) = (t-\Delta t,\mathbf{x})$. Yeah, it is subtraction instead of addition. So the two measured concrete states of a same abstract state appear to be related by $\psi'=U(I,-\Delta t)\psi = e^{-iH\Delta t}\psi$, where the "'" stands for future man. This is just the well-known Schrodinger equation!
    In the QM's view, Schrodinger's equation is understood as the time-evolution of a state, but now it is understood as the difference between measured results between the present man and the future man.

\item
    What remains not rigorous?
    
    1. Clearly the $\delta$ function involved in the normalization tells us it must be not rigorous. 
    
    2. The first not rigorous step is choosing common eigenvector of $U(I,a)$'s. SNAG's theorem ensures this on the level of measures. 

    3. We are considering the homomorphism of Poincare group to $\mathcal{U(H)}$ that is continuous on the strong operator topology level. But we treated it like we did in finite dimensional case. For example, induced representation should be discussed differently. 

    But we don't discuss these in this article. 
\end{itemize}

\clearpage
\section{Free quantum fields}

Before moving on to the serious content, we briefly talk about the motivation for the construction of quantum fields. Because it is not the main topic of this article, we do not discuss the precise meaning in detail, and refer to \cite{weinberg2002quantum} for comprehensive discussions. 

First we have one Fock space with two state representations $U_0$ and $U$ as in Axiom 1. Also we have three classes of states $\Phi$, $\Psi^-$ and $\Psi^+$, as well as three classes of creation and annilation operators $a$, $a_{in}$ and $a_{out}$. 

The quantity that directly relates to experiments is the S operator, defined by
\begin{equation*}
    (\Phi_\beta, S \Phi_\alpha) \equiv (\Psi^-_\beta, \Psi^+_\alpha)
\end{equation*}
The core of QFT is to make it Lorentz invariant, i.e.,
\begin{equation*}
    U_0(\Lambda, a)^{-1} S U_0(\Lambda, a) = S
\end{equation*}
The famous Dyson series gives a formula for S operator
\begin{equation*}
    S = \mathcal{T} \exp( -i \int_{-\infty}^{+\infty} dt V(t) )
\end{equation*}
In order for this quantity to be Lorentz invariant, we hope to write
\begin{equation*}
    V(t) = \int d^3x \mathscr{H}(\mathbf{x} , t)
\end{equation*}
and hope that
\begin{itemize}
    \item 
    \begin{equation}
        U_0(\Lambda, a)^{-1} \mathscr{H}(x) U_0(\Lambda, a) = \mathscr{H}(\Lambda x +a)
    \end{equation}
    \item 
    \begin{equation}
        [\mathscr{H}(x), \mathscr{H}(x')] = 0, \quad\quad\text{when $(x-x')^2> 0$}
    \end{equation}
\end{itemize}

Above motivates the philosophy for constructing free quantum fields:
\begin{itemize}
    \item 
    In order for the first to satisfy, we need quantum fields--creation and annilation fields $\psi^\mp(x)$, which satisfy a so called 'Lorentz transformation of fields'. We use polynomials of these to construct $\mathscr{H}(x)$. 
    \footnote{Actually we can also reconstruct $H_0$, see \cite{weinberg2002quantum}. But it is strange that we reconstruct something we already have. }
    \item 
    In order for the second to satisfy as well as the $\mathscr{H}(x)$ being Hermitian, we need to combine these two fields into a single one $\psi_l(x) = \kappa_l \psi^+_l(x) + \lambda_l \psi^-_l(x)$ and hope that
    \begin{equation}
        [\psi_l(x), \psi_{l'}(x')]_\mp = 0, \quad\quad\text{when $(x-x')^2> 0$}
    \end{equation}
    \item 
    For the motivation of the existence of a conservative charge $Q$, we find 
    \begin{equation*}
        \begin{aligned}
        [Q, a(\mathbf{p}, \sigma, n)] &= -q(n) a(\mathbf{p}, \sigma, n)\\
        [Q, a^\dagger(\mathbf{p}, \sigma, n)] &= +q(n) a^\dagger(\mathbf{p}, \sigma, n)
        \end{aligned}
    \end{equation*}
    not satisfactory for our purpose, as the polynomial constructed from $\psi$ do not have a simple commutation relation with $Q$. So we replace $a^\dagger$ with $a^{c\dagger}$, which creates particles with charge opposite to that of $a^\dagger$, into the single quantum field. 
\end{itemize}

Above is the soul or outline of this section. 

\begin{rmk}
    The bracket for $\mathscr{H}(x)$ is always commutation bracket, no matter whether the particle is boson or fermion. But the bracket for $\psi_l(x)$ depends on whether it is boson or fermion. The later is because the corresponding is computable using commutation/anti-commutation relations for creation and annilation operators, and once we make sure to always put an even number of fermion fields into the Hamiltonian, this reduces to commutation bracket. 
\end{rmk}

We fix a standard momentum $k$ of type (a) or (c).
Again let $D^{(j)}$ the (finite dimensional) projective representation of the little group $\mathcal{W} = \mathcal{W}(k)$ on  
$\mathcal{H}_k$. We rewrite the main result of the previous section \eqref{single particle state} as
\begin{equation}\label{single particle state 2}
    \begin{aligned}
    U_0(\Lambda, a) \Psi_{p, \sigma}
    &= e^{-ip\cdot a} 
    (\frac{N(p)}{N(\lambda(\Lambda) p)}) \sum_{\sigma'} D_{\sigma' \sigma}(W(\Lambda, p)) \Psi_{\lambda(\Lambda) p, \sigma'}
    \end{aligned}
\end{equation}

\clearpage

\subsection{General settings}

We introduce annihilation fields $\psi_{l}^{+}(x)$ and creation fields $\psi_l^{-}(x)$:
\begin{equation}
    \begin{aligned}
    & \psi_{l}^{+}(x)=\sum_{\sigma n} \int d^3 p u_{l}(x ; \mathbf{p}, \sigma, n) a(\mathbf{p}, \sigma, n), \\
    & \psi_{l}^{-}(x)=\sum_{\sigma n} \int d^3 p v_{l}(x ; \mathbf{p}, \sigma, n) a^{\dagger}(\mathbf{p}, \sigma, n)
    \end{aligned}
\end{equation}
We hope it to satisfy

\begin{ax}(Lorentz transformation of field)
\footnote{Many physical literature claim them to be a linear rep of the Lorentz group. It took me a lot of time to realize that to be wrong. }
\footnote{Our convention is different from that of \cite{weinberg2002quantum}.}

    We have a finite dimensional representation $D$ of the universal cover $SL(2,\mathbb{C})$ of the Lorentz group such that the 'Lorentz transformation of the field' are satisfied:
    \begin{equation}
    \begin{aligned}
        U_0(\Lambda, a)^{-1} \psi_{l}^{+}(\lambda(\Lambda) x+a) U_0(\Lambda, a)
        = D_{l \bar{l}}(\Lambda) \psi_{\bar{l}}^{+}(x)\\
        U_0(\Lambda, a)^{-1} \psi_{l}^{-}(\lambda(\Lambda) x+a) U_0(\Lambda, a)
        = D_{l \bar{l}}(\Lambda) \psi_{\bar{l}}^{-}(x)\\
        \forall \Lambda\in SL(2,\mathbb{C})
    \end{aligned}
    \end{equation}
    Here the indices $l$'s are the indices for the representation $D$. And we have suppresed the summation for $l'$'s. 
\end{ax}

Note that this is the second (projective) representation of the Lorentz group $SO^+(1,3)$. And the space of this rep has nothing to do with the state space $\mathcal{H}$! 
I call the first rep 'the state representation' and the second rep 'the field representation'.

\begin{rmk}
    We see from the definition that the $U_0$ and $D$ qualify the definition for a representation, instead of an opposite of a representation. This is because
    \begin{equation}
    \begin{aligned}
        &U_0(\Lambda_1, a_1)^{-1} U_0(\Lambda_2, a_2)^{-1} 
        \psi_{l}^{+}((\Lambda_2,a_2)(\Lambda_1,a_1)x) 
        U_0(\Lambda_2, a_2) U_0(\Lambda_1, a_1)
        \\
        &=
        U_0(\Lambda_1, a_1)^{-1} 
        D_{l \bar{l}}(\Lambda_2) \psi_{\bar{l}}^{+}((\Lambda_1,a_1)x)
        U_0(\Lambda_1, a_1)
        \\
        &= D_{l \bar{l}}(\Lambda_2) D_{\bar{l} \bar\bar{l}}(\Lambda_1) 
        \psi_{\bar\bar{l}}^{-}(x)
    \end{aligned}
    \end{equation}
\end{rmk}

\begin{rmk}
    Weinberg's book misuses the language of projective or even linear representation of the Lorentz group in the above axiom.
    But as what I've discussed in the last section, there is no such state rep on the Fock space. 

    Further, the above misunderstanding immediately misleads to a deadly conclusion: transforming it twice, the two composition of $U_0$'s is related by an inverse, so the phases offsets. So we wrongly concludes that $D$ must be a linear rep of the Lorentz group. But Section 1.4 showed that Dirac rep is a projective one. So even the admirable Dirac rep is not in our consideration, which is absurd.

    So the radical reason for the mistake is that there is, generally speaking, no such thing as a projective rep of Poincare group on Fock space. 
\end{rmk}

In order to be in accordance with the convention when writing Lorentz transformation of creation and annilation operators, we rewrite Axiom 3 as:
\begin{equation}
\begin{aligned}
& U_0(\Lambda, a) \psi_{l}^{+}(x) U_0^{-1}(\Lambda, a)=\sum_{\bar{l}} D_{l \bar{l}}(\Lambda^{-1}) \psi_{\bar{l}}^{+}(\lambda(\Lambda) x+a)\\
& U_0(\Lambda, a) \psi_{l}^{-}(x) U_0^{-1}(\Lambda, a)=\sum_{\bar{l}} D_{l \bar{l}}(\Lambda^{-1}) \psi_{\bar{l}}^{-}(\lambda(\Lambda) x+a)
\end{aligned}
\end{equation}

The remaining part of this subsection is to determine the condition for $u_l,v_l$'s in order for Axiom 3 to satisfy.


Recall the Lorentz transformation of creation and annilation operators \ref{Lorentz of CA},
\begin{equation}
        U_0(\Lambda, \alpha) a^{\dagger}(\mathbf{p} \sigma n) U_0^{-1}(\Lambda, \alpha)
        = e^{-i(\Lambda p) \cdot \alpha} \sqrt{(\Lambda p)^0 / p^0}
        \times \sum_{\bar{\sigma}} D_{\bar{\sigma} \sigma}^{(j)}(W(\Lambda, p)) a^{\dagger}(\mathbf{p}_{\Lambda} \bar{\sigma} n)
\end{equation}

By unitarity of the rep, 
\begin{equation}
        U_0(\Lambda, \alpha) a^{\dagger}(\mathbf{p} \sigma n) U_0^{-1}(\Lambda, \alpha)
        = e^{-i(\Lambda p) \cdot \alpha} \sqrt{(\Lambda p)^0 / p^0}
        \times \sum_{\bar{\sigma}} D_{\sigma \bar{\sigma}}^{(j)*}(W^{-1}(\Lambda, p)) a^{\dagger}(\mathbf{p}_{\Lambda} \bar{\sigma} n)
\end{equation}
and taking a dagger,
\begin{equation}
        U_0(\Lambda, \alpha) a(\mathbf{p} \sigma n) U_0^{-1}(\Lambda, \alpha)
        = e^{i(\Lambda p) \cdot \alpha} \sqrt{(\Lambda p)^0 / p^0}
        \times \sum_{\bar{\sigma}} D_{\sigma \bar{\sigma}}^{(j)}(W^{-1}(\Lambda, p)) a(\mathbf{p}_{\Lambda} \bar{\sigma} n)
\end{equation}
\\\\

\centerline{Step 0}

Taking $\Lambda=I$ in Axiom 3, we have
\begin{equation}
    \psi_l^\pm (x+a) = e^{\pm ip\cdot \alpha} \psi_l^\pm(x)
\end{equation}
So they must take the form
\begin{equation}\label{step2}
\begin{aligned}
& u_{l}(x ; \mathbf{p}, \sigma, n)=(2 \pi)^{-3 / 2} e^{i p \cdot x} u_{l}(\mathbf{p}, \sigma, n) \\
& v_{l}(x ; \mathbf{p}, \sigma, n)=(2 \pi)^{-3 / 2} e^{-i p \cdot x} v_{l}(\mathbf{p}, \sigma, n)
\end{aligned}
\end{equation}
so the fields are Fourier transforms:
\\\\
\fbox{%
  \parbox{1\textwidth}{

\begin{equation}\label{A}
\begin{aligned}
    \psi_l^{+}(x)=\sum_{\sigma, n}(2 \pi)^{-3 / 2} \int d^3 p u_l(\mathbf{p}, \sigma, n) e^{i p \cdot x} a(\mathbf{p}, \sigma, n)
    \\
    \psi_{l}^{-}(x)=\sum_{\sigma, n}(2 \pi)^{-3 / 2} \int d^3 p v_{l}(\mathbf{p}, \sigma, n) e^{-i p x} a^{\dagger}(\mathbf{p}, \sigma, n)
\end{aligned}
\end{equation}
}
}
$\quad$
\\\\

\centerline{Step 1}
\begin{lemma}(Volume form on the mass shell)\footnote{This understanding avoiding delta functions is found by myself just after studying smooth manifolds.}

    The Lorentz invariant volume form on the mass shell is $md^3p/p^0$. 
\end{lemma}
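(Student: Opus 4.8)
The plan is to realize the mass shell as a regular level set of the Lorentz‑invariant quadratic form $f(p)=p^2=-(p^0)^2+|\vec p|^2$ on $\mathbb{R}^{1,3}$, and to produce the volume form on it by a Gelfand–Leray (``quotient of a volume form by a submersion'') construction applied to the standard form $\Omega=dp^0\wedge dp^1\wedge dp^2\wedge dp^3$. Concretely, on $X_M=f^{-1}(-M^2)\cap\{p^0>0\}$ one has $df\neq 0$, so there is a unique $3$‑form $\omega$ on $X_M$ characterised by the property that $df\wedge\widetilde\omega=\Omega$ along $X_M$ for some (equivalently any) $3$‑form $\widetilde\omega$ defined near $X_M$ with $\widetilde\omega|_{X_M}=\omega$. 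Existence is clear because $df$ is nonvanishing; uniqueness holds because two extensions differ by something of the form $df\wedge(\,\cdot\,)$, which pulls back to $0$ on a level set of $f$.

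The invariance is then purely formal. Every $g\in SO^+(1,3)$ acts on $\mathbb{R}^{1,3}$ by a matrix of determinant $1$, so $g^*\Omega=\Omega$; and $f\circ g=f$ because $g$ is a Lorentz transformation, so $g^*(df)=d(f\circ g)=df$. If $\widetilde\omega$ is any extension of $\omega$, then $df\wedge g^*\widetilde\omega=g^*(df\wedge\widetilde\omega)=g^*\Omega=\Omega$ along $X_M$, so $g^*\widetilde\omega$ is again a valid extension, and by uniqueness $g^*\omega=\omega$ on $X_M$. Hence $\omega$, and every constant multiple of it, is $SO^+(1,3)$‑invariant; the factor $m$ in the statement is just the normalization matched to the choice $N(p)=\sqrt{m/p^0}$ of the previous subsection, there being no canonical scale, only a canonical invariant volume form up to a positive scalar.

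It then remains to compute $\omega$ in the global chart $\vec p\mapsto(p^0(\vec p),\vec p)$ on $X_M$, with $p^0=\sqrt{|\vec p|^2+M^2}$. From $df=-2p^0\,dp^0+2p^1\,dp^1+2p^2\,dp^2+2p^3\,dp^3$ and the ansatz $\widetilde\omega=g(\vec p)\,dp^1\wedge dp^2\wedge dp^3$, the product $df\wedge\widetilde\omega$ collapses (every term with a repeated $dp^i$ dies) to $-2p^0 g\,\Omega$; matching with $\Omega$ forces $g=-1/(2p^0)$, so $\omega=\pm\,d^3p/(2p^0)$ as a density, which up to the irrelevant constant is $m\,d^3p/p^0$.

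I do not expect a genuine obstacle. The only point needing care is the well‑definedness of the Gelfand–Leray form — that although the extension $\widetilde\omega$ is highly non‑unique (one may add any multiple of $df$), its restriction to $X_M$ is not — together with tracking the orientation sign so as not to be disturbed by the minus in $g=-1/(2p^0)$. One could instead invoke the physicists' identity $\delta(p^2+M^2)\,\theta(p^0)\,\Omega = d^3p/(2p^0)$, but the differential‑form argument above is precisely the rigorous content hiding behind that manipulation, which is what the footnote alludes to.
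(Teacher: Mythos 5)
Your argument is correct, but it is a genuinely different construction from the one in the paper. The paper realizes the mass shell as a hypersurface of the pseudo-Riemannian manifold $(\mathbb{R}^{1,3},g)$ and takes the metric-induced volume form $\omega_{\tilde g}=\iota^*(N\lrcorner\,\omega_g)$ with $N=\operatorname{grad}f/m$ the unit normal; the Lorentz invariance is then implicit in the naturality of the whole package (metric, normal, induced volume) under isometries preserving the shell, and the specific constant $m$ in $m\,d^3p/p^0$ drops out of the normalization $\langle N,N\rangle=-1$. You instead use the Gelfand--Leray form characterized by $df\wedge\widetilde\omega=\Omega$ on the level set, which buys you a cleaner and fully explicit invariance argument ($g^*\Omega=\Omega$ from $\det g=1$ and $g^*df=df$ from $f\circ g=f$, plus the uniqueness of the Leray form) — a point the paper's proof leaves tacit — at the cost of landing on the normalization $d^3p/(2p^0)$ rather than $m\,d^3p/p^0$. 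You correctly note these differ only by the constant $2m$, which is harmless since an invariant form on an orbit is unique up to scale; in fact the two constructions are related in general by $\omega_{\tilde g}=|df|_g^{-1}\cdot\omega_{\mathrm{Leray}}$ (up to sign), and here $|df|_g=|2p|_g=2m$, which accounts exactly for the discrepancy. Your coordinate computation and the uniqueness/well-definedness argument for the Leray form are both sound, so I see no gap.
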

\begin{proof}
    I present here a mathematical proof without dealing with delta function tricks. 
    The Lorentz invariant volume form on Minkowski manifold $\mathbb{R}^{1,3}$ is $\omega_g = dt\wedge dx\wedge dy\wedge dz$. So by \cite{} P390, the volume form on the mass shell MS is $\omega_{\tilde{g}} = \iota^*(N\lrcorner \omega_g)$, where $N$ is a smooth unit normal vector field along MS determining its orientation. As the defining function for the mass shell is $f(t,x,y,z)=-t^2+x^2+y^2+z^2$, $df/2=-tdt+xdx+ydy+zdz$, and thus $gradf = t\partial_t+x\partial_x+y\partial_y+z\partial_z$. $gradf$ is normal to the regular level set of $f$, i.e. MS. So we can choose $N=gradf/m = (1/m)(t\partial_t+x\partial_x+y\partial_y+z\partial_z)$. So 
    \begin{align*}
    \omega_{\tilde{g}} = \iota^*(N\lrcorner \omega_g) = \iota^*(tdx\wedge dy\wedge dz - xdt\wedge dy\wedge dz + ydt\wedge dx\wedge dz - zdt\wedge dx\wedge dy)/m \\
    = (1/m)(p^0 dx\wedge dy\wedge dz- \frac{x^2+y^2+z^2}{p^0}dx\wedge dy\wedge dz) 
    = \frac{m}{p^0}dx\wedge dy\wedge dz
    \end{align*}
\end{proof}

By Lorentz invariance of $d^3p/p^0$, substituting $d^3p$ with $d^3(\Lambda p)p^0/(\Lambda p)^0$, we find
\begin{equation}
    \begin{aligned}
    & U_0(\Lambda, b) \psi_{l}^{+}(x) U_0^{-1}(\Lambda, b)
    =(2 \pi)^{-3 / 2}
    \sum_{\sigma \textcolor{blue}{\bar{\sigma}} n} \int d^3(\Lambda p) \\
    &\textcolor{blue}
    {u_{l}(\mathbf{p}, \sigma, n)
     \times e^{i p \cdot x}
    \exp (i(\Lambda p) \cdot b) D_{\sigma \bar{\sigma}}^{(j_n)}(W^{-1}(\Lambda, p)) \sqrt{p^0 /(\Lambda p)^0}}
    a(\mathbf{p}_\Lambda, \bar{\sigma}, n)
    \end{aligned}
\end{equation}
and
\begin{equation}
    \begin{aligned}
    & U_0(\Lambda, b) \psi_{l}^{-}(x) U_0^{-1}(\Lambda, b)
    =(2 \pi)^{-3 / 2}
    \sum_{\sigma \textcolor{green}{\bar{\sigma}} n} \int d^3(\Lambda p) \\
    &\textcolor{green}
    {v_{l}(\mathbf{p}, \sigma, n)
     \times e^{-i p \cdot x}
    \exp (-i(\Lambda p) \cdot b) D_{\sigma \bar{\sigma}}^{(j_n) *}(W^{-1}(\Lambda, p)) \sqrt{p^0 /(\Lambda p)^0} }
    a^{\dagger}(\mathbf{p}_{\Lambda}, \bar{\sigma}, n) .
    \end{aligned}
\end{equation}
\begin{prop}$\quad$

    In order for Axiom 3 to satisfy, it is necessary and sufficient that
    \begin{equation}\label{DysonSphere}
        \begin{aligned}
        & \sum_{\bar{l}} D_{l \bar{l}}(\Lambda^{-1}) u_{\bar{l}}(\mathbf{p}_{\Lambda}, \sigma, n)
        =
        \textcolor{blue}{\sqrt{p^0 /(\Lambda p)^0}
        \times 
        u_{l}(\mathbf{p}, \sigma, n)
        \sum_{\bar{\sigma}} D_{\sigma \bar{\sigma}}^{(j_n)}(W^{-1}(\Lambda, p))}
        \\
        & \sum_{\bar{l}} D_{l \bar{l}}(\Lambda^{-1}) v_{\bar{l}}(\mathbf{p}_{\Lambda}, \sigma, n)
        =
        \textcolor{green}{\sqrt{p^0 /(\Lambda p)^0}
        \times 
        v_{l}(\mathbf{p}, \sigma, n)
        \sum_{\bar{\sigma}} D_{\sigma \bar{\sigma}}^{(j_n) *}(W^{-1}(\Lambda, p))}
        \end{aligned}
    \end{equation}
    Or, equivalently,
    \begin{equation}\label{A'}
        \begin{aligned}
        \sqrt{\frac{(\Lambda p)^0}{p^0}}
        \sum_{\bar{\sigma}} 
        u_{\bar{l}} (\mathbf{p}_{\Lambda}, \bar{\sigma}, n) D_{\bar{\sigma} \sigma}^{(j_n)}(W(\Lambda, p))
        = \sum_l D_{\bar{l} l}(\Lambda) u_l(\mathbf{p}, \sigma, n)
        \\
        \sqrt{\frac{(\Lambda p)^0}{p^0}}
        \sum_{\bar{\sigma}} v_{\bar{l}}(\mathbf{p}_{\Lambda}, \bar{\sigma}, n) D_{\bar{\sigma} \sigma}^{(j_n)}(W(\Lambda, p))
        = \sum_{l} D_{\bar{l} l}(\Lambda) v_{l}(\mathbf{p}, \sigma, n)
        \end{aligned}
    \end{equation}
\end{prop}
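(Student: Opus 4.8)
The plan is to expand both sides of Axiom~3, in its rewritten form $U_0(\Lambda,b)\psi_l^{\pm}(x)U_0^{-1}(\Lambda,b)=\sum_{\bar l}D_{l\bar l}(\Lambda^{-1})\psi_{\bar l}^{\pm}(\lambda(\Lambda)x+b)$, into Fourier integrals over a single momentum carrying a single creation/annihilation operator, and then to read off the condition on $u_l,v_l$ by matching coefficient functions. The left-hand side is essentially the computation already displayed above: starting from \eqref{A}, inserting the transformation rule for $a$ (obtained from that for $a^{\dagger}$ by unitarity of the little-group representation followed by taking the adjoint), and using the Lorentz invariance of $d^3p/p^0$ (the volume-form lemma, Step~1) to trade $d^3p$ for $d^3(\lambda(\Lambda)p)$, one gets $U_0(\Lambda,b)\psi_l^{+}(x)U_0^{-1}(\Lambda,b)$ as an integral over $d^3(\lambda(\Lambda)p)$ of $u_l(\mathbf p,\sigma,n)\,e^{ip\cdot x}\,e^{i(\lambda(\Lambda)p)\cdot b}\,D^{(j_n)}_{\sigma\bar\sigma}(W^{-1}(\Lambda,p))\,\sqrt{p^0/(\lambda(\Lambda)p)^0}$ against $a(\mathbf p_\Lambda,\bar\sigma,n)$, and the same with $u\to v$, $e^{ip\cdot x}\to e^{-ip\cdot x}$, $D^{(j_n)}\to D^{(j_n)\,*}$ for $\psi_l^{-}$. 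For the right-hand side one simply substitutes \eqref{A} into $\psi_{\bar l}^{\pm}(\lambda(\Lambda)x+b)$.

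Next I would make the change of variables $q=\lambda(\Lambda)p$ on the left: then $d^3(\lambda(\Lambda)p)=d^3q$, $\mathbf p_\Lambda=\mathbf q$, $(\lambda(\Lambda)p)^0=q^0$, and by Lorentz invariance of the Minkowski pairing $p\cdot x=q\cdot(\lambda(\Lambda)x)$ and $(\lambda(\Lambda)p)\cdot b=q\cdot b$, so the plane-wave factor collapses to $e^{iq\cdot(\lambda(\Lambda)x+b)}$, exactly the plane wave appearing on the right. After this substitution both sides have the identical shape $(2\pi)^{-3/2}\sum_{\sigma',n}\int d^3q\,[\,\cdot\,]\,e^{iq\cdot(\lambda(\Lambda)x+b)}\,a(\mathbf q,\sigma',n)$: the bracket on the left is $\sqrt{p^0/q^0}$ times the contraction of $u_l(\mathbf p,\sigma,n)$ with $D^{(j_n)}(W^{-1}(\Lambda,p))$ over the internal spin index (here $p=\lambda(\Lambda)^{-1}q$), and on the right it is $\sum_{\bar l}D_{l\bar l}(\Lambda^{-1})u_{\bar l}(\mathbf q,\sigma',n)$.

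The two operator-valued distributions then agree for all $x$, all translations $b$, and all $\Lambda$ if and only if the bracketed coefficient functions agree: the exponentials $e^{iq\cdot y}$ are linearly independent as functions of the spacetime argument $y$ (Fourier uniqueness), and the operators $a(\mathbf q,\sigma',n)$ with distinct labels are, distributionally, linearly independent, the species/spin label giving a direct-sum decomposition of the one-particle sector. Equating the brackets and relabelling indices (send $q\mapsto\lambda(\Lambda)p$ to pull everything back to $p$) gives \eqref{DysonSphere} for $u$, and the $\psi^-$ computation gives the $v$-line with $D^{(j_n)\,*}$. Since every step in this chain is an equivalence, this yields necessity and sufficiency simultaneously. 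Finally \eqref{A'} is obtained from \eqref{DysonSphere} by a purely algebraic manipulation: multiply on the left by $D(\Lambda)$ and use $D(\Lambda)D(\Lambda^{-1})=\mathrm{id}$ to kill the $D(\Lambda^{-1})$, then contract the spin index with $D^{(j_n)}(W(\Lambda,p))$ and use $D^{(j_n)}(W^{-1}(\Lambda,p))D^{(j_n)}(W(\Lambda,p))=\mathrm{id}$, take the reciprocal of the square-root weight, and rename summation indices; no extra input beyond the representation property of $D$ and $D^{(j_n)}$ (and unitarity, already used above) is needed, and the steps reverse, so \eqref{DysonSphere} and \eqref{A'} are equivalent.

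I expect the only genuine obstacle to be the ``equate the integrands'' step, which is precisely where the plane-wave/$\delta$-function formalism is not rigorous: making it airtight means pairing both sides against test functions in $\mathbf q$ and in the spacetime variable and invoking injectivity of $\varphi\mapsto\int d^3q\,\varphi(\mathbf q)\,a(\mathbf q,\sigma,n)$ on each species sector. Everything else is bookkeeping --- keeping barred versus unbarred spin and field indices straight, and checking that the on-shell energies $p^0$ and $(\lambda(\Lambda)p)^0$ enter the $\sqrt{\ \cdot\ }$ weights so that they compose correctly under successive Lorentz transformations, which is conversely exactly why the normalization $\sqrt{p^0/(\lambda(\Lambda)p)^0}$ was forced in the $a$-transformation in the first place.
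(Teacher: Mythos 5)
Your proposal is correct and follows essentially the same route as the paper: the paper performs exactly the left-hand-side expansion you describe (insert the transformation rule for $a$, use Lorentz invariance of $d^3p/p^0$) in the displays immediately preceding the proposition, and then reads off \eqref{DysonSphere} by matching the coefficient of each $e^{iq\cdot(\lambda(\Lambda)x+b)}a(\mathbf q,\bar\sigma,n)$ against the expansion of $\sum_{\bar l}D_{l\bar l}(\Lambda^{-1})\psi^{\pm}_{\bar l}(\lambda(\Lambda)x+b)$, with \eqref{A'} obtained by the same inversion/contraction algebra you give. Your version is if anything more explicit than the paper's, which leaves the coefficient-matching and the passage between \eqref{DysonSphere} and \eqref{A'} implicit.
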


Let us rephrase the conclusion in representation-theoretical language.
\begin{thm}$\quad$

    In order for Axiom 3 to satisfy, it is equivalent that the matrix 
    $\{u_l(\mathbf{p},\sigma,n)\}_{l;\mathbf{p},\sigma}$ is a homomorphism of representations of $SL(2,\mathbb{C})$ from $U_0|_{SL(2,\mathbb{C})}$ to $D$, 
    and $\{v_l(\mathbf{p},\sigma,n)\}_{l;\mathbf{p},\sigma}$ is a homomorphism of representations of $SL(2,\mathbb{C})$ from the dual of $U_0|_{SL(2,\mathbb{C})}$ to $D$.
\end{thm}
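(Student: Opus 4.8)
The proof is a translation exercise. The preceding Proposition already provides, in \eqref{DysonSphere}--\eqref{A'}, the necessary and sufficient conditions on the coefficient functions $u_l$ and $v_l$ for Axiom~3 to hold; so the plan is only to recognize the $u$-condition as the assertion that $u$ intertwines the $SL(2,\mathbb C)$-representations $U_0|_{SL(2,\mathbb C)}$ and $D$, and the $v$-condition as the assertion that $v$ intertwines the contragredient $U_0^{*}|_{SL(2,\mathbb C)}$ and $D$. I will not reprove the Proposition. Throughout I fix the species $n$ (hence the little-group representation $D^{(j_n)}$), regard the label set $\{(\mathbf p,\sigma)\}$ on the chosen mass shell as the (formal, continuous) basis of the single-particle space $\mathcal H$ carrying the induced representation $U_0|_{SL(2,\mathbb C)}$ of Wigner's classification, and fix a basis $\{e_l\}$ of the field-representation space $V_D$; ``matrix'' and ``homomorphism of representations'' are meant in the distributional sense used throughout this section.

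First I would read off from \eqref{single particle state 2}, using the Lorentz-invariant normalization $N(p)=\sqrt{m/p^0}$ (resp.\ $\sqrt{1/p^0}$) so that $N(p)/N(\lambda(\Lambda)p)=\sqrt{(\lambda(\Lambda)p)^0/p^0}$, the action of $U_0|_{SL(2,\mathbb C)}$ on the basis:
\[
U_0(\Lambda,0)\,\Psi_{p,\sigma}=\sqrt{\tfrac{(\lambda(\Lambda)p)^0}{p^0}}\;\sum_{\bar\sigma}D^{(j_n)}_{\bar\sigma\sigma}\!\bigl(W(\Lambda,p)\bigr)\,\Psi_{\lambda(\Lambda)p,\bar\sigma}.
\]
Then I define $u\colon\mathcal H\to V_D$ on the basis by $u(\Psi_{p,\sigma})=\sum_l u_l(\mathbf p,\sigma,n)\,e_l$, compute $u\bigl(U_0(\Lambda)\Psi_{p,\sigma}\bigr)$ from the display above and $D(\Lambda)\bigl(u(\Psi_{p,\sigma})\bigr)=\sum_{\bar l}\bigl(\sum_l D_{\bar l l}(\Lambda)u_l(\mathbf p,\sigma,n)\bigr)e_{\bar l}$, and equate the coefficients of each $e_{\bar l}$; this is exactly the first line of \eqref{A'}. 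Running the computation backwards, the first line of \eqref{A'} for all $\Lambda$ returns $u\circ U_0(\Lambda)=D(\Lambda)\circ u$. With the Proposition, this settles the $\psi^{+}$-half.

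For $v$ the only new ingredient is that the source representation is the contragredient of $U_0|_{SL(2,\mathbb C)}$. Since $U_0$ is unitary, the inner product furnishes a $\mathbb C$-linear isomorphism $\overline{\mathcal H}\cong\mathcal H^{*}$ intertwining the complex-conjugate representation $\overline{U_0}$ with $U_0^{*}$, so I may work with $\overline{U_0}$ on the conjugated basis, where
\[
\overline{U_0}(\Lambda)\,\overline{\Psi_{p,\sigma}}=\sqrt{\tfrac{(\lambda(\Lambda)p)^0}{p^0}}\;\sum_{\bar\sigma}D^{(j_n)*}_{\bar\sigma\sigma}\!\bigl(W(\Lambda,p)\bigr)\,\overline{\Psi_{\lambda(\Lambda)p,\bar\sigma}}
\]
(the positive square root is unaffected by conjugation; only $D^{(j_n)}$ becomes $D^{(j_n)*}$). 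Defining $v\colon\overline{\mathcal H}\to V_D$, $\overline{\Psi_{p,\sigma}}\mapsto\sum_l v_l(\mathbf p,\sigma,n)e_l$, and repeating the coefficient comparison for $v\circ\overline{U_0}(\Lambda)=D(\Lambda)\circ v$, I obtain the $v$-condition of the Proposition; passing between its two printed forms \eqref{DysonSphere} and \eqref{A'} uses the cocycle identity $W(\Lambda^{-1},\lambda(\Lambda)p)=W(\Lambda,p)^{-1}$, immediate from $W(\Lambda,p)=L^{-1}(\lambda(\Lambda)p)\Lambda L(p)$, together with unitarity of $D^{(j_n)}$. Invoking the Proposition once more closes the $\psi^{-}$-half.

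The routine part is the index bookkeeping; the one delicate point, where I expect to spend the most care, is the $v$-side normalization. Writing the contragredient directly on the dual basis of $\mathcal H^{*}$ produces the \emph{reciprocal} square-root factor $\sqrt{p^0/(\lambda(\Lambda)p)^0}$, seemingly in conflict with \eqref{A'}; the resolution is that the model of ``dual of $U_0|_{SL(2,\mathbb C)}$'' that matches the coefficient functions $v_l$ is the conjugate representation $\overline{U_0}$ on $\overline{\mathcal H}$, where the real square-root factor is unchanged, and that $\overline{U_0}\cong U_0^{*}$ precisely because $U_0$ is unitary; I would make this identification and the unitarity input explicit. Finally I would note that everything here is formal, the momentum label being continuous; a rigorous account would realize $u$ (resp.\ $v$) as a map out of the induced-representation space and, via Frobenius reciprocity (Lemma~\ref{Frobenius reciprocity}), identify it with a $\mathcal W(k)$-homomorphism $D^{(j_n)}\to\mathrm{Res}\,D$ (resp.\ from the dual of $D^{(j_n)}$).
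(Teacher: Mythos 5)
Your proposal is correct and takes essentially the same route as the paper: define the linear map $\mathcal H\to V_D$ determined by the coefficient matrix, apply \eqref{single particle state 2} to compare the $\bar l$-components of the intertwining relation with \eqref{A'}, and reduce the $v$-case to the $u$-case by identifying the dual of the unitary representation $U_0|_{SL(2,\mathbb C)}$ with its complex conjugate. Your extra care on the $v$-side normalization and the identity $W(\Lambda^{-1},\lambda(\Lambda)p)=W(\Lambda,p)^{-1}$ only makes explicit what the paper's terser argument leaves implicit.
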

\begin{proof}
    Applying $\bar{l}$-component at two sides of
    \begin{equation*}
        \square U_0(\Lambda, 0) \Phi_{\mathbf{p}, \sigma} = D(\Lambda) \square \Phi_{\mathbf{p}, \sigma}
    \end{equation*}
    where $\square$ is the linear map from $\mathcal{H}$ to the action space of $D$, determined by $\{u(\mathbf{p},\sigma,n)\}_{l;\mathbf{p},\sigma}$. By \eqref{single particle state 2}, 
    \begin{equation*}
        \begin{aligned}
        U_0(\Lambda, 0) \Psi_{p, \sigma}
        = 
        \sqrt{\frac{(\Lambda p)^0}{p^0} } \sum_{\sigma'} D^{(j)}_{\sigma' \sigma}(W(\Lambda, p)) \Psi_{\Lambda p, \sigma'}
        \end{aligned}
    \end{equation*}
    $$
    LHS = \sqrt{\frac{(\Lambda p)^0}{p^0}}
    \sum_{\bar{\sigma}} 
    D_{\bar{\sigma} \sigma}^{(j_n)}(W(\Lambda, p)) u_{\bar{l}} (\mathbf{p}_{\Lambda}, \bar{\sigma}, n)
    \delta_{\bar{l}}
    $$
    $RHS = D(L(p)) (\sum_l u_l(\mathbf{p},\sigma) \delta_l) = \sum_{\bar{l}l} D_{\bar{l}l}(L(p)) u_l(\mathbf{p},\sigma) \delta_{\bar{l}}$. 
    So the $\bar{l}$-component of two sides are just the two sides of \eqref{A'}.

    The conclusion for $v$'s comes from definition of dual rep: $\rho^*(g)=\rho(g)^{-T}$. And unitarity gives $-T=- \dagger *=*$. 
\end{proof}
$\quad$
\\\\

\centerline{Step 2}

Now the Frobenius reciprocity comes into use. 

\begin{thm}$\quad$

    In order for Axiom 3 to satisfy, it is equivalent that\\
    \fbox{%
  \parbox{1\textwidth}{
    \begin{equation}\label{B}
        \begin{aligned}
            u_{\bar{l}}(\mathbf{q}, \sigma, n)=(m / q^0)^{1 / 2} \sum_{l} D_{\bar{l} l}(L(q)) u_{l}(0, \sigma, n)
            \\
            v_{\bar{l}}(\mathbf{q}, \sigma, n)=(m / q^0)^{1 / 2} \sum_{l} D_{\bar{l} l}(L(q)) v_{l}(0, \sigma, n)
        \end{aligned}
    \end{equation}
    }
    }
    \\\\
    and:\\
    \fbox{%
      \parbox{1\textwidth}{
    \begin{equation}
        \begin{aligned}\label{C}
            \sum_{\bar{\sigma}} u_{\bar{l}}(0, \bar{\sigma}, n) D_{\bar{\sigma} \sigma}^{(j_n)}(W)
            =\sum_{l} D_{\bar{l} l}(W) u_{l}(0, \sigma, n)
            \\
            \sum_{\bar{\sigma}} v_{\bar{l}}(0, \bar{\sigma}, n) D_{\bar{\sigma} \sigma}^{(j_n)^*}(W)
            =\sum_{l} D_{\bar{l}l}(W) v_{l}(0, \sigma, n)
            \\
            \forall W\in \mathcal{W}
        \end{aligned}
    \end{equation}
    }
    }
\\
\end{thm}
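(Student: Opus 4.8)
The plan is to recognize the condition from the previous theorem — namely \eqref{A'}, which says that $\{u_l(\mathbf p,\sigma,n)\}$ is a morphism of $SL(2,\mathbb C)$-representations from $U_0|_{SL(2,\mathbb C)}$ to $D$ — and to unpack it through the induced-representation structure established in Wigner's classification. Recall that $U_0|_{SL(2,\mathbb C)}$ is induced from the little-group representation $D^{(j_n)}$ on $\mathcal H_k$, with $\{L(q)\}$ a system of coset representatives. By Frobenius reciprocity (Lemma \ref{Frobenius reciprocity}), a morphism from this induced representation into $D$ corresponds bijectively to a morphism of $\mathcal W$-representations from $D^{(j_n)}$ into $\mathrm{Res}^{SL(2,\mathbb C)}_{\mathcal W} D$. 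The explicit form of the reciprocity map — $F \mapsto (g_i v \mapsto D(g_i)F(v))$ — is exactly what will produce \eqref{B}: the values $u_l(0,\sigma,n)$ (i.e.\ at the standard momentum, $q=k$) play the role of the little-group morphism $F$, and $D(L(q))$ transports them to general $\mathbf q$, the factor $(m/q^0)^{1/2}$ being the normalization $N(q)$ carried along. Condition \eqref{C} is then just the statement that this $F$ intertwines $D^{(j_n)}$ with $D|_{\mathcal W}$.

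Concretely I would argue in two directions. For necessity: assume \eqref{A'} holds for all $\Lambda\in SL(2,\mathbb C)$. Specializing $\mathbf p = 0$ (standard momentum $k$) and $\Lambda = L(q)$, so that $\mathbf p_\Lambda = \mathbf q$ and $W(L(q),k) = L^{-1}(q)L(q) = I$, the sum over $\bar\sigma$ collapses and \eqref{A'} becomes precisely \eqref{B} (using $(\Lambda k)^0 = q^0$, $k^0 = m$, hence $\sqrt{q^0/m}$ on one side). Next, specializing $\mathbf p = 0$ and $\Lambda = W\in\mathcal W$, we have $\mathbf p_\Lambda = 0$ and $W(W,k) = L^{-1}(k)\,W\,L(k) = W$, so \eqref{A'} reduces directly to \eqref{C}. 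For sufficiency: assume \eqref{B} and \eqref{C}; given arbitrary $\Lambda$ and $\mathbf p$, write $\mathbf p = \mathbf q$ with $q = \lambda(L(p))k$, expand $u_{\bar l}(\mathbf p,\sigma,n)$ via \eqref{B}, apply $D(\Lambda^{-1})$, use the cocycle identity $\Lambda\, L(p) = L(\lambda(\Lambda)p)\, W(\Lambda,p)$ together with the homomorphism property of $D$ to rewrite $D(\Lambda^{-1})D(L(p)) = D(W^{-1}(\Lambda,p))D(L^{-1}(\lambda(\Lambda)p))$, then push the $D(W^{-1}(\Lambda,p))$ factor through $u_l(0,\sigma,n)$ using \eqref{C}, and finally reassemble with \eqref{B} in the $\lambda(\Lambda)p$ slot. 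The normalization factors $(m/q^0)^{1/2}$ combine to give the required $\sqrt{p^0/(\Lambda p)^0}$. The $v$-case is identical with $D^{(j_n)}$ replaced by its conjugate $D^{(j_n)*}$, which is how the dual representation enters (as noted in the proof of the previous theorem, unitarity turns $-T$ into $*$).

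The routine but slightly delicate bookkeeping will be keeping the cocycle $W(\Lambda,p) = L^{-1}(\lambda(\Lambda)p)\,\Lambda\, L(p)$ and its inverse consistently placed — in particular verifying $W(L(q),k) = I$ and $W(W,k) = W$ for $W\in\mathcal W$, which rely on the chosen convention $L(k) = I$ — and tracking which side the matrix multiplications land on so that the index contractions in \eqref{A'} versus \eqref{B}–\eqref{C} actually match. The genuine conceptual step, and the one I would foreground, is the identification of \eqref{B}–\eqref{C} as an instance of Frobenius reciprocity: \eqref{B} is the reciprocity isomorphism written in coordinates and \eqref{C} is the intertwining condition on the little group. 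Everything else is substituting special values of $\Lambda$ into the already-proven \eqref{A'} and, conversely, propagating \eqref{B}–\eqref{C} back to general $\Lambda$ via the cocycle relation. I do not expect a real obstacle here; the main risk is purely notational, since the $l$-indices (field representation $D$) and $\sigma$-indices (little-group representation $D^{(j_n)}$) sit on opposite sides of the intertwiner and are easy to transpose by mistake.
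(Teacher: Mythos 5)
Your proposal is correct and takes essentially the same route as the paper: the paper's proof is exactly the coordinate translation of Frobenius reciprocity (Lemma \ref{Frobenius reciprocity}) applied to the induced-representation structure of $U_0|_{SL(2,\mathbb{C})}$, with \eqref{B} coming from the forward reciprocity map $F \mapsto (g_i v \mapsto D(g_i)F(v))$ and \eqref{C} being the intertwining condition on the little group. Your explicit specializations $\Lambda = L(q)$ and $\Lambda = W$ in \eqref{A'}, together with the cocycle reassembly for sufficiency, are just this same isomorphism written out by hand.
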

\begin{proof}
    Recall that Frobenius reciprocity \ref{Frobenius reciprocity} says
    \begin{equation}
     \begin{aligned}
         Hom_{\mathcal{W}}(\mathcal{H}_k, Res D) &\cong Hom_{SL(2,\mathbb{C})} (\mathcal{H}, D)\\
         F &\mapsto (g_iv \mapsto D(g_i)F(v))\\
         G|_V &\leftarrow G
     \end{aligned}
     \end{equation}
    Let me translate it into formulas. 
    In the right arrow, 
    $$
    F: \Psi_{0,\sigma} \mapsto \sum_{\bar{l}} u_{\bar{l}}(0,\sigma) \delta_{\bar{l}}
    $$
    and thus
    $$
    \tilde{F}: \Psi_{\mathbf{p}, \sigma} \mapsto \sqrt{\frac{m}{p^0}} D(L(p)) \sum_l u_l(0,\sigma) \delta_l
    =
    \sqrt{\frac{m}{p^0}} \sum_{\bar{l}l} D_{\bar{l}l}(L(p)) u_l(0,\sigma) \delta_{\bar{l}}
    $$
    On the other side, clearly $\tilde{F}(\Psi_{\mathbf{p},\sigma}) = \sum_{\bar{l}} u_{\bar{l}}(\mathbf{p},\sigma)\delta_{\bar{l}}$.
    
    The conclusion for $v$'s is similar. 
    
    So evaluating at $\delta_{\bar{l}}$, the right arrow gives us the first. 
    The left arrow gives us the second. 
\end{proof}

\begin{rmk}
    The dual rep of a group rep $D$ is $\Lambda\mapsto D(\Lambda^{-1})^T$. In this case, it equals its complex conjugate because state rep is unitary. 
\end{rmk}


We can transform it into the more convenient Lie algebra language. 

\begin{thm}
\footnote{I have never seen the conclusion that homomorphism between Lie group representations is equivalent to homomorphism between Lie algebra representations in any literature. This comes from conversation with my classmate Hao Zhang. }
\label{Main-Relation between two representations}

    \eqref{C} is equivalent that the matrix $(u_l(0,\sigma,n))_{l;\sigma}$ is a homomorphism of representations of Lie algebra $w$ of the little group $\mathcal{W}$, from $dD^{(j_n)}$ to $dD|_{w}$,
    and the matrix $(v_l(0,\sigma,n))_{l;\sigma}$ is a homomorphism of representations from the dual of $dD^{(j_n)}$ to $dD|_{w}$. 
\end{thm}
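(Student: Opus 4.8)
The plan is to prove the equivalence by passing between the Lie group and Lie algebra levels, using the fact that the little group $\mathcal{W}$ (being $SU(2)$ in the massive case, hence connected) is generated by the exponentials of its Lie algebra $w$. The statement claims that the matrix $(u_l(0,\sigma,n))_{l;\sigma}$ is a homomorphism of $w$-representations from $dD^{(j_n)}$ to $dD|_w$, and similarly for $v$ with the dual representation; so I would set up the intertwining condition at the Lie algebra level explicitly and show it is equivalent to \eqref{C}.

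First I would recall that for a linear map $T\colon V_1 \to V_2$ between representation spaces of a connected Lie group $G$ with Lie algebra $\mathfrak{g}$, $T$ intertwines the group representations $\rho_1,\rho_2$ (i.e. $T\rho_1(g) = \rho_2(g)T$ for all $g$) if and only if $T$ intertwines the differentiated representations $d\rho_1, d\rho_2$ (i.e. $T\,d\rho_1(X) = d\rho_2(X)\,T$ for all $X\in\mathfrak{g}$). The forward direction is immediate by differentiating $T\rho_1(\exp tX) = \rho_2(\exp tX)T$ at $t=0$ using $\frac{d}{dt}\rho_i(\exp tX)|_{t=0} = d\rho_i(X)$; the reverse direction follows because $T\,d\rho_1(X) = d\rho_2(X)\,T$ implies, by induction, $T\,d\rho_1(X)^k = d\rho_2(X)^k\,T$, hence $T\exp(d\rho_1(X)) = \exp(d\rho_2(X))\,T$, i.e. $T\rho_1(\exp X) = \rho_2(\exp X)\,T$, and since $\mathcal{W}$ is connected it is generated by $\exp(w)$, so $T$ intertwines on all of $\mathcal{W}$. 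This is exactly the general lemma credited to the conversation with Hao Zhang; I would state and prove it as a self-contained sublemma.

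Next I would apply the sublemma with $G = \mathcal{W}$, $\rho_1 = D^{(j_n)}$ (respectively its dual $(D^{(j_n)})^*$ for the $v$ case), $\rho_2 = D|_{\mathcal{W}}$, and $T$ the matrix $(u_l(0,\sigma,n))_{l;\sigma}$ viewed as a map $\mathcal{H}_k \to$ (the representation space of $D$). The condition that $T$ is a homomorphism of $\mathcal{W}$-representations, written out in matrix components, is precisely $\sum_{\bar\sigma} u_{\bar l}(0,\bar\sigma,n) D^{(j_n)}_{\bar\sigma\sigma}(W) = \sum_l D_{\bar l l}(W) u_l(0,\sigma,n)$ for all $W\in\mathcal{W}$, which is the first equation of \eqref{C}; and the analogous statement with $(D^{(j_n)})^*$ gives the second. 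So the theorem reduces to the sublemma once the dictionary between "homomorphism of representations" and the index-laden formula is spelled out, which is routine index bookkeeping. For the $v$-part I would additionally invoke the remark that, because the state representation is unitary, the dual $\Lambda \mapsto D^{(j_n)}(\Lambda^{-1})^T$ coincides with the complex conjugate $D^{(j_n)*}$, matching the conjugate appearing in the second line of \eqref{C}.

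The main obstacle is the reverse implication of the sublemma: one must be careful that $\mathcal{W}$ is connected (true in the massive case $\mathcal{W} = SU(2)$; for the massless case $\mathcal{W}$ is the universal cover of $ISO(2)$, which is also connected, so this is fine) so that $\exp(w)$ generates the whole group — otherwise an algebra intertwiner need not integrate to a group intertwiner. A secondary subtlety worth a sentence is that $dD|_w$ genuinely means the restriction of the differentiated field representation $dD$ to the subalgebra $w \hookrightarrow \mathfrak{sl}(2,\mathbb{C})$, and one should note $d(D|_{\mathcal{W}}) = (dD)|_w$, i.e. differentiating commutes with restriction to a subgroup, which is immediate from functoriality of the Lie functor. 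Everything else is formal manipulation of indices.
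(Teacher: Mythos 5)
Your proposal is correct and follows essentially the same route as the paper: the forward direction by differentiation, and the reverse direction by writing each element of the connected little group as a finite product of exponentials and propagating the intertwining relation through $T\,d\rho_1(X)^k = d\rho_2(X)^k\,T$ to the exponentials. (Your parenthetical that the massless $\mathcal{W}$ is the \emph{universal} cover of $ISO(2)$ is slightly off --- it is the double cover, which is connected but not simply connected --- but only connectedness is used, so the argument is unaffected.)
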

\begin{proof}
    $\Longrightarrow$: clear. 

    $\Longleftarrow$: As is well know by \cite{warner1983foundations} that:
    \begin{itemize}
        \item exp is a diffeomorphism near $0$ of its Lie algebra
        \item Any open neighborhood of identity generate a subgroup that is the connected component of the identity. 
        \item
        exp commutes with Lie group-Lie algebra morphism. 
    \end{itemize}
    so for any connected Lie group, all of its elements can be written as a finite product of exponentials of its Lie algebra element: $h=e^{X_1}\cdots e^{X_m}$. For all $X\in w$, $T(\rho_1(X)^kv)=\rho_2(X)^k T(v)$. So $T(e^{\rho_1(X)} v)=e^{\rho_2(X)} T(v)$. So by induction. 

    The conclusion for $v$'s is similar. 
\end{proof}

\begin{rmk}
    The dual rep for a rep of Lie algebra is $\pi^\vee(X)=-\pi(X)^T$. And we have $-\pi(J)^T=-\pi(J)^{\dagger T}=-\pi(J)^*$ in Hermitian case (and Hermitian is by physical basis of Lie algebra, which adds an $i$ factor).
\end{rmk}

Note that the result until this subsection applies to both massive and massless case. 

\clearpage

\subsection{Massive field--relation between spins of state and field}

Now assuming positive mass. 

Now make use of the results of finite dimensional representation theory of Lorentz algebra, suppose the field rep is $\oplus_{i=1}^n \pi_{A_i}\boxtimes \pi_{B_i}$. 
\begin{prop}$\quad$

    The matrix $(u_{l}(0,\sigma))_{l,\sigma}$
    is a homomorphism of representations of $su(2)$ from $\pi_j$ to $\oplus_{i=1}^n \pi_{A_i}\otimes \pi_{B_i}$, in standard basis. 
\end{prop}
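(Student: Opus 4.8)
The plan is to read the proposition off as the specialization of Theorem~\ref{Main-Relation between two representations} to the massive little group, followed by an explicit restriction of the field representation to the rotation subalgebra. First I would assemble the ingredients: for a standard momentum $k$ of type (a) the little group $\mathcal{W}$ is $SU(2)$, so its Lie algebra $w$ is the rotation subalgebra — a copy of $su(2)$ inside the Lie algebra $so(1,3)$ of $SL(2,\mathbb{C})$, spanned by $\mathbf{J}$ — and $dD^{(j_n)}$ is precisely the spin-$j$ irreducible representation $\pi_j$ written in the standard angular-momentum basis \eqref{angular momentum}. By Theorem~\ref{Main-Relation between two representations}, Axiom~3 holds iff the matrix $(u_l(0,\sigma,n))_{l;\sigma}$ intertwines $\pi_j$ with $dD|_w$, where $D$ is the field representation. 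So the entire content of the proposition is the identification $dD|_w \cong \bigoplus_{i=1}^n \pi_{A_i}\otimes\pi_{B_i}$.

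That identification is immediate from Section~1.3. By hypothesis $dD = \bigoplus_{i=1}^n \pi_{A_i}\boxtimes\pi_{B_i}$ as a representation of $sl(2,\mathbb{C})$, i.e. of $so(1,3)$ after the complexification $so(1,3)_{\mathbb{C}} = sl(2,\mathbb{C})\oplus sl(2,\mathbb{C})$. Since $\mathbf{J} = \mathbf{A}+\mathbf{B}$, the inclusion $w\hookrightarrow so(1,3)_{\mathbb{C}}$ is exactly the diagonal embedding $\iota:su(2)\hookrightarrow su(2)\oplus su(2)$, so the lemma of Section~1.3 stating $\pi_A\boxtimes\pi_B|_{\mathrm{span}\,\mathbf{J}}\cong\pi_A\otimes\pi_B$ (that is, $\iota^*(\pi_A\boxtimes\pi_B)=\pi_A\otimes\pi_B$) applies termwise, giving $dD|_w \cong \bigoplus_i\pi_{A_i}\otimes\pi_{B_i}$. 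Substituting this into Theorem~\ref{Main-Relation between two representations} yields the claim as an intertwiner $\pi_j \to \bigoplus_i \pi_{A_i}\otimes\pi_{B_i}$ of $su(2)$-representations.

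The only point needing care is the phrase \emph{in standard basis}: the intertwining property of $(u_l(0,\sigma))$ from Theorem~\ref{Main-Relation between two representations} is basis-free, but reading it off as a statement about the component matrix requires the two "standard bases" to line up. On the $su(2)$-side this is literal — in the standard angular-momentum basis the displayed matrices $\pi_{(A,B)}(J_i)$ restrict to $J_i^{(A)}\otimes 1_{(2B+1)} + 1_{(2A+1)}\otimes J_i^{(B)}$, which are exactly the standard tensor-product matrices of $\pi_A\otimes\pi_B$, so no hidden change of basis enters. On the field side, the representation $D$ may instead be presented in its own "standard" basis differing from the angular-momentum basis by a fixed similarity transformation (as for the vector representation, cf. the footnote there), which is simply absorbed into the definition of the $u_l$'s. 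Pinning down this basis bookkeeping is the only real obstacle; once it is settled the proposition is a direct corollary of Theorem~\ref{Main-Relation between two representations} and the restriction lemma of Section~1.3.
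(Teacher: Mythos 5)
Your proposal is correct and follows the same route as the paper: invoke Theorem \ref{Main-Relation between two representations} to reduce the claim to an intertwiner statement for the Lie algebra of the massive little group $SU(2)$, then identify $dD|_{\mathrm{span}\,\mathbf{J}}$ with $\bigoplus_i \pi_{A_i}\otimes\pi_{B_i}$ via the restriction lemma of Section~1.3. The paper's proof is exactly this (stated in one line, plus the remark that the standard bases correspond); your extra care about the basis bookkeeping only makes explicit what the paper asserts without elaboration.
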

\begin{proof}
    As is remarked in section 1.3, 
    $$
    Res^{so(1,3)_\mathbb{C}}_{span\mathbf{J}} (\oplus_{i=1}^n \pi_{A_i}\boxtimes \pi_{B_i}  )
    = \oplus_{i=1}^n \pi_{A_i} \otimes \pi_{B_i}
    $$
    as representations of $su(2)$. And the standard basis corresponds to each other. 
\end{proof}
For the solving of $v$'s, let's notice that
\begin{lemma}$\quad$

We have
    \begin{equation}
        - \mathbf{J}_{\bar{\sigma} \sigma}^{(j) *} = (-1)^{\sigma-\bar{\sigma}}\mathbf{J}_{-\bar{\sigma}, -\sigma}^{(j)}
    \end{equation}
\end{lemma}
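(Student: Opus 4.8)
The plan is to prove the identity
\begin{equation*}
-\,(\mathbf{J}^{(j)})_{\bar\sigma\sigma}^{*} = (-1)^{\sigma-\bar\sigma}\,(\mathbf{J}^{(j)})_{-\bar\sigma,-\sigma}
\end{equation*}
componentwise, treating $J_3$ on the one hand and the raising/lowering operators $J_1\pm iJ_2$ on the other, since the explicit matrix elements from \eqref{angular momentum} are available. First I would observe that all matrix elements of $\mathbf{J}^{(j)}$ in the standard basis are real: $(J_1^{(j)})_{a'a}$ and $(J_3^{(j)})_{a'a}$ are manifestly real, and $(J_2^{(j)})_{a'a}$ is purely imaginary, so complex conjugation acts as the identity on $J_1,J_3$ and as a sign flip on $J_2$. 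Hence the claim for $J_1$ and $J_3$ reads $-(J_i)_{\bar\sigma\sigma} = (-1)^{\sigma-\bar\sigma}(J_i)_{-\bar\sigma,-\sigma}$, and for $J_2$ it reads $+(J_2)_{\bar\sigma\sigma} = (-1)^{\sigma-\bar\sigma}(J_2)_{-\bar\sigma,-\sigma}$.

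For $J_3$: $(J_3)_{\bar\sigma\sigma} = \sigma\,\delta_{\bar\sigma\sigma}$, so $(J_3)_{-\bar\sigma,-\sigma} = -\sigma\,\delta_{\bar\sigma\sigma}$; the Kronecker delta forces $\bar\sigma=\sigma$, so $(-1)^{\sigma-\bar\sigma}=1$ and indeed $-(J_3)_{\bar\sigma\sigma} = -\sigma\delta_{\bar\sigma\sigma} = (-1)^{\sigma-\bar\sigma}(J_3)_{-\bar\sigma,-\sigma}$. For the off-diagonal pieces I would work with $(J_\pm)_{\bar\sigma\sigma} = \delta_{\bar\sigma,\sigma\pm1}\sqrt{(j\mp\sigma)(j\pm\sigma+1)}$. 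Replacing $(\sigma,\bar\sigma)$ by $(-\sigma,-\bar\sigma)$ turns $\delta_{\bar\sigma,\sigma\pm1}$ into $\delta_{-\bar\sigma,-\sigma\pm1}=\delta_{\bar\sigma,\sigma\mp1}$ and turns $\sqrt{(j\mp\sigma)(j\pm\sigma+1)}$ into $\sqrt{(j\pm\sigma)(j\mp\sigma+1)}$; one checks this is exactly $(J_\mp)_{-\bar\sigma,-\sigma}$, i.e. raising and lowering get swapped, which matches the swap $J_1\pm iJ_2 \leftrightarrow J_1\mp iJ_2$ induced by conjugating $J_2$. On the support of the delta, $\sigma-\bar\sigma = \mp1$, so $(-1)^{\sigma-\bar\sigma}=-1$, and one reads off the sign bookkeeping: the conjugation contributes $J_1-iJ_2$ from $J_1+iJ_2$, and the $(-1)^{\sigma-\bar\sigma}=-1$ together with the overall minus sign on the left-hand side make the two sides agree.

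The cleanest way to package all three cases is to combine them: write $-(\mathbf{J}^{(j)})^*$ and note that conjugation fixes $J_1,J_3$ and negates $J_2$, then observe that the linear map $v_a \mapsto (-1)^{?}v_{-a}$ (with an appropriate phase convention) intertwines $\pi_j$ with its conjugate-and-negate, which is a restatement of the standard fact that every $su(2)$ representation is self-dual; the matrix $(-1)^{\sigma-\bar\sigma}$ exhibiting this is essentially the $d^{(j)}_{\sigma\bar\sigma}(\pi)$ rotation-by-$\pi$ matrix. I would, however, present it as the direct component check above rather than invoking self-duality, since the explicit formulas make it a one-line verification in each of the three cases. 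The only real obstacle is sign/phase bookkeeping — getting the factor $(-1)^{\sigma-\bar\sigma}$ and the minus sign to land consistently across $J_1$ (where the left side has a bare minus) versus $J_2$ (where conjugation already supplies a minus) — and that is dispatched simply by noting that on the support of each Kronecker delta $\sigma-\bar\sigma$ takes a fixed value in $\{-1,0,+1\}$.
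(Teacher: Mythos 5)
Your direct componentwise check is correct; the paper itself states the lemma without proof, following it only with a remark recasting the identity as self-duality via $C_{\sigma\bar\sigma}=(-1)^{j+\sigma}\delta_{\sigma+\bar\sigma,0}$, so there is no in-paper argument to compare against, and your verification using the explicit matrix elements from \eqref{angular momentum} is exactly what the lemma tacitly asks the reader to do. Two small slips to fix: the opening sentence ``all matrix elements of $\mathbf{J}^{(j)}$ in the standard basis are real'' contradicts the rest of that sentence (you mean $J_1,J_3$ are real-valued and $J_2$ is purely imaginary, which is what your argument actually uses), and in the off-diagonal step ``one checks this is exactly $(J_\mp)_{-\bar{\sigma},-\sigma}$'' should read $(J_\mp)_{\bar{\sigma}\sigma}$, since the index flip you performed already accounts for the sign reversal, giving $(J_\pm)_{-\bar{\sigma},-\sigma}=(J_\mp)_{\bar{\sigma}\sigma}$.
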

\begin{rmk}
    This is to say that $\pi_J$ is self-dual: 
    \begin{equation}
        - \mathbf{J}^{(j) *} = C \mathbf{J}^{(j)} C^{-1}
    \end{equation}
    where $C_{\sigma,\bar{\sigma}}=(-1)^{j+\sigma} \delta_{\sigma+\bar{\sigma}}$ will be part of the C of CPT.
\end{rmk}

So we can choose (though it is not always the case)
\begin{equation}
    v_{l}(0,\sigma) = ( u C^{-1} )_{l,\sigma} = (-1)^{-j-\sigma} u_{l}(0,-\sigma) = (-1)^{j+\sigma} u_{l}(0,-\sigma)
\end{equation}

By \eqref{A} and \eqref{B}, we see that in order for the quantum field not equal to zero it is equivalent to $u_l(0,\sigma)$ not all zero. This is by the 'angular momentum coupling', which reads

\begin{thm}(Relation between spin of state and field)\label{Main-Spin of states and fields}

    In order for the the existence of a non-zero massive quantum field with spin 
    $$
    \bigoplus_{i=1}^n (A_i,B_i)
    $$ 
    describing particles with spin $j$ satisfying Axiom 3, it is equivalent to the condition
    \begin{equation}
        j \in \{ |A_i-B_i|, \cdots, A_i+B_i-1,  A_i+B_i \quad | \quad i=1,\cdots,n \}
    \end{equation}
\end{thm}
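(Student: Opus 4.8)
The plan is to reduce the existence question to a purely algebraic statement about homomorphisms of $su(2)$-representations, and then invoke the classical Clebsch--Gordan decomposition. By the previous theorems (in particular Theorem \ref{Main-Relation between two representations} and the proposition just preceding this statement), a non-zero massive field with field representation $\bigoplus_{i=1}^n \pi_{A_i}\boxtimes\pi_{B_i}$ describing particles of spin $j$ and satisfying Axiom 3 exists if and only if there is a non-zero matrix $(u_l(0,\sigma))_{l,\sigma}$ which is a homomorphism of $su(2)$-representations from $\pi_j$ to $\bigoplus_{i=1}^n \pi_{A_i}\otimes\pi_{B_i}$ (and then $v$ can be obtained via the self-duality matrix $C$, so the $v$-condition imposes nothing extra). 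So the whole theorem collapses to: $\mathrm{Hom}_{su(2)}\bigl(\pi_j,\bigoplus_i \pi_{A_i}\otimes\pi_{B_i}\bigr)\neq 0$.

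First I would record that for finite-dimensional representations of $su(2)$ (equivalently $sl_2(\mathbb C)$, which is reductive and has completely reducible finite-dimensional representations), $\mathrm{Hom}_{su(2)}(\pi_j, V)\neq 0$ if and only if $\pi_j$ appears as a subrepresentation of $V$, if and only if (by complete reducibility and Schur's lemma) the isotypic multiplicity of $\pi_j$ in $V$ is positive. Taking $V=\bigoplus_i \pi_{A_i}\otimes\pi_{B_i}$, this multiplicity is the sum over $i$ of the multiplicity of $\pi_j$ in $\pi_{A_i}\otimes\pi_{B_i}$. Hence the condition is exactly that $\pi_j$ occurs in $\pi_{A_i}\otimes\pi_{B_i}$ for at least one index $i$.

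Second I would apply the Clebsch--Gordan rule: for $su(2)$,
\begin{equation}
\pi_{A}\otimes\pi_{B}\;\cong\;\bigoplus_{j=|A-B|}^{A+B}\pi_j,
\end{equation}
each summand with multiplicity one. This is a standard fact which can be proved, e.g., by comparing characters (weights): the weights of $\pi_A\otimes\pi_B$ are $\{a+b : -A\le a\le A,\ -B\le b\le B\}$ with multiplicities, and peeling off highest weights $A+B, A+B-1,\dots$ one at a time recovers the displayed decomposition; alternatively one checks that the highest weight vectors $v_a\otimes v_{b}$ annihilated by $E$ occur precisely for total weight $|A-B|,\dots,A+B$. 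Therefore $\pi_j$ occurs in $\pi_{A_i}\otimes\pi_{B_i}$ if and only if $|A_i-B_i|\le j\le A_i+B_i$, i.e. $j\in\{|A_i-B_i|,\dots,A_i+B_i\}$. Combining with the previous paragraph gives exactly the claimed equivalence.

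The main obstacle is not any single deep step but making sure the reduction is airtight: I must check that the $v$-equations genuinely impose no further constraint (this uses the self-duality $-\mathbf J^{(j)*}=C\mathbf J^{(j)}C^{-1}$ from the lemma above, so that any $u$-solution yields a $v$-solution and the field is non-zero iff $u\neq 0$), and that passing from the $SL(2,\mathbb C)$-level homomorphism condition down to the $su(2)\subset so(1,3)_{\mathbb C}$-level is legitimate — this is precisely the content of Theorem \ref{Main-Relation between two representations} together with the restriction identity $\mathrm{Res}^{so(1,3)_{\mathbb C}}_{\mathrm{span}\,\mathbf J}(\pi_{A_i}\boxtimes\pi_{B_i})=\pi_{A_i}\otimes\pi_{B_i}$ from Section 1.3. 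Once those bookkeeping points are secured, the heart of the argument is just Clebsch--Gordan, which I would state and cite (or prove via weight-counting) rather than belabor.
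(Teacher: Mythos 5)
Your proposal is correct and follows essentially the same route as the paper: the paper reduces the existence of the field to the non-vanishing of the intertwiner $(u_l(0,\sigma))_{l,\sigma}\in\mathrm{Hom}_{su(2)}(\pi_j,\bigoplus_i\pi_{A_i}\otimes\pi_{B_i})$ and then simply invokes ``angular momentum coupling'' (Clebsch--Gordan), handling $v$ via the self-duality matrix $C$ exactly as you do. Your write-up just makes explicit the multiplicity/Schur's-lemma bookkeeping and the Clebsch--Gordan decomposition that the paper leaves implicit.
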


This is the main result of this subsection.

\begin{eg}$\quad$

    Scalar field is defined as the field where the field representation $D$ is the trivial rep of the Lorentz group. So massive scalar field describes particles that have spin 0. 

    Vector field has field rep $(\frac{1}{2}, \frac{1}{2})$, or $D(\Lambda)=\Lambda$. So massive vector field describes particles with spin 0 or 1. 

    Dirac field has field rep $(\frac{1}{2}, 0) \oplus (0, \frac{1}{2})$. So massive Dirac field describes particles with spin $\frac{1}{2}$. 
\end{eg}

Actually we can explicitly write down the solution when the field rep is irreducible, which is the case for scalar and vector field, but not for Dirac field, which solution relies on CPT. 

Now suppose the field representation is $\pi_{(A,B)}$. The corresponding subscripts are denoted 
$a=-A,\cdots, A, b = -B,\cdots, B$.

Obviously the zero-momentum coefficient functions $u_{a b}(0, \sigma)$ are just the CG-coefficients! So we can choose a normalization, and the solution to zero-momentum coefficients are:
\begin{prop}

    We have a solution to \eqref{C}:
    \begin{equation}
        \begin{aligned}
            u_{ab}(0,\sigma) &= (2m)^{-1/2} \langle AaBb| j\sigma \rangle
            = (2m)^{-1/2} C_{A B}(j \sigma ; ab)
            \\
            v_{ab}(0,\sigma) &= (-1)^{j+\sigma} u_{ab}(0,-\sigma)
        \end{aligned}
    \end{equation}
    And the two fields $\psi^{\pm}$ are unique up to two scalars. 
\end{prop}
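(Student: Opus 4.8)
The plan is to verify that the proposed zero-momentum coefficient functions satisfy the defining relation \eqref{C} for the little group $\mathcal{W} = SU(2)$, and then to invoke the uniqueness statement that follows from irreducibility. The key observation, recorded in the proposition statement, is that in the massive case $W \in \mathcal{W}$ lives in $SU(2)$ and the field representation $\pi_{(A,B)}$ restricted to $\mathrm{span}\,\mathbf{J}$ is $\pi_A \otimes \pi_B$ by the remark in Section 1.3, so relation \eqref{C} for $u$ says precisely that the rectangular matrix $\bigl(u_{ab}(0,\sigma)\bigr)_{ab;\sigma}$ is an $SU(2)$-intertwiner from $\pi_j$ into $\pi_A \otimes \pi_B$. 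By Theorem \ref{Main-Spin of states and fields} (the angular-momentum coupling) such an intertwiner exists exactly when $|A-B| \le j \le A+B$, and when it exists it is unique up to scalar because $\pi_j$ is irreducible and, by the Clebsch--Gordan decomposition $\pi_A \otimes \pi_B \cong \bigoplus_{j'=|A-B|}^{A+B} \pi_{j'}$, appears with multiplicity one in $\pi_A \otimes \pi_B$; Schur's lemma then gives a one-dimensional $\mathrm{Hom}$-space. The explicit generator of that one-dimensional space is, by definition of the Clebsch--Gordan coefficients, the map $v_\sigma^{(j)} \mapsto \sum_{ab} \langle A a\, B b \mid j \sigma\rangle\, e_a \otimes e_b$, which is exactly the claimed formula up to the normalization constant $(2m)^{-1/2}$.

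So the first step is to write out \eqref{C} with $\mathcal{W} = SU(2)$ and $D^{(j_n)} = \pi_j$, pass to the Lie algebra via Theorem \ref{Main-Relation between two representations}, and recognize the resulting equation $\sum_{\bar{a}\bar{b}} u_{\bar{a}\bar{b}}(0,\bar\sigma) (J_i^{(j)})_{\bar\sigma\sigma} = \sum_{ab} \bigl(J_i^{(A)} \otimes 1 + 1 \otimes J_i^{(B)}\bigr)_{\bar{a}\bar{b},ab}\, u_{ab}(0,\sigma)$ as the intertwining condition defining Clebsch--Gordan coefficients. The second step is to cite the standard fact (from any angular-momentum text, or as already used in Theorem \ref{Main-Spin of states and fields}) that the CG coefficients $\langle A a\, B b \mid j\sigma\rangle$ satisfy exactly this recursion, hence $u_{ab}(0,\sigma) = (2m)^{-1/2} C_{AB}(j\sigma;ab)$ solves \eqref{C}. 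The third step is the $v$-part: here I would simply invoke the lemma and remark immediately preceding, namely that $-\mathbf{J}^{(j)*} = C\mathbf{J}^{(j)}C^{-1}$ with $C_{\sigma\bar\sigma} = (-1)^{j+\sigma}\delta_{\sigma+\bar\sigma}$, so that $v_{ab}(0,\sigma) := (uC^{-1})_{ab,\sigma} = (-1)^{j+\sigma} u_{ab}(0,-\sigma)$ automatically intertwines the dual representation $-\mathbf{J}^{(j)*}$ with $\pi_A \otimes \pi_B$, which is the $v$-version of \eqref{C}.

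The fourth and last step is the uniqueness clause "the two fields $\psi^\pm$ are unique up to two scalars." By Theorem \ref{Main-Relation between two representations}, $u(0,\cdot)$ ranges over $\mathrm{Hom}_{su(2)}(\pi_j, \pi_A \otimes \pi_B)$ and $v(0,\cdot)$ over $\mathrm{Hom}_{su(2)}(\pi_j^\vee, \pi_A \otimes \pi_B)$; both are one-dimensional by Schur plus multiplicity-one in Clebsch--Gordan (using $\pi_j^\vee \cong \pi_j$ for the second). Then by \eqref{B} the full coefficient functions $u_l(\mathbf{q},\sigma,n)$ and $v_l(\mathbf{q},\sigma,n)$ on the whole mass shell are determined by their zero-momentum values, and by \eqref{A} the fields $\psi_l^\pm(x)$ are determined by these coefficient functions; hence each of $\psi^+$ and $\psi^-$ is pinned down up to one overall complex scalar, giving the "two scalars" in total.

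I expect the main obstacle to be purely expository rather than mathematical: one must be careful that the "standard basis" appearing in the proposition (the one in which $\mathbf{J}^{(j)}$ has the real tridiagonal form of \eqref{angular momentum} and in which the CG coefficients are the usual ones) is consistently the basis in which $\pi_{(A,B)}|_{\mathrm{span}\mathbf{J}}$ is literally $\pi_A \otimes \pi_B$ with generators $J_i^{(A)} \otimes 1 + 1 \otimes J_i^{(B)}$; the footnote in Section 1.4 about the vector representation warns that in other natural bases these differ by a similarity transformation. Granting that the bases are aligned (which the preceding proposition already asserts), the verification that CG coefficients solve \eqref{C} and the uniqueness count are both immediate from results already in the excerpt, so the proof is short.
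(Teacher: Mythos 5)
Your proposal is correct and follows essentially the same route as the paper: the preceding proposition reduces condition \eqref{C} to the statement that $u(0,\cdot)$ is an $su(2)$-intertwiner $\pi_j \to \pi_A \otimes \pi_B$, and the paper then asserts (essentially without further argument) that these intertwiners are exactly the Clebsch--Gordan coefficients, with $v$ obtained via the self-duality matrix $C$ from the preceding lemma. Your write-up fills in the details that the paper glosses over, most usefully the multiplicity-one observation $\pi_A \otimes \pi_B \cong \bigoplus_{j'=|A-B|}^{A+B}\pi_{j'}$ combined with Schur's lemma to justify the one-dimensionality of the Hom-space, which is exactly what underpins the final ``unique up to two scalars'' clause that the paper states but does not argue.
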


\clearpage


\subsection{Massless field--relation between helicity and spin of field}

Recall that the zero momentum coefficients gather as a homomorphism from state representation to field representation of the Lie algebra of the little group. We only consider irreducible field representation here. 

The Lie algebra of $\mathcal{W}$ is spanned by $J_2-K_1, -J_1-K_2, J_3$, which in state representation acts as
\begin{equation}
\begin{aligned}
\pi^\sigma(J_3) &= \sigma \\
\pi^\sigma(J_2-K_1) = \pi^\sigma(-J_1-K_2) &= 0
\end{aligned}
\end{equation}

The conditions for standard momentum coefficients are
\footnote{
\cite{weinberg2002quantum} stated that the equations for $v$ are just the complex conjugates of the equations for $u$, so we can choose
$v_{l}(\mathbf{p}, \sigma)=u_{l}(\mathbf{p}, \sigma)^*$. But it's not correct. 
}
\begin{equation}\label{C1}
\begin{aligned}
\sigma u_{a b}(\mathbf{k}, \sigma) & =(a+b) u_{a b}(\mathbf{k}, \sigma)\\
-\sigma v_{a b}(\mathbf{k}, \sigma) & =(a+b) v_{a b}(\mathbf{k}, \sigma)
\end{aligned}
\end{equation}
and
\begin{equation}\label{C2}
\begin{aligned}
0 & =(\pi_{AB}(J_2)-\pi_{AB}(K_1))_{a b, a' b'} u_{a' b'}(\mathbf{k}, \sigma) \\
& =(J_2^{(A)}+i J_1^{(A)})_{a a'} u_{a' b}(\mathbf{k}, \sigma)+(J_2^{(B)}-i J_1^{(B)})_{b b'} u_{a b'}(\mathbf{k}, \sigma)\\
& = (-i J_-^{(A)})_{a a'} u_{a' b}(\mathbf{k}, \sigma)+(-i J_+^{(B)})_{b b'} u_{a b'}(\mathbf{k}, \sigma)\\
0 & =(-\pi_{AB}(J_1)-\pi_{AB}(K_2))_{a b, a' b'} u_{a' b'}(\mathbf{k}, \sigma) \\
& =(-J_1^{(A)}+i J_2^{(A)})_{a a'} u_{a' b}(\mathbf{k}, \sigma)+(-J_1^{(B)}-i J_2^{(B)})_{b b'} u_{a b'}(\mathbf{k}, \sigma)\\
& =(-J_-^{(A)})_{a a'} u_{a' b}(\mathbf{k}, \sigma)+(-J_+^{(B)})_{b b'} u_{a b'}(\mathbf{k}, \sigma)
\end{aligned}
\end{equation}

By \eqref{C1}, $u_{a b}(\mathbf{k}, \sigma)$ and $v_{a b}(\mathbf{k}, \sigma)$ must vanish unless $\sigma=a+b$ and $\sigma=-a-b$, respectively. 
\eqref{C2} requires that $u_{a b}(\mathbf{k}, \sigma)$ vanishes unless
\begin{equation}
a=-A, b=+B
\end{equation}
and the same is obviously also true of $v_{a b}(\mathbf{k}, \sigma)$.

In summary, we get
\begin{thm}(Relation between helicity and field spin)

The existence for a non zero $(A,B)$ field corresponding to particle with helicity $\sigma$ satisfying Axiom 3 is equivalent to 
\begin{equation}
    \sigma = \pm(-A+B)
\end{equation}
where $+$ corresponds to non zero $\psi^+$ and zero $\psi^-$, and $-$ corresponds to zero $\psi^+$ and non zero $\psi^-$.
\end{thm}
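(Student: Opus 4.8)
The plan is to push everything back onto Theorem \ref{Main-Relation between two representations}: a non-zero annihilation field $\psi^{+}$ (resp.\ creation field $\psi^{-}$) satisfying Axiom 3 exists if and only if the matrix $(u_{ab}(\mathbf{k},\sigma))$ (resp.\ $(v_{ab}(\mathbf{k},\sigma))$) is a non-zero intertwiner of representations of the Lie algebra $w$ of the massless little group $\mathcal{W}$, from the one-dimensional helicity representation $dD^{(\sigma)}$ (resp.\ its dual) into the restriction $d\pi_{(A,B)}|_{w}$. So the theorem is really just a clean count of the solutions of the linear system \eqref{C1}--\eqref{C2}, and the proof amounts to performing that count carefully; all the representation-theoretic reductions have already been done in the preceding two subsections.

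First I would record the action of $w=\langle J_2-K_1,\,-J_1-K_2,\,J_3\rangle$ on both sides. On the state side this is \eqref{C1}: $J_3\mapsto\sigma$ and the other two generators act as $0$. On the field side, inserting $\pi_{(A,B)}(J_i)=J_i^{(A)}\otimes 1+1\otimes J_i^{(B)}$ and $\pi_{(A,B)}(K_i)=-i(J_i^{(A)}\otimes 1-1\otimes J_i^{(B)})$ and simplifying, exactly as in \eqref{C2}, one finds that $\pi_{(A,B)}(J_2-K_1)$ and $\pi_{(A,B)}(-J_1-K_2)$ are, up to the harmless scalars $-i$ and $-1$, the \emph{same} operator $J_-^{(A)}\otimes 1+1\otimes J_+^{(B)}$. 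Hence the intertwiner conditions for $u$ collapse to just two statements: $(i)$ $u$ is supported on the $J_3$-weight space $a+b=\sigma$, and $(ii)$ $\bigl(J_-^{(A)}\otimes 1+1\otimes J_+^{(B)}\bigr)u=0$.

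The one genuinely non-mechanical point — and the step I expect to be the crux — is the weight-space observation: $J_-^{(A)}\otimes 1$ lowers the total weight $a+b$ by one, while $1\otimes J_+^{(B)}$ raises it by one, so these two summands send the weight-$\sigma$ subspace into the mutually orthogonal weight-$(\sigma-1)$ and weight-$(\sigma+1)$ subspaces. Therefore, on the subspace singled out by $(i)$, condition $(ii)$ splits into $J_-^{(A)}\otimes 1\cdot u=0$ and $1\otimes J_+^{(B)}\cdot u=0$ \emph{separately}. Now $\ker J_-^{(A)}$ on $V_A$ is precisely the lowest-weight line $a=-A$ and $\ker J_+^{(B)}$ on $V_B$ is precisely the highest-weight line $b=B$, so $u_{ab}(\mathbf{k},\sigma)$ must vanish unless $(a,b)=(-A,B)$. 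Combining with $(i)$: a non-zero $u$ exists precisely when $\sigma=-A+B$, and then the solution space is one-dimensional, so $\psi^{+}$, when it exists, is unique up to a scalar (recovering the normalization freedom noted in the massive case).

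Finally, for $v$ the two conditions coming from $J_2-K_1$ and $-J_1-K_2$ are unchanged (the dual of the zero operator is still zero), but the $J_3$-condition now comes from the \emph{dual} of $dD^{(\sigma)}$, i.e.\ $J_3\mapsto-\sigma$, forcing $v$ to be supported on $a+b=-\sigma$; rerunning the same kernel argument again pins $(a,b)=(-A,B)$, so a non-zero $v$ exists precisely when $-A+B=-\sigma$, i.e.\ $\sigma=-(-A+B)$. Assembling the two cases gives $\sigma=\pm(-A+B)$, with the plus sign forcing $u_{-A,B}\neq0$ and $v\equiv0$, and the minus sign forcing the reverse (both can be non-zero only in the degenerate case $A=B$, $\sigma=0$, where the two conditions coincide), which is the asserted statement.
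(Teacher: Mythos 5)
Your proof is correct and follows essentially the same route as the paper: reduce to the intertwiner conditions \eqref{C1}--\eqref{C2} at the standard momentum via Theorem \ref{Main-Relation between two representations}, then solve that linear system. In fact you supply the one step the paper only asserts --- the weight-space observation that $J_-^{(A)}\otimes 1$ and $1\otimes J_+^{(B)}$ shift the total $J_3$-weight in opposite directions and hence must annihilate $u$ separately, pinning $(a,b)=(-A,B)$ --- and you also correctly note the degenerate case $A=B$, $\sigma=0$, which the theorem's ``$+$ versus $-$'' dichotomy silently ignores.
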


This result applies to old neutrino theory (\cite{weinberg2002quantum}), where neutrino was though of a massless particle with helicity $-\frac{1}{2}$. A $(\frac{1}{2},0)$ field annilates neutrinos and creates anti-neutrinos, and a $(0,\frac{1}{2})$ field annilates anti-neutrinos and creates neutrinos. At that point we combine two fields with same field spin but with different helicity, i.e. field rep. 

At this point it is clear that we cannot construct non-zero vector field satisfying Axiom 3 to describe massless particle with helicity $\pm 1$, which is the case of photon. This is a major setback we face in QED. We can construct quantum field with spin $(1,0)\otimes (0,1)$, i.e. anti-symmetric covariant tensor $f_{\mu\nu}$. But for some reason, we would rather loosen the condition of Axiom 3, add an extra term and lead to gauge transformation. 
\footnote{See \cite{weinberg2002quantum}.}

\clearpage


\subsection{Discussions}

1.

    What's the difference between the state representation in Section 2 and the field representation in Section 3?

    We just compare the irreducible case, and for the non-irreducible case we must pass to their universal covers. See the table below. 
    \\
    \newcommand{\tabincell}[2]{\begin{tabular}{@{}#1@{}}#2\end{tabular}}  

    \begin{table}[htbp]
    \caption{Comparison of state rep and field rep}
    \begin{tabular}{cccc}
    \toprule 
        & State rep  &  Field rep & Comparison \\ 
    \midrule 
    Group & Poincare & Lorentz & \tabincell{c}{whether contain\\ non-homogeneous part}\\
    Representation space & \tabincell{c}{Hilbert space of\\ quantum states} & \tabincell{c}{Target space\\ of field} & Fundamentally different \\
    Dimension & infinite & finite &  Fundamentally different \\
    whether unitary & Must be & Must not & Fundamentally different \\
    whether projective  & possibly & possibly & Same \\
    Parameter & \tabincell{c}{ mass $m>0$, spin $j$\\(massless)helicity $\sigma$} & spin $(A,B)$&  \\
    \bottomrule 
    \end{tabular}
    \end{table}

2.

    How to understand that state representation must be unitary, while field representation must not be unitary?

    The first is because the physical meaning of (the square of) inner product of quantum states, we must let it to be so.
    
    For the second, this is because we require the target space of field to be finite dimensional, and Lorentz group has no nontrivial finite dimensional projective representation. Heuristicly speaking this is because Lorentz group is not compact, but this is not a rigorous argument. We arrive at this argument after we derive the general representation of Lorentz algebra in Section 1.

3. 

    What's the physical meaning of the above argument?

    It's OK for field rep tp be non-unitary, because the target space of field rep is not the space of quantum states which means we do not need 'conservation of probability'. This is why Dirac equation is not understood as a wave equation as in the history, but should be understood as a field equation.

4.

    I'd like to talk about some problems in many physics literature. 

    First, there is no Poincare group's projective representation on Fock space. This is because projective representations cannot have a direct sum in general. 

    Second, the field rep is also of $SL(2,\mathbb{C})$, not $SO^+(1,3)$. For example, the Dirac rep is not a linear rep of the Lorentz group. 

    So for either convenience or correctness, all should pass to universal cover's linear representation. Many of the physics literature ignored these and make wrong assertions.

\clearpage

\section*{Acknowledgement}

The main part of this article is completed during the period from 2022.12.16 to 2023.1.2 as the final paper for a course. Some of the materials is discovered by myself, though it may not be academically innovative. 

I would like to express my gratitude to professor Boqiang Ma for his course <Astroparticle Physics> in Peking University, for leading me into the palace of Quantum Field Theory in Spring 2022. 

I would also like to thank Qianchu Yi, for her sharing with me the game <Dyson Sphere Program>, coincidentally beginning from 2022.12.17, several hours after the start of this article and several hours before I'm infected with COVID-19.

\clearpage

\nocite{*}
\bibliographystyle{alpha}
\bibliography{ref.bib}

\newcommand{\etalchar}[1]{$^{#1}$}
\begin{thebibliography}{{Wik}22b}

\bibitem[EGH{\etalchar{+}}11]{etingof2011introduction}
Pavel~I Etingof, Oleg Golberg, Sebastian Hensel, Tiankai Liu, Alex Schwendner,
  Dmitry Vaintrob, and Elena Yudovina.
\newblock {\em Introduction to representation theory}, volume~59.
\newblock American Mathematical Soc., 2011.

\bibitem[Fol16]{folland2016course}
Gerald~B Folland.
\newblock {\em A course in abstract harmonic analysis}, volume~29.
\newblock CRC press, 2016.

\bibitem[Hum12]{humphreys2012introduction}
James~E Humphreys.
\newblock {\em Introduction to Lie algebras and representation theory},
  volume~9.
\newblock Springer Science \& Business Media, 2012.

\bibitem[Poo22]{poon2022projective}
Levi Poon.
\newblock Projective representation theory.
\newblock 2022.

\bibitem[S{\etalchar{+}}77]{serre1977linear}
Jean-Pierre Serre et~al.
\newblock {\em Linear representations of finite groups}, volume~42.
\newblock Springer, 1977.

\bibitem[Sch14]{schwartz2014quantum}
Matthew~D Schwartz.
\newblock {\em Quantum field theory and the standard model}.
\newblock Cambridge University Press, 2014.

\bibitem[Str08]{straumann2008unitary}
Norbert Straumann.
\newblock Unitary representations of the inhomogeneous lorentz group and their
  significance in quantum physics.
\newblock {\em arXiv preprint arXiv:0809.4942}, 2008.

\bibitem[TAY]{taylorrepresentations}
JAY TAYLOR.
\newblock Representations of sl2 (c).

\bibitem[War83]{warner1983foundations}
Frank~W Warner.
\newblock {\em Foundations of differentiable manifolds and Lie groups},
  volume~94.
\newblock Springer Science \& Business Media, 1983.

\bibitem[Wei02]{weinberg2002quantum}
Steven Weinberg.
\newblock {\em The quantum theory of fields: Foundations}.
\newblock Cambridge University Press, 2002.

\bibitem[Wig39]{wigner1939unitary}
Eugene Wigner.
\newblock On unitary representations of the inhomogeneous lorentz group.
\newblock {\em Annals of mathematics}, pages 149--204, 1939.

\bibitem[{Wik}22a]{enwiki:1079823937}
{Wikipedia contributors}.
\newblock Projective representation --- {Wikipedia}{,} the free encyclopedia.
\newblock
  \url{https://en.wikipedia.org/w/index.php?title=Projective_representation&oldid=1079823937},
  2022.
\newblock [Online; accessed 31-December-2022].

\bibitem[{Wik}22b]{enwiki:1123050261}
{Wikipedia contributors}.
\newblock Representation theory of the lorentz group --- {Wikipedia}{,} the
  free encyclopedia.
\newblock
  \url{https://en.wikipedia.org/w/index.php?title=Representation_theory_of_the_Lorentz_group&oldid=1123050261},
  2022.
\newblock [Online; accessed 31-December-2022].

\end{thebibliography}

\end{document}